\newtheorem{theorem}{Theorem}[section]
\newtheorem{proposition}[theorem]{Proposition}
\newtheorem{lemma}[theorem]{Lemma}
\newtheorem{claim}[theorem]{Claim}
\newtheorem{corollary}[theorem]{Corollary}
\newtheorem{definition}[theorem]{Definition}
\newcommand{\R}{\ensuremath{\mathbb{R}}}
\newcommand{\Z}{\ensuremath{\mathbb{Z}}}
\newcommand{\Q}{\mathbb{Q}}
\newcommand{\lat}{\mathcal{L}}
\newcommand{\eps}{\varepsilon} 
\renewcommand{\epsilon}{\varepsilon}
\newcommand{\poly}{\mathrm{poly}}
\DeclareMathOperator{\dist}{dist}
\renewcommand{\vec}[1]{\ensuremath{\boldsymbol{#1}}}
\newcommand{\problem}[1]{\ensuremath{\mathrm{#1}} }
\DeclarePairedDelimiter\inner{\langle}{\rangle}
\newcommand{\CVP}{\problem{CVP}}
\newcommand{\SVP}{\problem{SVP}}
\newcommand{\CVPP}{\problem{CVPP}}
\newcommand{\norm}[1]{\|#1\|}
\newcommand{\set}[1]{\{#1\}}
\newcommand{\size}{\textrm{size}}
\newcommand{\ind}{\textrm{ind}}
\newcommand{\val}{\textrm{val}}
\newcommand{\half}{\frac{1}{2}}
\newcommand{\cc}[1]{\ensuremath{\mathsf{#1}}}
\newcommand{\SP}{\cc{\#P}}
\begin{document}

\title{On the Quantitative Hardness of CVP}
\author{
	Huck Bennett\thanks{Courant Institute of Mathematical Sciences, New York
 University.}\\ 
\texttt{huckbennett@gmail.com}
	\and
	Alexander Golovnev\thanks{Yahoo Research.}~\thanks{Part of this work was done while the author was at Courant Institute of Mathematical Sciences, and was supported by the National Science Foundation under Grant No. CCF-1320188. Any opinions, findings, and conclusions or recommendations expressed in this material are those of the
authors and do not necessarily reflect the views of the National Science Foundation.}\\
\texttt{alexgolovnev@gmail.com}
 \and
	Noah Stephens-Davidowitz\footnotemark[1]~\thanks{Supported by the National Science Foundation (NSF) under Grant No.~CCF-1320188, and the Defense Advanced Research Projects Agency (DARPA) and Army Research
		Office (ARO) under Contract No.~W911NF-15-C-0236.}\\
	\texttt{noahsd@gmail.com}
}
\date{}
\maketitle
	
\begin{abstract}
For odd integers $p \geq 1$ (and $p = \infty$), we show that the Closest Vector Problem in the $\ell_p$ norm ($\CVP_p$) over rank $n$ lattices cannot be solved in $2^{(1-\eps) n}$ time for any constant $\eps > 0$ unless the Strong Exponential Time Hypothesis (SETH) fails. We then extend this result to ``almost all'' values of $p \geq 1$, not including the even integers. 
This comes tantalizingly close to settling the quantitative time complexity of the important special case of $\CVP_2$ (i.e., $\CVP$ in the Euclidean norm), for which a $2^{n +o(n)}$-time algorithm is known. In particular, our result applies for any $p = p(n) \neq 2$ that approaches $2$ as $n \to \infty$.

We also show a similar SETH-hardness result for $\SVP_\infty$; hardness of approximating $\CVP_p$ to within some constant factor under the so-called Gap-ETH assumption; and other quantitative hardness results for $\CVP_p$ and $\CVPP_p$ for any $1 \leq p < \infty$ under different assumptions.
\end{abstract}

\setcounter{page}{0}
\thispagestyle{empty}
\newpage
\setcounter{page}{1}
	
\section{Introduction}
\label{sec:intro}	
A lattice $\lat$ is the set of all integer combinations of
linearly independent basis vectors $\vec{b}_1,\dots,\vec{b}_n \in \R^d$,
\[
\lat = \lat(\vec{b}_1,\ldots, \vec{b}_n) := \Big\{ \sum_{i=1}^n z_i \vec{b}_i \ : \ z_i \in \Z \Big\}
\; .
\] We call $n$ the \emph{rank} of the lattice $\lat$ and $d$ the \emph{dimension} or the \emph{ambient dimension}.

The two most important computational problems on lattices are the Shortest Vector Problem ($\SVP$) and the Closest Vector Problem ($\CVP$). Given a basis for a lattice $\lat \subset \R^d$,
$\SVP$ asks us to compute the minimal length of a non-zero vector in $\lat$, and $\CVP$ asks us to compute the distance from some target point $\vec{t} \in \R^d$ to the lattice. Typically, we define length and distance in terms of the $\ell_p$ norm for some $1 \leq p \leq \infty$, given by
\[
\|\vec{x}\|_p := (|x_1|^p + |x_2|^p + \cdots + |x_d|^p)^{1/p}
\]
for finite $p$ and
\[
\|\vec{x}\|_{\infty} := \max_{1 \leq i \leq d} |x_i|
\; .
\]
In particular, the $\ell_2$ norm is the familiar Euclidean norm, and it is by far the best studied in this context.
We write $\SVP_p$ and $\CVP_p$ for the respective problems in the $\ell_p$ norm. $\CVP$ is known to be at least as hard as $\SVP$ (in any norm, under an efficient reduction that preserves the rank and approximation factor)~\cite{GMSS99} and appears to be strictly harder.

Starting with the breakthrough work of Lenstra, Lenstra, and Lov{\'a}sz in 1982~\cite{LLL82}, algorithms for solving these problems in both their exact and approximate forms have found innumerable applications, including
factoring polynomials over the rationals~\cite{LLL82}, integer programming~\cite{Lenstra83,Kannan87,DPV11}, cryptanalysis~\cite{Shamir84,Odl90,JS98,NS01}, etc. More recently, many cryptographic primitives have been constructed whose security is based on the {\em worst-case} hardness of these or closely related lattice problems \cite{Ajtai96,Reg09,GPV08,Peikert08,chris_survey}. Given the obvious importance of these problems, their complexity is quite well studied. Below, we survey some of these results. We focus on algorithms for the exact and near-exact problems since these are most relevant to our work and because the best known algorithms for the approximate variants of these problems typically use algorithms for the exact problems as subroutines~\cite{Schnorr87,GN08,MW16}. (Many of the results described below are also summarized in Table \ref{tab:complexity_summary}.) 

\subsection{Algorithms for SVP and CVP}

\paragraph{The AKS algorithm and its descendants. } 
The current fastest known algorithms for solving $\SVP_p$ all use the celebrated randomized sieving technique due to Ajtai, Kumar, and Sivakumar~\cite{AKS01}. The original algorithm from~\cite{AKS01} was the first $2^{O(n)}$-time algorithm for $\SVP$, and it worked for both $p = 2$ and $p = \infty$.

In the $p = 2$ case, a sequence of works improved upon the constant in the exponent~\cite{NguyenVidick08,PS09,MV10,LWXZ11}, and the current fastest running time of an algorithm that provably solves $\SVP_2$ exactly is $2^{n + o(n)}$~\cite{ADRS15}.\footnote{The algorithm in~\cite{ADRS15} is quite a bit different than the other algorithms in this class, but it can still be thought of as a sieving algorithm.} While progress has slowed, this seems unlikely to be the end of the story. Indeed, there are heuristic sieving algorithms that run in time $(3/2)^{n/2 + o(n)}$~\cite{NguyenVidick08,WLTB11,Laarhoven2015,BDGL16}, and there is some reason to believe that the provably correct~\cite{ADRS15} algorithm can be improved. In particular, there is a provably correct $2^{n/2 + o(n)}$-time algorithm that approximates $\SVP_2$ up to a small constant approximation factor~\cite{ADRS15}.

A different line of work extended the randomized sieving approach of~\cite{AKS01} to obtain $2^{O(n)}$-time algorithms for $\SVP$ in additional norms. In particular, Bl{\"o}mer and Naewe extended it to all $\ell_p$ norms~\cite{BN09}. Subsequent work extended this further, first to arbitrary symmetric norms~\cite{AJ08} and then to the ``near-symmetric norms'' that arise in integer programming~\cite{Dadush2012}.

Finally, a third line of work extended the~\cite{AKS01} approach to approximate $\CVP$. Ajtai, Kumar, and Sivakumar themselves showed a $2^{O(n)}$-time algorithm for approximating $\CVP_2$ to within any constant approximation factor strictly greater than one~\cite{AKS02}. Bl{\"o}mer and Naewe obtained the same result for all $\ell_p$ norms~\cite{BN09}, and Dadush extended it further to arbitrary symmetric norms and again to ``near-symmetric norms''~\cite{Dadush2012}. We stress, however, that none of these results apply to exact $\CVP$, and indeed, there are some barriers to extending these algorithms to exact $\CVP$. (See, e.g.,~\cite{ADS15}.)

\paragraph{Exact algorithms for $\CVP$. } \emph{Exact} $\CVP$ appears to be a much more subtle problem than exact $\SVP$.%
\footnote{In particular, there can be arbitrarily many lattice points that are approximate closest vectors, which makes sieving techniques seemingly useless for solving exact $\CVP$. (See, e.g.,~\cite{ADS15} for a discussion of this issue.) We note, however, that hardness results (including ours) tend to produce $\CVP$ instances with a bounded number of approximate closest vectors (e.g., $2^{O(n)})$.}
Indeed, progress on exact $\CVP$ has been much slower than the progress on exact $\SVP$.
Over a decade after~\cite{AKS01}, Micciancio and Voulgaris presented the first $2^{O(n)}$-time algorithm for exact $\CVP_2$~\cite{MV13}, using elegant new techniques built upon the approach of Sommer, Feder, and
Shalvi~\cite{SFS09}. Specifically, they achieved a running time of $4^{n + o(n)}$, and subsequent work even showed a running time of $2^{n + o(n)}$ for $\CVP_2$ with Preprocessing (in which the algorithm is allowed access to arbitrary advice that depends on the lattice but not the target vector; see Section~\ref{sec:problem_defs})~\cite{BonifasD14}. Later,~\cite{ADS15} showed a $2^{n + o(n)}$-time algorithm for $\CVP_2$, so that the current best known asymptotic running time is actually the same for $\SVP_2$ and $\CVP_2$. 

However, for $p \neq 2$, progress for exact $\CVP_p$ has been minimal. Indeed, the fastest known algorithms for exact $\CVP_p$ with $p \neq 2$ are still the $n^{O(n)}$-time enumeration algorithms first developed by Kannan in 1987~\cite{Kannan87,DPV11,MicciancioWalter14}.  Both algorithms for exact  $\CVP_2$ mentioned in the previous paragraph use many special properties of the $\ell_2$ norm, and it seems that substantial new ideas would be required to extend them to arbitrary $\ell_p$ norms. 

\subsection{Hardness of SVP and CVP}

Van Emde Boas showed the NP-hardness of $\CVP_p$ for any $p$ and $\SVP_\infty$ in 1981~\cite{Boas81}. Extending this to $\SVP_p$ for finite $p$ was a major open problem until it was proven (via a randomized reduction) for all $1 \leq p \leq \infty$ by Ajtai in 1998~\cite{Ajtai-SVP-hard}. There has since been much follow-up work, showing the hardness of these problems for progressively larger approximation factors, culminating in NP-hardness of approximating $\CVP_p$ up to a factor of $n^{c/\log \log n}$ for some constant $c > 0$~\cite{ABSS93,DKRS03} and hardness of $\SVP_p$ with the same approximation factor under plausible complexity-theoretic assumptions~\cite{CN98,Mic01svp,Khot05svp,HRsvp}. These results are nearly the best possible under plausible assumptions, since approximating either problem up to a factor of $\sqrt{n}$ is known to be in $\mathrm{NP} \cap \mathrm{coNP}$~\cite{GG00,AharonovR05, Peikert08}.

However, such results only rule out the possibility of polynomial-time algorithms (under reasonable complexity-theoretic assumptions). They say very little about the \emph{quantitative} hardness of these problems for a fixed lattice rank $n$.%
\footnote{
	One can derive certain quantitative hardness results from known hardness proofs, but in most cases the resulting lower bounds are quite weak. 
	The only true quantitative hardness results known prior to this work were a folklore ETH-hardness result for $\CVP$ and an unpublished result due to Samuel Yeom, showing that $\CVP$ cannot be solved in time $2^{10^{-4} n}$ under plausible complexity-theoretic assumptions~\cite{priv:Vinod}. (In Section~\ref{sec:ETH-MAX-CUT}, we present a similar proof of a stronger statement.)
	}

This state of affairs is quite frustrating for two reasons. First, in the specific case of $\CVP_2$, algorithmic progress has reached an apparent barrier. In particular, both known techniques for solving exact $\CVP_2$ in singly exponential time are fundamentally unable to produce algorithms whose running time is asymptotically better than the current best of $2^{n + o(n)}$~\cite{MV13,ADS15}.%
\footnote{
	Both techniques require short vectors in each of the $2^n$ cosets of $\lat$ mod $2\lat$ (though for apparently different reasons).
	} 
Second, some lattice-based cryptographic constructions are close to deployment~\cite{new_hope,frodo,NIST_quantum}. In order to be practically secure, these constructions require the quantitative hardness of certain lattice problems, and so their designers rely on quantitative hardness assumptions~\cite{APS15}. If, for example, there existed a $2^{n/20}$-time algorithm for $\SVP_p$ or $\CVP_p$, then these cryptographic schemes would be insecure in practice.

We therefore move in a different direction. Rather than trying to extend non-quantitative hardness results to larger approximation factors, we show quantitative hardness results for exact (or nearly exact) problems. To do this, we use the tools of \emph{fine-grained complexity}.

\subsection{Fine-grained complexity}
\label{sec:fine-grained}
Impagliazzo and Paturi~\cite{IP1999} introduced the \emph{Exponential Time Hypothesis} (ETH) and the \emph{Strong Exponential Time Hypothesis} (SETH) to help understand the precise hardness of $k$-SAT. Informally, ETH asserts that $3$-SAT takes $2^{\Omega(n)}$-time to solve in the worst case, and SETH asserts that $k$-SAT takes essentially $2^n$-time to solve for unbounded $k$. I.e., SETH asserts that brute-force search is essentially optimal for solving $k$-SAT for large $k$.

Recently, the study of fine-grained complexity has leveraged ETH, SETH, and several other assumptions to prove quantitative hardness results about a wide range of problems. These include both problems in $\mathsf{P}$ (see, e.g.,~\cite{conf/soda/ChechikLRSTW14, conf/stoc/BackursI15, conf/focs/AbboudBW15} and the survey by Vassilevska Williams~\cite{conf/iwpec/Williams15}), and $\mathsf{NP}$-hard problems (see, e.g.,~\cite{conf/soda/PatrascuW10, conf/coco/CyganDLMNOPSW12, books/sp/CyganFKLMPPS15}). Although these results are all conditional, they help to explain \emph{why} making further algorithmic progress on these problems is difficult---and suggest that it might be impossible. Namely, any non-trivial algorithmic improvement would disprove a very well-studied hypothesis.

One proves quantitative hardness results using \emph{fine-grained} reductions (see~\cite{conf/iwpec/Williams15} for a formal definition). For example, there is an efficient mapping from $k$-SAT formulas on $n$ variables to Hitting Set instances with universes of $n$ elements~\cite{conf/coco/CyganDLMNOPSW12}. This reduction is fine-grained in the sense that for any constant $\eps > 0$, a $2^{(1 - \eps)n}$-time algorithm for Hitting Set implies a $2^{(1 - \eps)n}$-time algorithm for $k$-SAT, breaking SETH.

Despite extensive effort, no faster-than-$2^n$-time algorithm for $k$-SAT with unbounded $k$ has been found. Nevertheless, there is no consensus on whether SETH is true or not, and recently, Williams~\cite{conf/coco/Williams16} refuted a very strong variant of SETH. This makes it desirable to base quantitative hardness results on weaker assumptions when possible, and indeed our main result holds even assuming a weaker variant of SETH based on the hardness of Weighted Max-$k$-SAT (except for the case of $p = \infty$).

\subsection{Our contribution}

We now enumerate our results. See also Table~\ref{tab:complexity_summary}.

\begin{table}
	\begin{center}
		\begin{tabular}{| c | c | c |  c | c | c | c |}
			\hline
			Problem & Upper Bound & \multicolumn{4}{|c|}{Lower Bounds} & Notes \\
			\hline
			&  & SETH & Max-$2$-SAT & ETH & Gap-ETH &  \\
			\hline
			$\CVP_p$  & $n^{O(n)}$ ($2^{O(n)}$) & {\color{blue!70!black} $2^n$} & {\color{blue!70!black} $2^{\omega n/3}$} & $2^{\Omega(n)}$ & {\color{blue!70!black} $2^{\Omega(n)}$*} & ``almost all'' $p \notin 2\Z$  \\
			$\CVP_2$ & $2^{n}$ & --- & {\color{blue!70!black}$2^{\omega n/3}$} & $2^{\Omega(n)}$ & {\color{blue!70!black} $2^{\Omega(n)}$*} & \\
			$\CVP_\infty$/$\SVP_\infty$ & $2^{O(n)}$ & {\color{blue!70!black}$2^n$*} & --- & $2^{\Omega(n)}$ &{\color{blue!70!black} $2^{\Omega(n)}$*} &  \\
			$\CVPP_p$ & $n^{O(n)}$ ($2^{O(n)}$) & --- & {\color{blue!70!black}$2^{\Omega(\sqrt{n}) }$} & {\color{blue!70!black}$2^{\Omega(\sqrt{n})}$} & --- & \\
			\hline
		\end{tabular}
		\caption{\label{tab:complexity_summary} Summary of known quantitative upper and lower bounds, with new results in {\color{blue!70!black}blue}. Upper bounds in parentheses hold for any constant approximation factor strictly greater than one, and lower bounds with a $*$ apply for some constant approximation factor strictly greater than one. $\omega$ is the matrix multiplication exponent, satisfying $2 \leq \omega < 2.373$. We have suppressed smaller factors. }
	\end{center}
\end{table}

\paragraph{SETH-hardness of $\CVP_p$. } Our main result is the SETH-hardness of $\CVP_p$ for any odd integer $p \geq 1$ and $p = \infty$ (and $\SVP_\infty$). Formally, we prove the following. (See Sections~\ref{sec:geom-char} and~\ref{sec:SETH} for finite $p$ and Section~\ref{sec:infinity!} for $p = \infty$.)

\begin{theorem}
	\label{thm:main}
	For any constant integer $k \geq 2$ and any odd integer $p \geq 1$ or $p = \infty$, there is an efficient reduction from $k$-SAT with $n$ variables and $m$ clauses to $\CVP_p$ (or $\SVP_\infty$) on a lattice of rank $n$ (with ambient dimension $n + O(m)$).

	In particular, there is no $2^{(1-\eps) n}$-time algorithm for $\CVP_p$ for any odd integer $p \geq 1$ or $p =\infty$ (or $\SVP_\infty$) and any constant $\eps > 0$ unless SETH is false.
\end{theorem}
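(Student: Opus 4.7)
I plan a fine-grained reduction from $k$-SAT on $n$ variables and $m$ clauses to $\CVP_p$ on a lattice of rank exactly $n$, so that a $2^{(1-\eps)n}$-time $\CVP_p$ algorithm would refute SETH by solving $k$-SAT in the same time. I place the basis vectors $\vec b_1,\dots,\vec b_n$ in bijection with the Boolean variables, so that an integer combination $\vec v=\sum_i z_i\vec b_i$ with $(z_1,\dots,z_n)\in\{0,1\}^n$ encodes a truth assignment. The ambient space splits into a \emph{variable block} that forces $z_i\in\{0,1\}$ and a \emph{clause block} that penalizes unsatisfied clauses; I choose the target $\vec t$ and threshold $r$ so that $\dist_p(\vec t,\lat)\le r$ is equivalent to satisfiability of the input formula $\varphi$.

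In the variable block each $\vec b_i$ gets a single nonzero entry in its private coordinate with target $\tfrac12$, contributing $|z_i-\tfrac12|^p$ to $\|\vec v-\vec t\|_p^p$; this equals $(1/2)^p$ on $\{0,1\}$ and is at least $(3/2)^p$ outside, so choosing the overall threshold so that a single deviation already exceeds $r^p$ confines the search to $(z_i)\in\{0,1\}^n$. In the clause block I would invoke the geometric characterization for odd $p$ from Section~\ref{sec:geom-char} to build, for each clause $C_j$, a constant-sized gadget of coordinates (signs of literals encoded as $\pm1$ entries, with the target compensated by the count of negated literals, together with $O(1)$ auxiliary integer coordinates for balancing) whose contribution to $\|\vec v-\vec t\|_p^p$ depends only on the number $s_j$ of satisfied literals of $C_j$, equalling a common value $c_{\mathrm{sat}}$ for every $s_j\ge 1$ and a strictly larger value $c_{\mathrm{unsat}}$ for $s_j=0$. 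Folding the auxiliaries into the lattice without inflating the rank then yields the desired rank-$n$, ambient-dimension-$(n+O(m))$ construction, and setting $r^p:=n(1/2)^p+m\,c_{\mathrm{sat}}+\tfrac12(c_{\mathrm{unsat}}-c_{\mathrm{sat}})$ cleanly separates satisfiable instances (distance $<r$) from unsatisfiable ones (distance $>r$).

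The $p=\infty$ and $\SVP_\infty$ cases follow the same blueprint with a simpler gadget: in $\ell_\infty$ only the largest coordinate matters, so a single clause coordinate suffices---designed so that satisfied clauses leave it in a safe range and unsatisfied ones push it outside---and $\SVP_\infty$ is obtained by homogenizing $\vec t$ as an extra basis vector. The main obstacle, and the step requiring genuinely new ideas, is constructing the odd-$p$ clause gadget so that the satisfaction cost $c_{\mathrm{sat}}$ is \emph{exactly} independent of $s_j\ge 1$; without this uniformity, per-clause slack accumulates over the $m$ clauses and washes out the single-clause gap $c_{\mathrm{unsat}}-c_{\mathrm{sat}}$, destroying the reduction. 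Uniformity is also precisely what fails for even $p$, where the symmetry $|x|^p=|{-}x|^p$ prevents the required asymmetric ``all-or-nothing'' behavior, and I expect the bulk of the technical work to lie in extracting the correct integer-coefficient gadget from the odd-$p$ geometric characterization.
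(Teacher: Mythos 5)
Your high-level architecture is the same as the paper's (a rank-$n$ lattice whose coefficients encode an assignment, an identity-type gadget forcing $\{0,1\}$ coefficients, and a per-clause block whose cost is one common value for every satisfied clause and strictly larger for an unsatisfied one; the $\ell_\infty$ and $\SVP_\infty$ cases done directly and homogenized, exactly as in Theorem~\ref{thm:k-sat-to-cvp-infinity} and Lemma~\ref{lem:k-sat-to-svp-infinity}). But there is a genuine gap: the clause gadget you describe is precisely a $(p,k)$-isolating parallelepiped (Definition~\ref{def:piped}), and its \emph{existence} for every odd $p$ and every constant $k$ is the core technical content of the theorem for finite $p$ -- you explicitly defer it (``I would invoke the geometric characterization \dots I expect the bulk of the technical work to lie \dots''), and Section~\ref{sec:geom-char} that you cite only contains the reduction \emph{given} such a gadget (Theorem~\ref{thm:isolating_implies_reduction}); the construction itself is Section~\ref{sec:SETH}: put weights $\alpha_0,\dots,\alpha_k$ on all $\{\pm1\}^k$ rows, reduce the problem to invertibility of the matrix $M_k(p,t^*)$ (Lemma~\ref{lem:alpha_norm_thing}, Lemma~\ref{lem:stochastic}, Proposition~\ref{prop:invertible_suffices}), and for odd $p$ show $t^*\mapsto\det M_k(p,t^*)$ is a nonzero polynomial on $[-k,-k+2]$ with leading coefficient $2^k(2-2^k)$ (Lemma~\ref{lem:top_coeff}, Corollary~\ref{cor:non-zero_t}). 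Relatedly, your ``$O(1)$ auxiliary integer coordinates for balancing'' folded in ``without inflating the rank'' is unsubstantiated: if those auxiliaries are lattice generators, the rank becomes $n+O(m)$, which is exactly the weaker reduction of Theorem~\ref{thm:max-k-sat-to-high-rank} and yields only ETH/Max-$2$-SAT hardness, not SETH-hardness. The paper needs no auxiliary generators at all -- the gadget consists solely of the columns $\vec{v}_s$ placed in $O(1)$ extra ambient coordinates per clause.

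There is also a quantitative bug in your variable block. With unit entries and target $\tfrac12$, a non-Boolean coefficient costs only an extra $(3/2)^p-(1/2)^p$ there, but the clause-block guarantee ($c_{\mathrm{sat}}$ versus $c_{\mathrm{unsat}}$) holds only for $\{0,1\}^k$ vertices; nothing bounds $\|V\vec{y}-\vec{t}^*\|_p$ from below for other integer $\vec{y}$, so for $\vec{z}\notin\{0,1\}^n$ the clause block can contribute far less than $m\,c_{\mathrm{sat}}$ and your threshold $r$ does not confine the closest vector to Boolean coefficients. The paper avoids this by scaling the identity gadget by $\alpha^{1/p}$ with $\alpha$ equal to the maximum possible clause-block contribution, so that a single deviation alone already forces $\|B\vec{y}-\vec{t}\|_p^p\geq\alpha(n+2)>r^p$. (Your closing remark that even $p$ fails because $|x|^p=|-x|^p$ is also not the actual obstruction -- that symmetry holds for all $p$; the paper's impossibility proof for even $p$ is the inclusion--exclusion identity of Lemma~\ref{lem:incl-excl} -- but this is an aside and does not affect the claimed reduction.)
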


Unfortunately, we are unable to extend this result to even integers $p$, and in particular, to the important special case of $p = 2$. In fact, this is inherent, as we show that our approach necessarily fails for even integers $p \leq k-1$. In spite of this, we actually prove the following result that generalizes Theorem~\ref{thm:main} to ``almost all'' $p \geq 1$ (including non-integer $p$).

\begin{theorem}
	\label{thm:all_p_intro}
	For any constant integer $k \geq 2$, there is an efficient reduction from $k$-SAT with $n$ variables and $m$ clauses to $\CVP_p$ on a lattice of rank $n$ (with ambient dimension $n + O(m)$) for any $p \geq 1$ such that
	\begin{enumerate}
		\item \label{item:odd} $p$ is an odd integer or $p = \infty$;
		\item \label{item:all_but_finite} $p \notin S_k$, where $S_k$ is some finite set (containing all even integers $p \leq k-1$); or
		\item \label{item:p_plus_eps} $p = p_0 + \delta(n)$ for any $p_0 \geq 1$ and any $\delta(n) \neq 0$ that converges to zero as $n \to \infty$.
	\end{enumerate}
	
	In particular, if SETH holds then for any constant $\eps > 0$, there is no $2^{(1-\eps)n}$-time algorithm for $\CVP_p$ for any $p \geq 1$ such that
	\begin{enumerate}
		\item $p$ is an odd integer or $p = \infty$;
		\item $p \notin S_k$ for some sufficiently large $k$ (depending on $\eps$); or
		\item $p = p_0 + \delta(n)$.
	\end{enumerate}
\end{theorem}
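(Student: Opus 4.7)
The plan is to treat the proof of Theorem~\ref{thm:main} as yielding a gadget construction whose correctness is governed by a single analytic condition in the exponent $p$, and then to study that condition as $p$ varies.

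First, I would open up the reduction behind Theorem~\ref{thm:main} and isolate the algebraic quantity that certifies its correctness. Since each clause involves only $k$ literals and the clause gadget has size depending only on $k$, this separation quantity should take the form of an exponential polynomial $F_k(p) = \sum_j c_j a_j^p$, with rational coefficients $c_j$ and positive rational bases $a_j$ determined by the clause pattern; the gadget produces a valid reduction precisely when $F_k(p) \neq 0$ (or has the correct sign). Theorem~\ref{thm:main} asserts $F_k(p)\neq 0$ at every odd integer $p$ and at $p = \infty$ in a limiting sense, which is item~\ref{item:odd}. I would also expect a direct binomial-type identity to force $F_k(p)=0$ at every even integer $p\le k-1$; this is the structural obstruction that confines $p=2$ to the bad set for $k > 2$.

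Second, to prove item~\ref{item:all_but_finite} I would use the classical fact that a nonzero real exponential polynomial has only finitely many real zeros in any bounded interval. The reduction for fixed $k$ uses only finitely many gadget types (one per clause pattern on $k$ variables), so the combined zero set on $[1,\infty)$ is contained in a finite set $S_k$, provided one can also control the tail $p \to \infty$, which should follow from the fact that for large $p$ the term with the largest $a_j$ dominates. Thus $S_k$ is finite and contains the even integers $\le k-1$ by the first step. The SETH consequence then follows in the standard way: given $\eps > 0$, choose $k$ large enough that SETH implies $k$-SAT requires $2^{(1-\eps)n}$ time, and also large enough that the target $p$ avoids $S_k$.

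Third, for item~\ref{item:p_plus_eps} I would fix any $p_0 \ge 1$ and any $\delta(n)\to 0$ with $\delta(n)\neq 0$, and take $k$ arbitrary. Because $S_k$ is finite, $S_k \setminus \{p_0\}$ is bounded away from $p_0$, so $p(n) := p_0 + \delta(n) \notin S_k \setminus \{p_0\}$ for all sufficiently large $n$; since $p(n) \neq p_0$ by hypothesis, we conclude $p(n) \notin S_k$ eventually, and the reduction of item~\ref{item:all_but_finite} applies with exponent $p(n)$ for each such $n$. The reduction is allowed to depend on $n$, so letting $p$ drift with $n$ is harmless. The main obstacle I foresee is step two: establishing that the exceptional set is genuinely \emph{finite} rather than merely closed and discrete. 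Finiteness on bounded intervals is the easy half for exponential polynomials, but one must also rule out zeros of $F_k$ at arbitrarily large $p$ (presumably via a quantitative dominant-term estimate) and explicitly verify that all even integers $\le k-1$ lie in $S_k$, which requires unpacking the combinatorics of the gadget rather than just the analysis of $F_k$.
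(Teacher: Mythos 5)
Your proposal follows essentially the same route as the paper: there the ``separation quantity'' is $\det(M_k(p,t^*))$ for a fixed shift $t^*$ chosen (via the odd-$p$ analysis) so that it is nonzero at $p=1$; this determinant is a Dirichlet polynomial $\sum_i b_i e^{c_i p}$ and hence has finitely many real zeros, which yields the finite set $S_k$, while the even integers $p\le k-1$ are shown to lie in $S_k$ by exactly the inclusion--exclusion/binomial identity you anticipate, and item~\ref{item:p_plus_eps} is handled, as you suggest, by noting that the finite zero set minus $\{p_0\}$ is bounded away from $p_0$ so that $p_0+\delta(n)$ eventually avoids it. The only cosmetic difference is that the paper cites a known theorem that a nonzero Dirichlet polynomial has finitely many real roots, whereas you re-derive global finiteness from finiteness on bounded intervals plus a dominant-term estimate as $p\to\infty$, which is an equally valid (slightly more self-contained) way to get the same fact.
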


Notice that this lower bound (Theorem~\ref{thm:all_p_intro}) comes tantalizingly close to resolving the quantitative complexity of $\CVP_2$. In particular, we obtain a $2^{n}$-time lower bound on $\CVP_{2 + \delta}$ for any $0 \neq \delta(n) = o(1)$, and the fastest algorithm for $\CVP_2$ run in time $2^{n + o(n)}$. But, formally, Theorems~\ref{thm:main} and~\ref{thm:all_p_intro} say nothing about $\CVP_2$. (Indeed, there is at least some reason to believe that $\CVP_2$ is easier than $\CVP_p$ for $p \neq 2$~\cite{RR06}.) 

We note that our reductions actually work for Weighted Max-$k$-SAT for all finite $p \neq \infty$, so that our hardness result holds under a weaker assumption than SETH, namely, the corresponding hypothesis for Weighted Max-$k$-SAT. 

Finally, we note that in the special case of $p = \infty$, our reduction works even for approximate $\CVP_\infty$, or even approximate $\SVP_\infty$, with an approximation factor of $\gamma := 1+2/(k-1)$. In particular, $\gamma$ is constant for fixed $k$. This implies that for every constant $\eps > 0$, there is a constant $\gamma_\eps > 1$ such that no $2^{(1-\eps)n}$-time algorithm approximates $\SVP_\infty$ or $\CVP_\infty$ to within a factor of $\gamma_\eps$ unless SETH fails.

\paragraph{Quantitative hardness of approximate $\CVP$.  }
As we discussed above, many $2^{O(n)}$-time algorithms for $\CVP_p$ only work for $\gamma$-approximate $\CVP_p$ for constant approximation factors $\gamma > 1$. However, the reduction described above only works for \emph{exact} $\CVP_p$ (except when $p = \infty$).\footnote{One can likely show that our ``exact'' reductions actually work for $\gamma$-approximate $\CVP_p$ with some approximation factor $\gamma = 1 + o(1)$. But, this is not very interesting because standard techniques for ``boosting'' the approximation factor are useless for us. (They increase the rank far too much.)}

So, it would be preferable to show hardness for some constant approximation factor $\gamma > 1$. One way to show such a hardness result is via a fine-grained reduction from the problem of approximating Max-$k$-SAT to within a constant factor. Indeed, in the $k=2$ case, we show that such a reduction exists, so that there is no $2^{o(n)}$-time algorithm for approximating $\CVP_p$ to within some constant factor unless a $2^{o(n)}$-time algorithm exists for approximating Max-$2$-SAT. We also note that a $2^{o(n)}$-time algorithm for approximating Max-$2$-SAT to within a constant factor would imply one for Max-$3$-SAT as well. (See Proposition~\ref{prop:gap3sat-to-gap2sat}.)

We present this result informally here (without worrying about specific parameters and the exact definition of approximate Max-$2$-SAT). See Section~\ref{sec:approx_hard} for the formal statement.

\begin{theorem}
	There is an efficient reduction from approximating Max-$2$-SAT with $n$ variables and $m$ clauses to within a constant factor to approximating $\CVP_p$ to within a constant factor on a lattice of rank $n$ (with ambient dimension $n + O(m)$) for any finite $p \geq 1$.
\end{theorem}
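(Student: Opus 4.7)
The plan is to adapt the reduction underlying Theorem~\ref{thm:main} (from Weighted Max-$k$-SAT to exact $\CVP_p$) so that it becomes gap-preserving when started from gap-Max-$2$-SAT. The key observation is that in the natural $\CVP_p$ reductions, the $p$-th power of $\|v - \vec{t}\|_p$ decomposes as a sum of per-clause contributions, so it is \emph{linear} in the number of unsatisfied clauses under the truth assignment encoded by $v$. A constant-factor gap on the Max-$2$-SAT side therefore translates directly into a constant-factor gap on the $\CVP_p$ side. As the informal statement points out, hardness of gap-Max-$2$-SAT itself follows from hardness of gap-Max-$3$-SAT via Proposition~\ref{prop:gap3sat-to-gap2sat}, so the only thing I need to build is the gap-preserving lattice reduction.

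First, I would construct a lattice $\lat$ of rank $n$, one basis vector per Max-$2$-SAT variable, together with a target $\vec{t} \in \R^{n + O(m)}$. The first $n$ coordinates are a scaled identity block forcing the optimal integer coefficients $z_i$ to lie in $\{0,1\}$ (encoding a truth assignment $\vec{x}$), with a large $\ell_p^p$ penalty for any other choice. For each $2$-clause $C_j$ I append $O(1)$ ``clause'' coordinates, designed so that the contribution of these coordinates to $\|v - \vec{t}\|_p^p$ is some fixed $c_S \geq 0$ if $\vec{x}$ satisfies $C_j$ and some strictly larger fixed $c_U > c_S$ if $\vec{x}$ falsifies $C_j$. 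Because $p$ is finite and we only need a constant-factor separation, simple $2$-SAT gadgets (of the sort already used to prove NP-hardness of $\CVP_p$) suffice; we do not need the delicate ``tight'' geometric configurations that made Theorem~\ref{thm:main} fail for even $p$.

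Next, I would analyze the gap. Writing $U(\vec{x})$ for the number of clauses falsified by $\vec{x}$, the construction gives
\[
\dist(\vec{t}, \lat)_p^p \;=\; A \;+\; c_S \cdot (m - U^*) \;+\; c_U \cdot U^*,
\]
where $U^* = m - \mathrm{OPT}$ is the minimum number of unsatisfied clauses and $A$ accounts for the (fixed) contribution from the variable coordinates. A promise of the form ``$\mathrm{OPT} \geq (1-\alpha)m$ vs.\ $\mathrm{OPT} \leq (1-\beta)m$'' for constants $\alpha < \beta$ in $(0,1)$ becomes a promise ``$\dist(\vec{t},\lat)_p^p \leq A + B_\alpha \cdot m$ vs.\ $\dist(\vec{t},\lat)_p^p \geq A + B_\beta \cdot m$'' with $B_\alpha < B_\beta$. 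Since we may assume $m \geq n$ (trivially, or by padding), taking the $p$-th root yields a constant-factor gap in $\dist(\vec{t},\lat)_p$. The rank is exactly $n$ and the ambient dimension is $n + O(m)$, as advertised.

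The main obstacle is calibrating the penalty constants $A$, $c_S$, and $c_U$ so that the gap survives across \emph{all} finite $p \geq 1$ simultaneously: for small $p$ the $p$-th root compresses multiplicative gaps, and we must keep $B_\alpha m$ and $B_\beta m$ of the same order as $A$ so that no term swamps the others and destroys the ratio. This is handled by scaling the variable-block penalty so that $A = \Theta(m)$, which costs only a polynomial blowup in the entries of the basis. A secondary subtlety is the ``perfectly satisfiable'' endpoint $\alpha = 0$: there $c_S$ must be taken to be $0$ (or the YES instance must be guaranteed to still leave some slack), which can be arranged by starting from the variant of gap-Max-$2$-SAT supplied by Proposition~\ref{prop:gap3sat-to-gap2sat} rather than from exact satisfiability.
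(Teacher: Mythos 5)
There is a genuine gap, and it sits exactly at the point your proposal tries to wave through by ``calibrating the penalty constants.'' You keep the scaled identity block and claim both (i) that it forces the closest lattice vector to use coefficients in $\{0,1\}^n$, and (ii) that its total contribution can be scaled down to $A = \Theta(m)$ so that the constant-factor gap survives the $p$-th root. These two requirements are incompatible as stated. For the identity block to actually rule out cheating, the per-coordinate penalty $\alpha$ must exceed the amount a non-$\{0,1\}$ coefficient vector could conceivably save across the clause coordinates, which a priori is as large as $(c_U - c_S)\,m = \Theta(m)$; this forces $\alpha = \Omega(m)$ and hence $A = \alpha n = \Omega(mn)$, at which point the YES/NO separation of size $\Theta(m)$ is only a $1 + O(1/n)$ multiplicative gap --- precisely the obstruction the paper points out when it says the identity-matrix gadget ``cannot tolerate an approximation factor larger than $1+O(1/n)$.'' Conversely, if you scale so that $A = \Theta(m)$, the per-coordinate penalty is only $\Theta(m/n)$ and the gadget no longer forces $\{0,1\}$ coefficients, so your central identity $\dist_p(\vec{t},\lat)^p = A + c_S(m - U^*) + c_U U^*$ is unjustified: you have said nothing about what a generic clause gadget does on integer coefficients outside $\{0,1\}$, and nothing prevents such a vector from beating every $\{0,1\}$ combination.

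The paper's proof (Theorem~\ref{thm:gap2sat-to-gamma-cvpp}) resolves this by \emph{deleting} the identity gadget entirely and supplying the missing argument your proposal needs: for the specific $k=2$ construction (entries $\pm 2$, target coordinate $3 - 2|N_i|$), rounding any $\vec{y} \in \Z^n$ coordinatewise to $\chi(\vec{y}) \in \{0,1\}^n$ (set $\chi(\vec{y})_i = 1$ iff $y_i \geq 1$) never increases any coordinate of $|B\vec{y} - \vec{t}|$, so a $\{0,1\}$ combination is always among the closest vectors. With the gadget gone, $\|B\vec{y}-\vec{t}\|_p^p = m^+(\vec{y}) + (m - m^+(\vec{y}))3^p$ exactly, and the constant gap $\gamma = \bigl((\delta + (1-\delta)3^p)/(\eps + (1-\eps)3^p)\bigr)^{1/p} > 1$ follows with no balancing act (and, incidentally, with ambient dimension $m$ rather than $n + O(m)$). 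Your per-clause decomposition and the reduction from Gap-$3$-SAT via Proposition~\ref{prop:gap3sat-to-gap2sat} match the paper, but without a rounding/monotonicity argument of this kind (or some substitute for it), the proposal as written does not go through.
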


\paragraph{Quantitative hardness of $\CVP$ with Preprocessing. }

$\CVP$ with Preprocessing ($\CVPP$) is the variant of $\CVP$ in which we are allowed arbitrary advice that depends on the lattice, but not the target vector. $\CVPP$ and its variants have potential applications in both cryptography (e.g.,~\cite{GPV08}) and cryptanalysis. And, an algorithm for $\CVPP_2$ is used as a subroutine in the celebrated Micciancio-Voulgaris algorithm for $\CVP_2$~\cite{MV13,BonifasD14}. The complexity of $\CVPP_p$ is well studied, with both hardness of approximation results~\cite{MicciancioCVPP,FeigeMicciancio,Regev03B,AlekhnovichKKV11,KhotPV12}, and efficient approximation algorithms~\cite{AharonovR05, cvpp}.

We prove the following quantitative hardness result for $\CVPP_p$. (See Section~\ref{sec:cvpp}.)

\begin{theorem}
	\label{thm:cvpp_intro}
	For any $1 \leq p < \infty$, there is no $2^{o(\sqrt{n})}$-time algorithm for $\CVPP$ unless there is a (non-uniform) $2^{o(n)}$-time algorithm for Max-$2$-SAT. In particular, no such algorithm exists unless (non-uniform) ETH fails.
\end{theorem}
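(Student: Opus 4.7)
The plan is to give a fine-grained reduction from Max-$2$-SAT on $N$ variables to $\CVPP_p$ on a lattice of rank $n = \Theta(N^2)$, so that any $2^{o(\sqrt{n})} = 2^{o(N)}$-time CVPP algorithm would yield a non-uniform $2^{o(N)}$-time algorithm for Max-$2$-SAT, which is ruled out by non-uniform ETH. The essential new ingredient relative to the Max-$2$-SAT$\to\CVP_p$ reductions elsewhere in the paper is that the lattice (the ``preprocessed'' part of the CVPP input) must depend only on the number of variables $N$; so it must contain a gadget for every one of the $\binom{2N}{2} = O(N^2)$ possible $2$-clauses over $N$ literals, and the actual clause set of a given instance $\phi$ must be signaled purely through the target vector $\vec{t}_\phi$.

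Concretely, I construct $\lat_N$ with three kinds of basis vectors, for a total rank of $\Theta(N^2)$. First, $N$ variable vectors $\vec{b}_1,\ldots,\vec{b}_N$, where $\vec{b}_i$ has a $1$ in its own variable coordinate (whose target is $1/2$, enforcing $z_i \in \{0,1\}$ by the usual ``$1/2$-trick'') and a $\pm 1$ in the clause coordinate of every possible clause containing $x_i$. Second, $\binom{2N}{2}$ auxiliary vectors $\vec{y}_c$, one per possible clause $c$, each placing a $1$ in clause coordinate $c$ and a $1$ in a dedicated $y_c$-coordinate (whose target is also $1/2$), so that together with the $\vec{b}_i$'s they implement the standard satisfiability gadget $z_{i_1}+z_{i_2}-y_c = 1$. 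Third, $\binom{2N}{2}$ slack vectors $\vec{s}_c$, each contributing $1$ to clause coordinate $c$ and $1$ to a single shared ``slack counter'' coordinate $c^*$. The target $\vec{t}_\phi$ sets all variable and $y_c$ coordinates to $1/2$, each active clause coordinate to $1$, each inactive clause coordinate to $1/2$, and $c^*$ to $0$. A short case analysis then shows that every inactive clause coordinate contributes exactly $(1/2)^p$ to $\dist_p(\vec{t}_\phi,\lat_N)^p$ regardless of the assignment (its target is half-integral while the lattice values in that coordinate are integer), every active clause coordinate contributes $0$ or $1$ depending on whether the clause is satisfied, and using $k$ slack vectors incurs a cost of $k^p$ in $c^*$ that is never less than the $k$-unit gain one could extract in the active clause coordinates. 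Hence $\dist_p(\vec{t}_\phi,\lat_N)^p$ equals a known constant plus the minimum number of clauses of $\phi$ left unsatisfied, determining the Max-$2$-SAT optimum.

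The main obstacle is exactly the tension between universality and soundness: the lattice cannot know which clauses are active, yet we must prevent the CVPP algorithm from cheating by using slack vectors to zero out unsatisfied active clauses. The slack counter $c^*$ is the key device resolving this, since it ensures that slack abuse offers no net gain (a tie for $p=1$ and a strict loss for $p>1$). Once the construction is verified, the rank is $n = N + 2\binom{2N}{2} + 1 = \Theta(N^2)$, giving $\sqrt{n} = \Theta(N)$ as required, and the reduction is polynomial-time computable from $\phi$; because $\lat_N$ depends only on $N$, any $2^{o(\sqrt{n})}$-time CVPP algorithm with $N$-indexed preprocessing immediately yields a non-uniform $2^{o(N)}$-time algorithm for Max-$2$-SAT. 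The construction is coordinate-wise, so it works uniformly for every $1 \leq p < \infty$.
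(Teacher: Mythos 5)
Your high-level plan is the same as the paper's: build, for each number of variables $N$, a single preprocessed lattice of rank $\Theta(N^2)$ containing a gadget for every one of the $O(N^2)$ possible $2$-clauses, and use the target vector alone to switch clauses on or off, so that a $2^{o(\sqrt{n})}$-time $\CVPP_p$ algorithm gives a non-uniform $2^{o(N)}$-time Max-$2$-SAT algorithm. However, the specific mechanism you use to prevent cheating is unsound. Your slack vectors $\vec{s}_c$ all feed into a \emph{single shared} counter coordinate $c^*$, so the cost charged at $c^*$ is $\bigl|\sum_c s_c\bigr|^p$, not $\sum_c |s_c|^p$: slack coefficients of opposite signs cancel. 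Concretely, for an unsatisfied \emph{active} clause $c_1$ take $s_{c_1} = +1$, which moves that clause coordinate from $0$ to its target $1$ and saves a full unit; pair it with $s_{c_2} = -1$ on one of the $\Theta(N^2)$ \emph{inactive} clauses $c_2$, where the $-1$ is absorbed at zero extra cost because that coordinate has half-integral target $1/2$ and its free auxiliary variable $y_{c_2}$ (or the assignment-dependent value already sitting at $1$ or $2$) can be shifted to keep the lattice value in $\{0,1\}$. The two slack coefficients cancel at $c^*$, so the net cost of repairing the unsatisfied clause is zero, and the distance no longer determines the Max-$2$-SAT optimum. Your claim that ``using $k$ slack vectors incurs a cost of $k^p$'' implicitly assumes all slack coefficients have the same sign, which nothing in the construction enforces. (Your sketch also glosses over the target shifts needed for negated literals and over why $2^{o(N)}$-time Max-$2$-SAT contradicts non-uniform ETH, which requires sparsification plus the linear-size $3$-SAT-to-Max-$2$-SAT reduction, but these are repairable details.)

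For comparison, the paper avoids this by giving each possible clause its own auxiliary column tied to its \emph{own} heavily weighted identity coordinate (entry $2\alpha^{1/p}$ for a large $\alpha$): the target entry in that identity row is $0$ when the clause is active, forcing the auxiliary coefficient to $0$, and $\alpha^{1/p}$ when it is inactive, leaving the auxiliary free to equalize the contribution to $(1/2)^p$ regardless of the assignment. In other words, each clause's ``switch'' is protected by its own expensive coordinate rather than by a shared counter, so there is no cross-clause cancellation to exploit. To fix your construction you would need an analogous per-clause control (e.g.\ a dedicated, suitably weighted coordinate per slack/auxiliary vector whose target encodes active versus inactive), at which point you essentially recover the paper's Lemma~\ref{lem:cvpp}.
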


\paragraph{Additional quantitative hardness results for $\CVP_p$. }

We also observe the following weaker hardness result for $\CVP_p$ for any $1\leq p < \infty$ based on different assumptions. The ETH-hardness of $\CVP_p$ was already known in folklore, and even written down by Samuel Yeom in unpublished work~\cite{priv:Vinod}. We present a slightly stronger theorem than what was previously known, showing a reduction from Max-$2$-SAT on $n$ variables to $\CVP_p$ on a lattice of rank $n$. (Prior to this work, we were only aware of reductions from $3$-SAT on $n$ variables to $\CVP_p$ on a lattice of rank $Cn$ for some very large constant $C > 1000$.)

\begin{theorem}
	\label{thm:maxcut}
	For any $1 \leq p < \infty$, there is an efficient reduction from Max-$2$-SAT with $n$ variables to $\CVP_p$ on a lattice of rank $n$ (and dimension $n +m$, where $m$ is the number of clauses). 
	
	In particular, for any constant $c > 0$, there is no $(\poly(n) \cdot 2^{c n})$-time algorithm for $\CVP_p$ unless there is a $2^{cn}$-time algorithm for Max-$2$-SAT, and there is no $2^{o(n)}$-time algorithm for $\CVP_p$ unless ETH fails.
\end{theorem}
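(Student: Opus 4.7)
The plan is to give a direct encoding of a Max-$2$-SAT instance $\phi$ with variables $x_1,\ldots,x_n$ and clauses $C_1,\ldots,C_m$ into a $\CVP_p$ instance of rank $n$. I would take the basis $B \in \R^{(n+m)\times n}$ to have two blocks: the top block is $M \cdot I_n$ for a large integer $M = \poly(n,m)$, and the bottom block has one row per clause $C_j$ with $+1$ in column $i$ if $x_i$ appears positively in $C_j$, $-1$ if $x_i$ appears negated, and $0$ otherwise. The target $\vec{t}\in\R^{n+m}$ has first $n$ coordinates equal to $M/2$, which (for $M$ large) forces the coefficient vector $\vec z$ of any closest lattice vector to lie in $\set{0,1}^n$, establishing a bijection between candidate lattice vectors and Boolean assignments. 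The remaining target coordinates are $t_{n+j} = 3/2 - d_j$, where $d_j\in\set{0,1,2}$ is the number of negated literals in $C_j$.

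Next, I would verify the encoding. For any $\vec z \in \set{0,1}^n$, writing $v(\ell)$ for the $\{0,1\}$-truth-value of literal $\ell$ under $\vec z$, a one-line computation gives
\[
(B\vec z - \vec t)_{n+j} \;=\; v(\ell_{j,1}) + v(\ell_{j,2}) - 3/2 \;\in\; \set{-3/2,\ -1/2,\ +1/2},
\]
corresponding to $C_j$ being unsatisfied, singly satisfied, or doubly satisfied. Raising to the $p$-th power and summing, one obtains
\[
\norm{B\vec z - \vec t}_p^p \;=\; n(M/2)^p + m(1/2)^p + u\cdot\bigl((3/2)^p - (1/2)^p\bigr),
\]
where $u$ is the number of unsatisfied clauses. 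Since $(3/2)^p > (1/2)^p$, minimizing the $\ell_p$ distance over $\vec z \in \set{0,1}^n$ exactly maximizes the number of satisfied clauses of $\phi$.

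The remaining step is to rule out $\vec z \notin \set{0,1}^n$ as a candidate closest point. Any such $\vec z$ has some $|z_i|\geq 2$, contributing at least $(3M/2)^p$ to the top-block cost alone, while any $\vec z \in \set{0,1}^n$ has total cost at most $n(M/2)^p + m(3/2)^p$. A direct comparison shows that taking $M = \Theta(m^{1/p}) + O(1)$ suffices to force the minimum to lie in $\set{0,1}^n$. The reduction then runs in $\poly(n,m)$ time and preserves the rank exactly, so any $(\poly(n)\cdot 2^{cn})$-time algorithm for $\CVP_p$ yields a $2^{cn}$-time algorithm for Max-$2$-SAT; the ETH consequence follows from the standard fact that Max-$2$-SAT on $n$ variables requires $2^{\Omega(n)}$ time under ETH (via the reduction from $3$-SAT together with the sparsification lemma).

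The only mildly delicate point is the $p = 1$ case: because the top block must dominate the bottom block, $M$ must be at least linear in $m$ rather than a small constant, but this remains polynomial and harmless. The key conceptual idea making this work for Max-$2$-SAT, as opposed to the textbook Max-Cut / XOR-$2$-SAT encoding, is placing the clause target at $3/2$ so that the per-clause contribution is strictly monotone in the number of satisfied literals---otherwise the $v=0$ and $v=2$ cases would collide and one would only recover a reduction from ``exactly-one-true-literal'' Max-$2$-SAT.
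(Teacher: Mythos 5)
Your proposal is correct and is essentially the paper's own reduction (the sketch in Section~\ref{subsec:techniques}, made formal as the $k=2$ case of Theorem~\ref{thm:max-k-sat-to-high-rank}): a $\pm 1$ clause matrix with per-clause target $3/2-|N_i|$ so satisfied clauses contribute $(1/2)^p$ and unsatisfied ones $(3/2)^p$, plus a scaled identity gadget forcing coefficients into $\set{0,1}^n$, with only cosmetic differences in how the scaling parameter is chosen (your $M=\Theta(m^{1/p})$ versus the paper's $\alpha$ tied to the threshold $W$). The ETH consequence is handled the same way in both, via sparsification and the standard $3$-SAT-to-Max-$2$-SAT reduction.
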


The fastest known algorithm for the Max-$2$-SAT problem is the $\poly(n) \cdot 2^{\omega n/3}$-time algorithm due to Williams \cite{Williams05}, where $2 \leq \omega < 2.373$ is the matrix multiplication exponent~\cite{Williams12-fast_matrix,LeGall14}. This implies that a faster than $2^{\omega n/3}$-time algorithm for $\CVP_p$ (and $\CVP_2$ in particular) would yield a faster algorithm for Max-$2$-SAT.
(See, e.g.,~\cite{Woeginger08} Open Problem 4.7 and the preceding discussion.)

\subsection{Techniques}
\label{subsec:techniques}

\paragraph{Max-$2$-SAT. } We first show a straightforward reduction from Max-$2$-SAT to $\CVP_p$ for any $1 \leq p < \infty$. I.e., we prove Theorem~\ref{thm:maxcut}. This simple reduction will introduce some of the high-level ideas needed for our more difficult reductions.

Given a Max-$2$-SAT instance $\Phi$ with $n$ variables and $m$ clauses, we construct the lattice basis
\begin{equation}
	\label{eq:gadget_matrix}
B := 
\begin{pmatrix}
\bar{\Phi}\\
2\alpha I_n
\end{pmatrix}
\; ,
\end{equation}
where $\alpha > 0$ is some very large number and $\bar{\Phi} \in \R^{m \times n}$ is given by
\begin{equation}
	\label{eq:Phi_def}
\bar{\Phi}_{i,j} :=
\left\{
\begin{array}{rl}
2 & \text{ if the $i$th clause contains $x_j$,}\\
-2 & \text{ if the $i$th clause contains $\lnot x_j$,}\\
0 & \text{ otherwise}
\; .
\end{array}\right.
\end{equation}
I.e., the rows of $\bar{\Phi}$ correspond to clauses and the columns correspond to variables. Each entry encodes whether the relevant variable is included in the relevant clause unnegated, negated, or not at all, using $2$, $-2$, and $0$ respectively.
 (We assume without loss of generality that no clause contains repeated literals or a literal and its negation simultaneously.)
 The target $\vec{t} \in \R^{m + n}$ is given by
  \begin{equation}
  	\label{eq:t_def}
  \vec{t} :=  (
  t_1,
  t_2,
  \ldots,
  t_m,
  \alpha,
  \alpha, \ldots,
  \alpha)^T
  \; ,
  \end{equation}
 where 
 \begin{equation}
 	\label{eq:t_i_def}
 t_i := 3 - 2\eta_i
 \; ,
 \end{equation}
  where $\eta_i$ is the number of negated variables in the $i$th clause.
 
 Notice that the copy of $2\alpha I_n$  at the bottom of $B$ together with the sequence of $\alpha$'s in the last coordinates of $\vec{t}$ guarantee that any lattice vector $B\vec{z} $ with $\vec{z} \in \Z^n$ is at distance at least $\alpha n^{1/p}$ away from $\vec{t}$. Furthermore, if $\vec{z} \notin \{0,1\}^n$, then this distance increases to at least $\alpha (n-1+3^p)^{1/p}$. This is a standard gadget, which will allow us to ignore the case $\vec{z} \notin \{0,1\}^n$ (as long as $\alpha$ is large enough). I.e., we can view $\vec{z}$ as an assignment to the $n$ variables of $\Phi$.
 
 Now, suppose $\vec{z}$ does not satisfy the $i$th clause. Then, notice that the $i$th coordinate of $B \vec{z}$ will be exactly $-2\eta_i$, so that $(B\vec{z} - \vec{t})_i = 0-3 = -3$. If, on the other hand, exactly one literal in the $i$th clause is satisfied, then the $i$th coordinate of $B\vec{z}$ will be $2-2\eta_i$, so that $(B\vec{z} - \vec{t})_i = 2-3 = -1$.  Finally, if both literals are satisfied, then the $i$th coordinate will be $4-2\eta_i$, so that $(B\vec{z} - \vec{t})_i = 4-3 = 1$. In particular, if the $i$th clause is not satisfied, then $|(B\vec{z} - \vec{t})_i| = 3$. Otherwise, $|(B\vec{z} - \vec{t})_i| = 1$.
 
 It follows that the distance to the target is exactly $\dist_p(\vec{t},\lat)^p = \alpha^p n + S + 3^p(m-S) = \alpha^p n -(3^p - 1)S + 3^pm$, where $S$ is the maximal number of satisfied clauses of $\Phi$.  So, the distance $\dist_p(\vec{t}, \lat)$ tells us exactly the maximum number of satisfiable clauses, which is what we needed.

\paragraph{Difficulties extending this to $k$-SAT. } The above reduction relied on one very important fact: that $|4-3| = |2-3| < |0-3|$. In particular, a $2$-SAT clause can be satisfied in two different ways; either one variable is satisfied or two variables are satisfied. We designed our $\CVP$ instance above so that the $i$th coordinate of $B\vec{z} - \vec{t}$ is $4-3$ if two literals in the $i$th clause are satisfied by $\vec{z} \in \{0,1\}^n$, $2-3$ if one literal is satisfied, and $0-3$ if the clause is unsatisfied. Since $|4-3| = |2-3|$, the ``contribution'' of this $i$th coordinate to the distance $\|B\vec{z} - \vec{t}\|_p^p$ is the same for any satisfied clause. Since $|0-3| > |4-3|$, the contribution to the $i$th coordinate is larger for unsatisfied clauses than satisfied clauses.

Suppose we tried the same construction for a $k$-SAT instance. I.e., suppose we take $\bar{\Phi} \in \R^{m \times n}$ to encode the literals in each clause as in Eq.~\eqref{eq:Phi_def} and construct our lattice basis $B$ as in Eq.~\eqref{eq:gadget_matrix} and target $\vec{t}$ as in Eq.~\eqref{eq:t_def}, perhaps with the number $3$ in the definition of $\vec{t}$ replaced by an arbitrary $t^* \in \R$. Then, the $i$th coordinate of $B\vec{z} - \vec{t}$ would be $2S_i - t^*$, where $S_i$ is the number of literals satisfied in the $i$th clause. 

No matter how cleverly we choose $t^* \in \R$, some satisfied clauses will contribute more to the distance than others as long as $k \geq 3$. I.e., there will always be some ``imbalance'' in this contribution. 
As a result, we will not be able to distinguish between, e.g., an assignment that satisfies all clauses but has $S_i $ far from $t^*/2$ for all $i$ and an assignment that satisfies fewer clauses but has $S_i \approx t^*/2$ whenever $i$ corresponds to a satisfying clause.

In short, for $k \geq 3$, we run into trouble because satisfying assignments to a clause may satisfy anywhere between $1$ and $k$ literals, but $k$ distinct numbers obviously cannot all be equidistant from some number $t^*$. (See Section~\ref{sec:ETH-MAX-CUT} for a simple way to get around this issue by adding to the rank of the lattice. Below, we show a more technical way to do this without adding to the rank of the lattice, which allows us to prove SETH-hardness.)

\paragraph{A solution via isolating parallelepipeds. } To get around the issue described above for $k \geq 3$, we first observe that, while many distinct \emph{numbers} cannot all be equidistant from some \emph{number} $t^*$, it is trivial to find many distinct \emph{vectors} in $\R^{d^*}$ that are equidistant from some \emph{vector} $\vec{t}^* \in \R^{d^*}$.

We therefore consider modifying the reduction from above by replacing the scalar $\pm 2$ values in our matrix $\bar{\Phi}$ with vectors in $\R^{d^*}$ for some $d^*$. In particular, for some vectors $V = (\vec{v}_1,\ldots, \vec{v}_k) \in \R^{d^* \times k}$, we define $\bar{\Phi} \in \R^{d^* m \times n}$ as 
\begin{equation}
	\label{eq:Phi_k_def}
	\bar{\Phi}_{i,j} :=
		\left\{
		\begin{array}{rl}
		\vec{v}_s & \text{ if $x_j$ is the $s$th literal in the $i$th clause},\\
		-\vec{v}_s & \text{ if $\lnot x_j$ is the $s$th literal in the $i$th clause},\\
		\vec{0}_d & \text{ otherwise}
		\; ,
	\end{array}\right.
\end{equation}
where we have abused notation and taken $\bar{\Phi}_{i,j}$ to be a column vector in $d^*$ dimensions.
By defining $\vec{t} \in \R^{d^* m + n}$ appropriately,%
\footnote{In particular, we replace the scalars $t_i$ in Eq.~\eqref{eq:t_i_def} with vectors 
	\[\vec{t}_i := \vec{t}^*- \sum \vec{v}_s \in \R^{d^*}
	\; ,
	\] 
where the sum is over $s$ such that the $s$th literal in the $i$th clause is negated.
} 
we will get that the ``contribution of the $i$th clause to the distance'' $\|B\vec{z} - \vec{t}\|_p^p$ is exactly $\|V \vec{y} - \vec{t}^*\|_p^p$ for some $\vec{t}^* \in \R^{d^*}$, where $\vec{y} \in \{0,1\}^k$ such that $y_s = 1$ if and only if $\vec{z}$ satisfies the $s$th literal of the relevant clause. (See Table~\ref{tbl:piped_reduction} for a diagram showing the output of the reduction and Theorem~\ref{thm:isolating_implies_reduction} for the formal statement.) We stress that, while we have increased the \emph{ambient dimension} by nearly a factor of $d^*$, the \emph{rank} of the lattice is still $n$.

This motivates the introduction of our primary technical tool, which we call \emph{isolating parallelepipeds}. For $1\leq p \leq \infty$, a $(p,k)$-isolating parallelepiped is represented by a matrix $V \in \R^{d^* \times k}$ and a shift vector $\vec{t}^* \in \R^{d^*}$ with the special property that one vertex of the parallelepiped $V \{ 0,1\}^k - \vec{t}^*$ is ``isolated.'' (Here, $V \{ 0,1\}^k - \vec{t}^*$ is an affine transformation of the hypercube, i.e., a parallelepiped.) In particular, every vertex of the parallelepiped, $V \vec{y} - \vec{t}^*$ for $\vec{y} \in \{0,1\}^k$ has unit length $\|V \vec{y} - \vec{t}^*\|_p = 1$ \emph{except for the vertex $-\vec{t}^*$}, which is longer, i.e., $\|\vec{t}^*\|_p > 1$. (See Figure~\ref{fig:l2-parallelpiped}.)

In terms of the reduction above, an isolating parallelepiped is exactly what we need. In particular, if we plug $V$ and $\vec{t}^*$ into the above reduction, then all satisfied clauses (which correspond to non-zero $\vec{y}$ in the above description) will ``contribute'' $1$ to the distance $\|B\vec{z} - \vec{t}\|_p^p$, while unsatisfied clauses (which correspond to $\vec{y} = \vec{0}$) will contribute $1+\delta$ for some $\delta > 0$. Therefore, the total distance will be exactly $\|B\vec{z} - \vec{t}\|_p^p = \alpha^p n + m^+(\vec{z}) + (m-m^+(\vec{z}))(1+\delta) = \alpha^p n - \delta m^+(\vec{z}) + (1+\delta)m$, where $m^+(\vec{z})$ is the number of clauses satisfied by $\vec{z}$. So, the distance $\dist_p(\vec{t}, \lat)$ exactly corresponds to the maximal number of satisfied clauses, as needed.

\begin{figure}
	\begin{center}
		\begin{tabular}{m{4cm} m{0cm} m{4cm}}
			\hspace{-20ex} \includegraphics{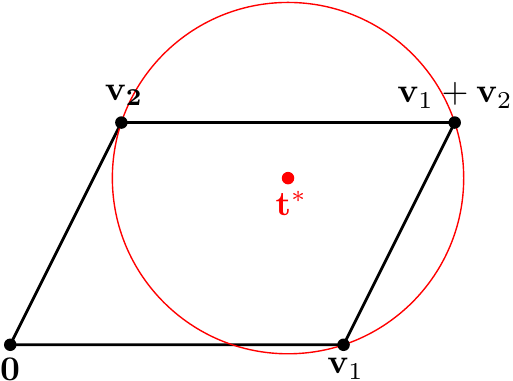} &  &  \includegraphics[scale=0.8]{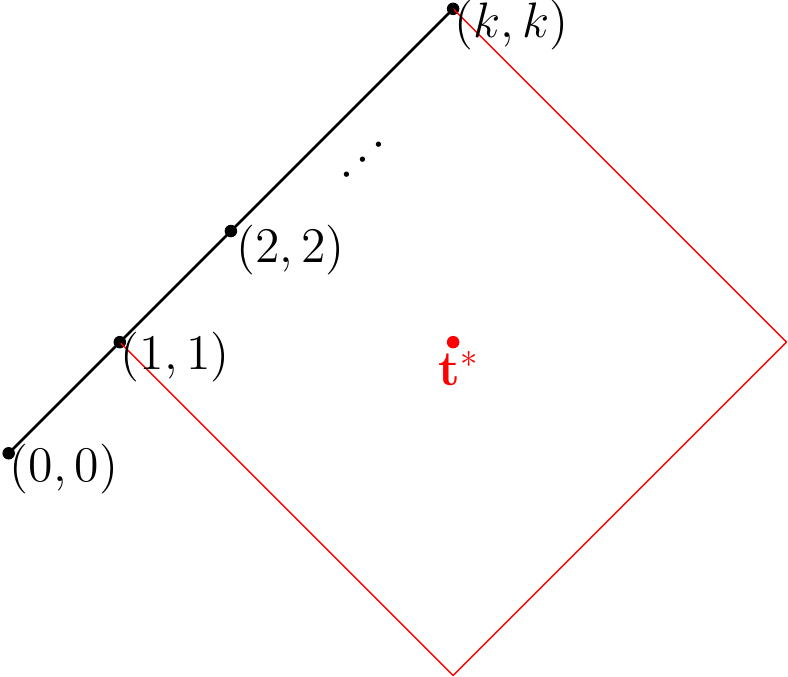}
		\end{tabular}
	\end{center}
	\caption{$(p, k)$-isolating parallelepipeds for $p = 2, k = 2$ (left) and $p = 1, k \geq 1$ (right). On the left, the vectors $\vec{v}_1$, $\vec{v}_2$, and $\vec{v}_1 + \vec{v}_2$ are all at the same distance from $\vec{t}^*$, while $\vec{0}$ is strictly farther away. 
		On the right is the degenerate parallelepiped generated by $k$ copies of the vector $(1,1)$. The vectors $(i, i)$ are all at the same $\ell_1$ distance from $\vec{t}^*$ for $1 \leq i \leq m$, while $(0, 0)$ is strictly farther away. 
		The (scaled) unit balls centered at $\vec{t}^*$ are shown in red, while the parallelepipeds are shown in black.}
	\label{fig:l2-parallelpiped}
\end{figure}

\paragraph{Constructing isolating parallelepipeds. } Of course, in order for the above to be useful, we must show how to construct these $(p,k)$-isolating parallelepipeds. Indeed, it is not hard to find constructions for all $p \geq 1$ when $k = 2$, and even for all $k$ in the special case when $p = 1$ (see Figure~\ref{fig:l2-parallelpiped}). Some other fairly nice examples can also be found for small $k$, as shown in Figure~\ref{fig:ell_3_k=3}. For $p > 1$ and large $k$, these objects seem to be much harder to find. (In fact, in Section~\ref{sec:limitations}, we show that there is no $(p,k)$-isolating parallelepiped for any even integer $p \leq k-1$.) Our solution is therefore a bit technical.

\begin{figure}
	\[
	V := \frac{1}{2 \cdot 12^{1/3}} \cdot 
	\left(
	\begin{array}{rrr}
	12^{1/3} & 12^{1/3} & 12^{1/3} \\
	1 & 1 & -1 \\
	1 & -1 & 1\\
	1 & -1 & -1\\
	-1 & 1 & 1\\
	-1 & 1 & -1\\
	-1 & -1 & 1
	\end{array}
	\right),
	\qquad
	\vec{t}^* := \frac{1}{2 \cdot 12^{1/3}} \cdot
	\begin{pmatrix}
	2 \cdot 12^{1/3}\\
	2 \\
	2\\
	2\\
	2\\
	2\\
	2
	\end{pmatrix}
	\; .
	\]
	\caption{\label{fig:ell_3_k=3} A $(3,3)$-isolating parallelepiped in seven dimensions. One can verify that $\|V \vec{y} - \vec{t}^*\|_3^3 = 1$ for all non-zero $\vec{y} \in \{0,1\}^3$, and $\|\vec{t}^*\|_3^3 = 3/2$.}
\end{figure}

At a high level, in Section~\ref{sec:SETH}, we consider a natural class of parallelepipeds $V \in \R^{2^k \times k}, \vec{t}^* \in \R^{2^k}$ parametrized by some weights $\alpha_0,\alpha_1,\ldots, \alpha_k \geq 0$ and a scalar shift $t^* \in \R$. These parallelepipeds are constructed so that the length of the vertex $\|V \vec{y} - \vec{t}^*\|_p^p$ for $\vec{y} \in \{0,1\}^k$ depends only on the Hamming weight of $\vec{y}$ and is linear in the $\alpha_i$ for fixed $t^*$. In other words, there is a matrix $M_k(p,t^*) \in \R^{(k+1)\times (k+1)}$ such that $M_k(p,t^*) (\alpha_0,\ldots, \alpha_k)^T $ encodes the value of $\|V \vec{y} - \vec{t}^*\|_p^p$ for each possible Hamming weight of $\vec{y} \in \{0,1\}^k$. (See Lemma~\ref{lem:alpha_norm_thing}.)

We show that, in order to find weights $\alpha_0,\ldots, \alpha_k \geq 0$ such that $V$ and $\vec{t}^*$ define a $(p,k)$-isolating parallelepiped, it suffices to find a $t^*$ such that $M_k(p,t^*)$ is invertible. For each odd integer $p \geq 1$ and each $k \geq 2$, we show an algorithm that finds such a $t^*$. (See Section~\ref{sec:odd_p}.) 

To extend this result to other $p \geq 1$, we consider the determinant of $M_k(p,t^*)$ for fixed $k$ and $t^*$, viewed as a function of $p$. We observe that this function has a rather nice form---it is a Dirichlet polynomial. I.e., for fixed $t^*$ and $k$, the determinant can be written as $\sum \exp(a_i p)$ for some $a_i \in \R$. Such a function has finitely many roots unless it is identically zero. So, we take the value of $t^*$ from above such that, say, $M_k(1,t^*)$ is invertible. Since $M_k(1,t^*)$ does not have zero determinant, the Dirichlet polynomial corresponding to $\det(M_k(p,t^*))$ cannot be identically zero and therefore has finitely many roots. This is how we prove Theorem~\ref{thm:all_p_intro}. (See Section~\ref{sec:all_p}.)

\paragraph{Extension to constant-factor approximation. }
In order to extend our hardness results to approximate $\CVP_p$ for finite $p$, we can try simply using the same reduction with $k$-SAT replaced by approximate Max-$k$-SAT. Unfortunately, this does not quite work. Indeed, it is easy to see that the ``identity matrix gadget'' that we use to restrict our attention to lattice vectors whose coordinates are in $\{0,1\}$ (Eq.~\eqref{eq:gadget_matrix}) cannot tolerate an approximation factor larger than $1+O(1/n)$ (for finite $p$). 

However, we observe that when $k = 2$, this identity matrix gadget is actually unnecessary. In particular, even without this gadget, it ``never helps'' to consider a lattice vector whose coordinates are not all in $\{0,1\}$. It then follows immediately from the analysis above that Gap-$2$-SAT reduces to approximate $\CVP_p$ with a constant approximation factor strictly greater than one. 
We note that we do not know how to extend this result to larger $k > 2$ (except when $p = 1$, see Theorem~\ref{thm:cvp1-hardness-of-approx}).
We show that the case $k = 2$ is sufficient for proving Gap-ETH-hardness (see Proposition~\ref{prop:gap3sat-to-gap2sat}), but we suspect that one can   just ``remove the identity matrix gadget'' from all of our reductions for finite $p$. If this were true, it would show Gap-ETH-hardness of approximation for slightly larger constant approximation factors and imply even stronger hardness results under less common assumptions.

\subsection{Open questions}

The most important question that we leave open is the extension of our SETH-hardness result to arbitrary $p \geq 1$. In particular, while our result applies to $p = p(n) \neq 2$ that approaches $2$ asymptotically, it does not apply to the specific case $p = 2$. An extension to $p = 2$ would settle the time complexity of $\CVP_2$ up to a factor of $2^{o(n)}$ (assuming SETH). However, we know that our technique does not work in this case (in that $(2,k)$-parallelepipeds do not exist for $k \geq 3$), so substantial new ideas might be needed to resolve this issue.

Another direction would be to strengthen our hardness of approximation results in one of two possible directions. First, one could try to increase the approximation factor. (Prior techniques for amplifying the approximation factor increase the rank of the lattice quite a bit, so they do not yield very interesting quantitative hardness results.)  Second, one could try to show a reduction from Gap-$k$-SAT to approximate $\CVP_p$ for $k \geq 3$. 
For $p \in \{1,\infty\}$, we already have such a reduction, and as we mentioned above, we suspect that we can simply ``remove the identity matrix gadget'' in our current reduction to achieve this for $1 < p < \infty$. But, we do not know how to prove that this works.

Finally, we note that our main reduction constructs lattices of \emph{rank} $n$, but the ambient dimension $d$ can be significantly larger. (Specifically, $d = n+ O(m)$, where $m$ is the number of clauses in the relevant SAT instance, and where the hidden constant depends on $k$ and can be very large.) Lattice problems are typically parameterized in terms of the rank of the lattice (and for the $\ell_2$ norm, one can assume without loss of generality that $d = n$), but it is still interesting to ask whether we can reduce the ambient dimension $d$.

\subsection*{Organization}

In Section~\ref{sec:prelims}, we review some necessary background knowledge. In Section~\ref{sec:geom-char}, we show how to use a $(p,k)$-isolating parallelepiped (for finite $p$) to reduce any $n$-variable instance of $k$-SAT to a $\CVP_p$ instance with rank $n$, and we show that this immediately gives SETH-hardness for $p = 1$. In Section~\ref{sec:SETH}, we show how to construct $(p,k)$-isolating parallelepipeds, first for odd integers $p \geq 1$ and then for ``almost all'' $p$. In Section~\ref{sec:approx_hard}, we show $2^{\Omega(n)}$-hardness of approximating $\CVP_p$ up to a constant factor. In Section~\ref{sec:other}, we prove a number of additional hardness results: $2^{\Omega(\sqrt{n})}$ ETH- and Max-$2$-SAT-hardness of $\CVPP_p$ (Section~\ref{sec:cvpp}), ETH- and Max-$2$-SAT-hardness of $\CVP_p$ (Section~\ref{sec:ETH-MAX-CUT}), 
and SETH-hardness of $\CVP_\infty$ and $\SVP_\infty$ (Section~\ref{sec:infinity!}). 

\section{Preliminaries}
\label{sec:prelims}
Throughout this paper, we work with lattice problems over $\R^d$ for convenience. Formally, we must pick a suitable representation of real numbers and consider both the size of the representation and the efficiency of arithmetic operations in the given representation. But, we omit such details throughout to ease readability. 

\subsection{Computational lattice problems}
\label{sec:problem_defs}

\renewcommand{\size}{\mathrm{size}}

Let $\dist_p(\lat, \vec{t}) := \min_{\vec{x} \in \lat} \norm{\vec{x} - \vec{t}}_p$ denote the $\ell_p$ distance of $\vec{t}$ to $\lat$. 
In addition to SVP and CVP, we also consider a variant of CVP called the Closest Vector Problem with Preprocessing (CVPP), which allows arbitrary preprocessing of a lattice.

\begin{definition}
For any $\gamma \geq 1$ and $1 \leq p \leq \infty$, the $\gamma$-approximate Shortest Vector Problem with respect to the $\ell_p$ norm ($\gamma$-$\SVP_p$) is the promise problem defined as follows. Given a lattice $\lat$ (specified by a basis $B \in \R^{d \times n}$) and a number $r > 0$, distinguish between a `YES' instance where there exists a non-zero vector $\vec{v} \in \lat$ such that $\norm{\vec{v}}_p \leq r$, and a `NO' instance where $\norm{\vec{v}}_p > \gamma r$ for all non-zero $v \in \lat$.
\end{definition}

\begin{definition}
For any $\gamma \geq 1$ and $1 \leq p \leq \infty$, the $\gamma$-approximate Closest Vector Problem with respect to the $\ell_p$ norm ($\gamma$-$\CVP_p$) is the promise problem defined as follows. Given a lattice $\lat$ (specified by a basis $B \in \R^{d \times n}$), a target vector $\vec{t} \in \R^d$, and a number $r > 0$, distinguish between a `YES' instance where $\dist_p(\lat, \vec{t}) \leq r$, and a `NO' instance where $\dist_p(\lat, \vec{t}) > \gamma r$.
\end{definition}

\noindent When $\gamma = 1$, we refer to the problems simply as $\SVP_p$ and $\CVP_p$, respectively.

\begin{definition}
The Closest Vector Problem with Preprocessing with respect to the $\ell_p$ norm ($\CVPP_p)$ is the problem of finding a preprocessing function $P$ and an algorithm $Q$ which work as follows. Given a lattice $\lat$ (specified by a basis $B \in \R^{d \times n}$), $P$ outputs a new description of $\lat$. Given $P(\lat)$, a target vector $\vec{t} \in \R^d$, and a number $r > 0$, $Q$ decides whether $\dist_p(\lat, \vec{t}) \leq r$.
\end{definition}

When we measure the running time of a $\CVPP$ algorithm, we only count the running time of $Q$, and not of the preprocessing algorithm $P$.

\subsection{Satisfiability problems and the Max-Cut problem}
A $k$-SAT formula $\Phi$ on $n$ Boolean variables $x_1, \ldots, x_n$ and $m$ clauses $C_1, \ldots, C_m$ is a conjunction $\Phi = \land_{i=1}^m C_i$ of clauses $C_i = \lor_{s=1}^k \ell_{i,s}$, where the literals $\ell_{i,s}$ denote a variable $x_j$ or its negation $\lnot x_j$. The goal is to decide whether there exists an assignment $\vec{a} \in \set{0, 1}^n$ to the variables of $\Phi$ such that all clauses have at least one ``true'' literal, i.e., so that all clauses are satisfied.

The \emph{value} of a $k$-SAT formula $\Phi$, denoted $\val(\Phi)$, is the maximum fraction of clauses satisfied by an assignment to $\Phi$.

\begin{definition}
Given a $k$-SAT formula $\Phi$ and constants $0 \leq \delta < \eps \leq 1$, the $(\delta, \eps)$-Gap-$k$-SAT problem is the promise problem defined as follows. The goal is to distinguish between a `YES' instance in which $\val(\Phi) \geq \eps$, and a `NO' instance in which $\val(\Phi) < \delta$.
\end{definition}

\begin{definition}
Given a $k$-SAT formula $\Phi$ with clauses $\mathcal{C} = \set{C_1, \ldots, C_m}$, a clause weight function $w: \mathcal{C} \to \Z^+$, and a weight threshold $W$, the Weighted Max-$k$-SAT problem is to decide whether there exists an assignment $\vec{a}$ to the variables of $\Phi$ such that $\sum_{\textrm{$C_i$ is sat. by $\vec{a}$}} w(C_i) \geq W$.
\end{definition}

\begin{definition}
Given an undirected graph $G = (V, E)$, an edge weight function $w: E \to \Z^+$, and a weight threshold $W$, the Weighted Max-CUT problem is defined as follows. The goal is to decide whether $V$ can be partitioned into sets $V_1$ and $V_2$ such that $\sum_{e_{ij} \in E : v_i \in V_1, v_j \in V_2} w(e_{ij}) \geq W$.
\end{definition}

There exists a folklore reduction from an instance of Weighted Max-Cut on a graph with $n$ vertices to an instance of Weighted Max 2-SAT on a formula with $n$ variables. See, e.g.,~\cite{journals/dam/GrammHNR03}.

\subsection{Exponential Time Hypotheses}
\label{sec:prelims_SETH}

Impagliazzo and Paturi~\cite{IP1999} introduced the following two hypotheses (ETH and SETH), which are now widely used to study the quantitative hardness of computational problems.

\begin{definition}
The Exponential Time Hypothesis (ETH) is the hypothesis defined as follows. For every $k \geq 3$ there exists a constant $\eps > 0$ such that no algorithm solves $k$-SAT formulas with $n$ variables in $2^{\eps n}$-time. In particular, there is no $2^{o(n)}$-time algorithm for 3-SAT.
\label{def:eth}
\end{definition}

\begin{definition}
The Strong Exponential Time Hypothesis (SETH) is the hypothesis defined as follows. For every constant $\eps > 0$ there exists $k$ such that no algorithm solves $k$-SAT formulas with $n$ variables in $2^{(1 - \eps) n}$-time.
\label{def:seth}
\end{definition}

An important tool in the study of the exact complexity of $k$-SAT is the Sparisification Lemma of Impagliazzo, Paturi, and Zane~\cite{journals/jcss/ImpagliazzoPZ01} which roughly says that any $k$-SAT formula can be replaced with $2^{\eps n}$ formulas each with $O(n)$ clauses for some $\eps > 0$.

\begin{proposition}[Sparsification Lemma,~\cite{journals/jcss/ImpagliazzoPZ01}]
For every $k \in \Z^+$ and $\eps > 0$ there exists a constant $c = c(k, \eps)$ such that any $k$-SAT formula $\Phi$ with $n$ variables can be expressed as $\Phi = \lor_{i=1}^r \Psi_{i}$ where $r \leq 2^{\eps n}$ and each $\Psi_i$ is a $k$-SAT formula with at most $cn$ clauses. Moreover, this disjunction can be computed in $2^{\eps n}$-time.
\label{prop:sparsification}
\end{proposition}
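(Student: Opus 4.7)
The plan is to prove the Sparsification Lemma via a recursive branching procedure that eliminates heavily-repeated patterns of literals in the clauses of $\Phi$. Fix integer thresholds $a_{k-1} > a_{k-2} > \cdots > a_1 \geq 1$ (depending on $k$ and $\eps$, to be chosen later). Given a $k$-SAT formula, I would search for the largest $i \in \set{1, \ldots, k-1}$ such that some set $H$ of $i$ literals is contained as a sub-clause in at least $a_i$ clauses, and call such an $H$ a \emph{heart}. If no heart exists at any level $i$, then a direct counting argument (no single literal can appear in more than $a_1$ clauses, no pair in more than $a_2$, and so on) shows that the total number of clauses is at most $cn$ for some $c = c(k,\eps)$, so the formula can be output as one of the leaf pieces $\Psi_i$. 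Otherwise, I would express $\Phi$ as a disjunction $\Phi \equiv \Phi_{\mathrm{all}} \lor \bigvee_{j=1}^{i} \Phi_{j}$, where $\Phi_{\mathrm{all}}$ sets every literal of $H$ to true (simultaneously satisfying and deleting every clause that contains $H$ and fixing $i$ variables), and $\Phi_{j}$ sets the first $j-1$ literals of $H$ to true and the $j$-th to false. This disjunction is exhaustive, since any assignment to the variables either makes every literal of $H$ true or falsifies some literal of $H$, and recursion on each child continues until no heart remains.

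The main technical obstacle is choosing the thresholds $a_i$ so that the recursion tree has at most $2^{\eps n}$ leaves. I would use an amortized potential argument in the spirit of~\cite{journals/jcss/ImpagliazzoPZ01}: define a measure $\mu$ on formulas that assigns each surviving variable a base weight and each surviving clause of length $\ell$ an additional weight depending on $\ell$. The two kinds of progress to exploit are that in the $\Phi_{\mathrm{all}}$ child one removes at least $a_i$ clauses outright while fixing $i$ variables (yielding a large drop in $\mu$), and in each $\Phi_{j}$ child one fixes a single variable while also shrinking every surviving clause that contained the chosen literal (potentially triggering further reductions down the recursion). By tuning $a_i$ to grow fast enough as $k - i$ increases, I would obtain a recurrence roughly of the form $T(n) \leq T(n-i) + i \cdot T(n-1)$ for each possible heart size, whose solution is $2^{\eps n}$ for the stated choice of thresholds.

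Finally, each leaf formula lives on a subset of the original variables, but padding the branching assignments as unit clauses extends it to a $k$-SAT formula on all $n$ variables with only $O(n)$ additional clauses, preserving the overall bound of $O(n)$ clauses. Exhaustiveness of the branching at each internal node immediately yields $\Phi \equiv \bigvee_i \Psi_i$. Since each recursion step runs in polynomial time in the current formula size and the recursion tree has at most $2^{\eps n}$ leaves, the whole construction runs in time $\poly(n) \cdot 2^{\eps n}$, which absorbs into $2^{\eps n}$ after reparametrizing $\eps$ by a constant factor. This would establish the lemma.
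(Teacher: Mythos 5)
First, note that the paper does not prove this proposition at all: it is imported verbatim from Impagliazzo, Paturi, and Zane~\cite{journals/jcss/ImpagliazzoPZ01} and used as a black box, so the only meaningful comparison is with the original IPZ argument. Your sketch does follow the general shape of that argument (hearts shared by many clauses, thresholds $a_1 < \cdots < a_{k-1}$ depending on $k$ and $\eps$, branching until no heart remains, and the easy observation that a heart-free formula has $O(n)$ clauses because no literal lies in more than $a_1$ clauses), and the padding of the branching assignments as unit clauses at the leaves is fine.

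The genuine gap is in the only quantitatively delicate step: bounding the number of leaves by $2^{\eps n}$. The recurrence you propose, $T(n) \leq T(n-i) + i\,T(n-1)$, cannot possibly do this. It does not involve the thresholds $a_i$ at all, so no choice of thresholds affects it, and already for $i = 1$ it reads $T(n) \leq 2T(n-1)$, i.e.\ $T(n) \leq 2^n$, while for larger $i$ its solutions grow like $i^n$ --- far above $2^{\eps n}$ for any fixed $\eps$. The difficulty is structural: your tree has constant branching factor at \emph{every} internal node and depth up to $n$, so any analysis that only tracks the number of unfixed variables is doomed. The actual IPZ proof bounds the number of branchings of each ``type'' that can occur along any root-to-leaf path (using that each branching either permanently eliminates at least $a_i$ clauses or permanently shrinks at least $a_i$ clauses, that a clause can shrink fewer than $k$ times, and that the thresholds are chosen growing with $k-i$ as a function of $\eps$), and then counts paths via this refined accounting rather than via a recurrence in $n$ alone. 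Your ``amortized potential'' paragraph gestures at this but never defines the measure, never shows the per-node drop, and collapses to a recurrence that contradicts the desired bound; as written, the proof of the $2^{\eps n}$ bound --- which is the entire content of the Sparsification Lemma --- is missing.
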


In this paper we also consider the W-Max-SAT-SETH hypothesis, which corresponds to SETH but with Weighted Max-$k$-SAT in place of $k$-SAT.
Our main result only relies on this weaker variant of SETH, and is therefore more robust.

Dinur~\cite{journals/eccc/Dinur16} and Manurangsi and Raghavendra~\cite{conf/icalp/ManurangsiR17} recently introduced a ``gap'' version of ETH, which asserts that Gap-$3$-SAT takes $2^{\Omega(n)}$-time.

\begin{definition}
The (randomized) Gap-Exponential Time Hypothesis ((randomized) Gap-ETH) is the hypothesis that there exist constants $\delta < 1$ and $\eps > 0$ such that no (randomized) algorithm solves $(\delta, 1)$-Gap-$3$-SAT instances with $n$ variables
in $2^{\eps n}$-time.
\label{def:gap-eth}
\end{definition}

As Dinur~\cite{journals/eccc/Dinur16} notes, one can sparsify a Gap-SAT instance simply by sampling clauses. Therefore, we can assume (almost) without loss of generality that Gap-ETH applies only to formulas with $O(n)$ clauses. The caveat is that the sampling is randomized, so finding a $2^{o(n)}$-time algorithm for sparse Gap-$3$-SAT only implies a randomized $2^{o(n)}$-time algorithm for general Gap-$3$-SAT. %

We give a variant of Dinur's sampling argument in Proposition~\ref{prop:gap-sat-sparsification}. The idea is to show that both the total number of sampled clauses and the number of sampled clauses that are satisfied by any given assignment are highly concentrated around their expectation by using the Chernoff bound, and then to take a union bound over the bad events where these quantities deviate substantially from their expectation.

We will use the following multiplicative Chernoff bounds (see, e.g.,~\cite{hp/chernoff}). Let $X_1, \ldots, X_n$ be independent identically distributed Bernoulli random variables with expectation $p$, so that the expectation of $\sum_{i=1}^n X_i$ is $\mu = pn$. Then:

\begin{equation}
\Pr[\sum_{i=1}^n X_i < (1 - \alpha)\mu] < e^{-\mu \alpha^2/2}  \ ,
\label{eq:mult-chernoff-lb}
\end{equation}

\begin{equation}
\Pr[\sum_{i=1}^n X_i > (1 + \alpha)\mu] < e^{-\mu \alpha^2/4} \ .
\label{eq:mult-chernoff-ub}
\end{equation}

\begin{proposition}[Sparsification for Gap-SAT]
For any $0 < \delta < \delta' < 1$, there is a polynomial-time randomized reduction from a $(\delta, 1)$-Gap-$k$-SAT instance $\Phi$ with $n$ variables and $m$ clauses to a $(\delta', 1)$-Gap-$k$-SAT instance $\Phi'$ with $n$ variables and $O(n)$ clauses.
\label{prop:gap-sat-sparsification}
\end{proposition}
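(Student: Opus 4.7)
My plan is to build $\Phi'$ by independently subsampling the clauses of $\Phi$. If $m \leq Cn$ for a constant $C = C(\delta,\delta',k)$ that I will fix below, the reduction simply outputs $\Phi' := \Phi$; otherwise it sets $p := Cn/m$ and keeps each clause of $\Phi$ in $\Phi'$ independently with probability $p$. This is clearly a polynomial-time randomized procedure, and in expectation $\Phi'$ has $pm = Cn$ clauses. Completeness is immediate: any assignment that satisfies all of $\Phi$ still satisfies every clause of any subsample, so $\val(\Phi) = 1$ implies $\val(\Phi') = 1$.

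The heart of the argument is soundness. Fix any assignment $\vec{a}\in\set{0,1}^n$ and let $s(\vec{a}) < \delta m$ be the number of clauses of $\Phi$ that $\vec{a}$ satisfies; let $s'(\vec{a})$ and $M'$ denote the number of sampled clauses satisfied by $\vec{a}$ and the total number of sampled clauses, respectively. Both are sums of independent Bernoullis, with means $\mathbb{E}[s'(\vec{a})] = p\, s(\vec{a}) < \delta C n$ and $\mathbb{E}[M'] = pm = Cn$. I will first choose a small constant $\beta>0$ with $(1+\beta)\delta < (1-\beta)\delta'$ (which exists because $\delta < \delta'$), and then apply the Chernoff bounds \eqref{eq:mult-chernoff-ub} and \eqref{eq:mult-chernoff-lb} to obtain
\[
\Pr\bigl[s'(\vec{a}) > (1+\beta)\delta C n\bigr] \leq e^{-\delta\beta^2 Cn/4}, \qquad \Pr\bigl[M' < (1-\beta)Cn\bigr] \leq e^{-\beta^2 Cn/2}.
\]
Union bounding the first event over all $2^n$ assignments $\vec{a}$ and combining with the single $M'$ event gives failure probability at most $2^n e^{-\delta\beta^2 Cn/4} + e^{-\beta^2 Cn/2}$, which I will drive below $1/3$ by taking $C$ a sufficiently large constant depending on $\delta$ and $\beta$. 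On the complementary event, every assignment has
\[
\frac{s'(\vec{a})}{M'} \leq \frac{(1+\beta)\delta Cn}{(1-\beta)Cn} = \frac{(1+\beta)\delta}{1-\beta} < \delta',
\]
so $\val(\Phi') < \delta'$. To also ensure the $O(n)$-clause bound holds always, I will include the upper-tail event $\Pr[M' > 2Cn] \leq e^{-Cn/4}$ in the union bound and, if any bad event occurs, have the reduction instead output a fixed trivially unsatisfiable formula (a correct `NO' output).

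The main obstacle is essentially bookkeeping: the success criterion is the \emph{ratio} $s'(\vec{a})/M'$, not $s'(\vec{a})$ alone, so I must simultaneously control both the numerator (via an upper tail) and the denominator (via a lower tail), which is why $\beta$ must be chosen small enough that $(1+\beta)\delta$ strictly fits below $(1-\beta)\delta'$. A secondary minor point is that $\mathbb{E}[s'(\vec{a})]$ is only upper bounded by $\delta Cn$ rather than equal to it; this goes in the right direction, and one can formally handle it by dominating $s'(\vec{a})$ by a sum of independent Bernoullis with mean exactly $\delta Cn$ before invoking \eqref{eq:mult-chernoff-ub} in the stated form.
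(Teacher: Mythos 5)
Your proof is correct and follows essentially the same route as the paper's: subsample each clause independently with probability $\Theta(n/m)$, apply the multiplicative Chernoff upper tail to the satisfied-clause count of each fixed assignment and the lower tail to the total clause count, union bound over all $2^n$ assignments, and bound the ratio to get $\val(\Phi') < \delta'$. The only differences are cosmetic choices of constants and some extra bookkeeping (the explicit $M' > 2Cn$ check and the stochastic-domination remark for the sub-$\delta pm$ mean), which the paper glosses over.
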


\begin{proof}
Let $\Phi'$ be the formula obtained by sampling each clause of $\Phi$ independently with probability $p := \min \set{1, 10/(\delta \alpha^2) \cdot n/m}$, where $\alpha$ is fixed so that $-(1 - \delta'/\delta)/(1 + \delta'/\delta) < \alpha < 1$. Clearly, if $\val(\Phi) = 1$ then $\val(\Phi') = 1$ as well. We analyze the case where $\val(\Phi) < \delta$.

In expectation $\Phi'$ has $pm$ clauses. Furthermore, because $\val(\Phi') < \delta$, in expectation any fixed assignment will satisfy fewer than $\delta pm$ clauses of $\Phi'$.
Therefore by Equation~\eqref{eq:mult-chernoff-lb},
\begin{equation}
\Pr[\textrm{Number of clauses in $\Phi'$} \leq (1 - \alpha)pm] \leq e^{-\alpha^2 pm/2} \leq e^{-2n}.
\label{eq:sparsification-num-clauses}
\end{equation}
Furthermore, by Equation~\eqref{eq:mult-chernoff-ub}, we have that for each fixed assignment $\vec{a}$,
\begin{equation}
\Pr[\textrm{Number of clauses in $\Phi'$ sat. by $\vec{a}$} \geq (1 + \alpha)\delta pm] \leq e^{-\alpha^2 \delta pm /4} \leq e^{-2n}.
\label{eq:sparsification-sat-clauses}
\end{equation}

By applying Equations~\eqref{eq:sparsification-num-clauses} and~\eqref{eq:sparsification-sat-clauses}, and taking a union bound we get that the probability that $\Phi'$ has at least $(1 - \alpha)pm$ clauses and that no assignment to $\Phi'$ satisfies more than $(1 + \alpha)\delta pm$ clauses is at least $1 - (e^{-2n} + 2^n e^{-2n}) \geq 1 - 2e^{-n}$. Therefore,
\[
\val(\Phi') \leq \frac{(1 + \alpha) pm}{(1 - \alpha)pm} \cdot \delta < \delta'
\]
with high probability.
\end{proof}

Additionally, we will use a reduction of Garey et al.~\cite{GJS76} from $3$-SAT to Max-$2$-SAT which also works as a reduction from Gap-$3$-SAT to Gap-$2$-SAT. The reduction works by outputting ten $1$- and $2$-clauses for each $3$-clause in the original formula. Any assignment which satisfies the original clause corresponds to an assignment which satisfies $7$ of the output clauses, and any assignment which does not satisfy the original clause corresponds to an assignment which satisfies $6$ of the output clauses.

\begin{proposition}[{\cite[Theorem 1.1]{GJS76}}]
For every $0 \leq \delta < \eps \leq 1$, there is a polynomial-time reduction from every instance of $(\delta, \eps)$-Gap-$3$-SAT with $n$ variables and $m$ clauses to an instance of $((6 + \delta)/10, (6 + \eps)/10)$-Gap-$2$-SAT with $n + m$ variables and $10m$ clauses.
\label{prop:gap3sat-to-gap2sat}
\end{proposition}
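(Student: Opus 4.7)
The plan is to use the classical gadget of Garey, Johnson, and Stockmeyer. For each $3$-clause $C = (\ell_1 \vee \ell_2 \vee \ell_3)$ of the input $\Phi$, I will introduce a single fresh auxiliary variable $y_C$ and emit the following ten $1$- and $2$-clauses into $\Phi'$: the three unit clauses $(\ell_1), (\ell_2), (\ell_3)$; the unit clause $(y_C)$; the three ``all-negated'' $2$-clauses $(\neg \ell_1 \vee \neg \ell_2), (\neg \ell_1 \vee \neg \ell_3), (\neg \ell_2 \vee \neg \ell_3)$; and the three $2$-clauses $(\ell_1 \vee \neg y_C), (\ell_2 \vee \neg y_C), (\ell_3 \vee \neg y_C)$. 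Summing over the $m$ clauses of $\Phi$, the formula $\Phi'$ has $n + m$ variables and exactly $10m$ clauses and is constructed in polynomial time.

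The core of the argument is a local analysis of each gadget. Fix an assignment $\vec{a}$ to the original variables, and let $w \in \{0,1,2,3\}$ be the number of literals of $C$ satisfied by $\vec{a}$. I will show $\max_{y_C \in \{0,1\}}$ (number of gadget clauses of $C$ satisfied) equals $6$ when $w = 0$ and $7$ when $w \in \{1,2,3\}$. The case $w = 0$ is immediate: $y_C = 0$ gives $0 + 0 + 3 + 3 = 6$, and $y_C = 1$ gives only $0 + 1 + 3 + 0 = 4$. For $w = 1$, choose $y_C = 0$ to obtain $1 + 0 + 3 + 3 = 7$; for $w = 2$, choose $y_C = 0$ to obtain $2 + 0 + 2 + 3 = 7$; for $w = 3$, choose $y_C = 1$ to obtain $3 + 1 + 0 + 3 = 7$. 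A parallel computation of the other value of $y_C$ in each satisfied subcase confirms that no choice of $y_C$ exceeds $7$, so the per-gadget maximum is exactly as claimed.

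Since the auxiliary variables $y_C$ across different clauses are disjoint and fresh, the optimizations over $y_C$ decouple. Consequently, if $s$ denotes the maximum number of clauses of $\Phi$ simultaneously satisfiable by an assignment, then the maximum number of clauses of $\Phi'$ simultaneously satisfiable is exactly $7s + 6(m-s) = 6m + s$. Dividing by $10m$ yields the identity $\val(\Phi') = (6 + \val(\Phi))/10$, so a YES instance with $\val(\Phi) \geq \eps$ maps to one with $\val(\Phi') \geq (6+\eps)/10$, and a NO instance with $\val(\Phi) < \delta$ maps to one with $\val(\Phi') < (6+\delta)/10$, as claimed.

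I do not foresee any real obstacle: the construction and all counts are explicit, local, and a finite case check. The only mild care required is picking the optimal $y_C$ per weight subcase and checking the suboptimal $y_C$ to rule out any ``cheating'' that would break the $6m + s$ bound; this uses only that the $y_C$ are independent across gadgets.
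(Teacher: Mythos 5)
Your proof is correct and follows essentially the same route as the paper, which simply invokes the Garey--Johnson--Stockmeyer gadget of ten $1$- and $2$-clauses per $3$-clause with the $7$-versus-$6$ satisfiability property; your explicit case analysis over the weight $w$ and the auxiliary variable $y_C$, together with the decoupling of the fresh variables and the identity $\val(\Phi') = (6 + \val(\Phi))/10$, fills in exactly the details the paper delegates to the citation.
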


\section{SETH-hardness from isolating parallelepipeds}
\label{sec:geom-char}

\setlength{\tabcolsep}{0.3cm}
\begin{table}[t]
	\begin{center}
		\begin{tabular}{cc}
			\begin{tabular}{c|ccccc}
				& $x_1$ & $x_2$ & $\cdots$ & $x_{n-1}$ & $x_n$ \\ \hline
				\multirow{3}{*}{$C_1$\Bigg\{} & \multirow{3}{*}{$\vec{v}_1$} & \multirow{3}{*}{$\vec{v}_2$} & \multirow{3}{*}{$\cdots$} & \multirow{3}{*}{$\vec{0}_{d^*}$} & \multirow{3}{*}{$-\vec{v}_3$} \\
				& &  &  &  &  \\
				&  &  &  &  &  \\
				\vdots & $\vdots$ & $\cdots$ & $\ddots$ & $\vdots$ & $\vdots$\\
				\multirow{3}{*}{$C_m\Bigg\{$} & \multirow{3}{*}{$\vec{0}_{d^*}$} & \multirow{3}{*}{$-\vec{v}_1$} & \multirow{3}{*}{$\cdots$} & \multirow{3}{*}{$\vec{v}_2$} & \multirow{3}{*}{$\vec{v}_3$}\\
				&  &  &  &  & \\
				&  &  &  &  & \\ \hline
				$x_1$ & $2 \alpha^{1/p}$ & $0$ & $\cdots$ & $0$ & $0$\\
				$x_2$ & $0$ & $2 \alpha^{1/p}$ & $\cdots$ & $0$ & $0$\\
				$\vdots$ & $\vdots$ & $\vdots$ & $\ddots$ & $\vdots$ & $\vdots$\\
				$x_{n-1}$ & $0$ & $0$ & $\cdots$ & $2 \alpha^{1/p}$ & $0$\\
				$x_n$ & $0$ & $0$ & $\cdots$ & $0$ & $2 \alpha^{1/p}$\\ \hline
			\end{tabular} &
			\begin{tabular}{c}
				\\ \hline
				\multirow{3}{*}{$\vec{t}^*-\vec{v}_3$} \\
				 \\
				 \\
				$\vdots$ \\ 
				\multirow{3}{*}{$\vec{t}^*-\vec{v}_1$} \\
				\\
				\\ \hline
				$\alpha^{1/p}$ \\
				$\alpha^{1/p}$ \\
				$\vdots$\\
				$\alpha^{1/p}$\\
				$\alpha^{1/p}$ \\ \hline
			\end{tabular} \\ \\
			$B$ & $\vec{t}$
		\end{tabular}
	\end{center}
	\caption{A basis $B$ and target vector $\vec{t}$ output by the reduction from Theorem~\ref{thm:isolating_implies_reduction} with some $(p,3)$-isolating parallelepiped given by $V = (\vec{v}_1,\vec{v}_2, \vec{v}_3) \in \R^{d^* \times 3}$ and $\vec{t}^* \in \R^{d^*}$. In this example, the first clause is $C_1 \equiv x_1 \lor x_2 \lor \lnot x_n$ and the $m$th clause is $C_m \equiv \lnot x_2 \lor x_{n-1} \lor x_n$. By the definition of an isolating parallelepiped (Definition~\ref{def:piped}), the contribution of the first $d$ coordinates to the distance $\|B\vec{z} - \vec{t}\|_p^p$ will be $1$ for any assignment $\vec{z} \in \{0,1\}^n$ satisfying $C_1$, while non-satisfying assignments contribute $(1+\delta)$ for some $\delta > 0$. For example, if $z_1=1,z_2=0,z_n=1$, the clause $C_1$ is satisfied, and the first $d$ coordinates will contribute $\|\vec{v}_1-\vec{v}_3-(\vec{t}^*-\vec{v}_3)\|_p^p=\|\vec{v}_1-\vec{t}^*\|_p^p=1$. On the other hand, if $z_1=0,z_2=0,z_n=1$, then $C_1$ is not satisfied, and $\|-\vec{v}_3-(\vec{t}^*-\vec{v}_3)\|_p^p=\|\vec{t}^*\|_p^p=1+\delta$.}
	\label{tbl:linf-red}
	\label{tbl:piped_reduction}
\end{table}

We start by giving a reduction from instances of weighted Max-$k$-SAT on formulas with $n$ variables to instances of $\CVP_p$ with rank $n$ for all $p$ that uses a certain geometric object, which we define next. Let $\vec{1}_n$ and $\vec{0}_n$ denote the all 1s and all 0s vectors of length $n$ respectively, and let $I_n$ denote the $n \times n$ identity matrix.

\begin{definition}
	\label{def:piped}
For any $1 \leq p \leq \infty$ and integer $k \geq 2$, we say that $V \in \R^{d^* \times k}$ and $\vec{t}^* \in \R^{d^*}$ define a $(p, k)$-\emph{isolating parallelepiped} if $\norm{\vec{t}}_p > 1$ and $\norm{V\vec{x} - \vec{t}^*}_p = 1$ for all $\vec{x} \in \set{0, 1}^k \setminus \set{\vec{0}_k}$.
\end{definition}

In order to give the reduction, we first introduce some notation related to SAT. Let $\Phi$ be a $k$-SAT formula on $n$ variables $x_1, \ldots, x_n$ and $m$ clauses $C_1, \ldots, C_m$. Let $\ind(\ell)$ denote the index of the variable underlying a literal $\ell$. I.e., $\ind(\ell) = j$ if $\ell = x_j$ or $\ell = \lnot x_j$. Call a literal $\ell$ \emph{positive} if $\ell = x_j$ and \emph{negative} if $\ell = \lnot x_j$ for some variable $x_j$.
Given a clause $C_i = \lor_{s=1}^k \ell_{i, s}$, let $P_i := \set{s \in [k] : \ell_{i, s} \textrm{ is positive}}$ and let $N_i := \set{s \in [k] : \ell_{i, s} \textrm{ is negative}}$ denote the indices of positive and negative literals in $C_i$ respectively. Given an assignment $\vec{a} \in \{0,1\}^n$ to the variables of $\Phi$, let $S_i(\vec{a})$ denote the indices of literals in $C_i$ satisfied by $\vec{a}$. I.e., $S_i(\vec{a}) := \set{s \in P_i : a_{\ind(\ell_{i,s})} = 1} \cup \set{s \in N_i : a_{\ind(\ell_{i,s})} = 0}$. Finally, let $m^+(\vec{a})$ denote the number of clauses of $\Phi$ satisfied by the assignment $\vec{a}$, i.e., the number of clauses $i$ for which $|S_i(\vec{a})| \geq 1$.

\begin{theorem}
	\label{thm:isolating_implies_reduction}
If there exists a computable $(p, k)$-isolating parallelepiped for some $p = p(n) \in [1, \infty)$ and integer $k \geq 2$, then there exists a polynomial-time reduction from any (weighted-)Max-$k$-SAT instance with $n$ variables to a $\CVP_p$ instance of rank $n$.
\label{thm:max-k-sat-to-low-rank}
\end{theorem}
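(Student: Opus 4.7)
The plan is to spell out the construction sketched in Table~\ref{tbl:piped_reduction}. Given a Weighted Max-$k$-SAT instance with variables $x_1,\ldots,x_n$, clauses $C_1,\ldots,C_m$, positive integer weights $w_1,\ldots,w_m$, and a $(p,k)$-isolating parallelepiped $(V,\vec{t}^*)$ with $V = (\vec{v}_1,\ldots,\vec{v}_k) \in \R^{d^* \times k}$, I would build a basis $B \in \R^{(d^* m + n) \times n}$ whose top $d^* m$ rows form $m$ blocks of $d^*$ rows (one per clause) and whose bottom $n$ rows are $2\alpha^{1/p} I_n$, with $\alpha > 0$ a parameter to be chosen at the end. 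In the $i$-th clause block, the entry in column $j$ is $w_i^{1/p}\vec{v}_s$ if $\ell_{i,s} = x_j$, $-w_i^{1/p}\vec{v}_s$ if $\ell_{i,s} = \lnot x_j$, and $\vec{0}_{d^*}$ otherwise. The target $\vec{t}$ has block $w_i^{1/p}\bigl(\vec{t}^* - \sum_{s \in N_i} \vec{v}_s\bigr)$ for each clause and $\alpha^{1/p} \cdot \vec{1}_n$ at the bottom.

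The key step is to check that for every $\vec{z} \in \{0,1\}^n$, the $i$-th clause block of $B\vec{z} - \vec{t}$ equals $w_i^{1/p}\bigl(V \vec{y}_i(\vec{z}) - \vec{t}^*\bigr)$, where $\vec{y}_i(\vec{z}) \in \{0,1\}^k$ has $y_{i,s} = z_{\ind(\ell_{i,s})}$ if $\ell_{i,s}$ is positive and $y_{i,s} = 1 - z_{\ind(\ell_{i,s})}$ if $\ell_{i,s}$ is negative. In either case $y_{i,s} = 1$ exactly when the $s$-th literal of $C_i$ is satisfied, so $\vec{y}_i(\vec{z}) = \vec{0}_k$ iff $C_i$ is unsatisfied. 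Expanding the block and separating positive and negative literals gives the identity directly. Applying Definition~\ref{def:piped} blockwise, with $\delta := \|\vec{t}^*\|_p^p - 1 > 0$, yields
\[
\|B\vec{z} - \vec{t}\|_p^p \;=\; \alpha n \;+\; (1+\delta)W \;-\; \delta\, W^+(\vec{z}),
\]
where $W := \sum_i w_i$ and $W^+(\vec{z})$ is the total weight of clauses satisfied by $\vec{z}$. Thus minimizing $\|B\vec{z}-\vec{t}\|_p$ over $\vec{z} \in \{0,1\}^n$ is equivalent to maximizing $W^+(\vec{z})$.

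To conclude I must ensure that the minimum over $\Z^n$ is actually attained on $\{0,1\}^n$. Since $|2z_j - 1| \geq 3$ whenever $z_j \in \Z \setminus \{0,1\}$, the bottom block alone forces $\|B\vec{z} - \vec{t}\|_p^p \geq \alpha\bigl((n-1) + 3^p\bigr)$ for any such $\vec{z}$, while $\|B\vec{0} - \vec{t}\|_p^p \leq \alpha n + (1+\delta)W$. Choosing any $\alpha > (1+\delta)W/(3^p - 1)$---a value of polynomial bit length since $V$ and $\vec{t}^*$ are computable---forces the minimum to lie in $\{0,1\}^n$. The basis $B$ has rank $n$ because the bottom block $2\alpha^{1/p} I_n$ is already full rank, and the whole construction runs in polynomial time once $(V,\vec{t}^*)$ is given.

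The main obstacle---really the only nontrivial point---is the clause-block identity above: the shift $\sum_{s \in N_i}\vec{v}_s$ in $\vec{t}$ has to be calibrated precisely so that negated literals flip their role and $\vec{y}_i(\vec{z})$ ends up encoding the satisfied-literal set independently of polarity. Once that identity is in place, the theorem follows by applying the isolating-parallelepiped property clause by clause.
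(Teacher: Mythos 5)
Your proposal is correct and follows essentially the same construction as the paper's proof: the same clause blocks built from the isolating parallelepiped with target shift $\vec{t}^* - \sum_{s \in N_i}\vec{v}_s$, the same $2\alpha^{1/p} I_n$ gadget to force $\{0,1\}$ coefficients, and the same blockwise distance computation yielding a distance that decreases linearly in the satisfied weight. The only cosmetic differences are that you pick $\alpha$ as ``any sufficiently large value'' rather than the paper's explicit $\alpha = m + (m-W)\delta$ and leave the decision threshold $r$ implicit in your distance formula, both of which are immediate to fill in.
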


\begin{proof}
For simplicity, we give a reduction from unweighted Max-$k$-SAT, and afterwards sketch how to modify our reduction to handle the weighted case as well.
Namely, we give a reduction from any Max-$k$-SAT instance $(\Phi, W)$ to an instance $(B, \vec{t}^*, r)$ of $\CVP_p$. Here, the formula $\Phi$ is on $n$ variables $x_1, \ldots, x_n$ and $m$ clauses $C_1, \ldots, C_m$. $(\Phi, W)$ is a `YES' instance if there exists an assignment $\vec{a}$ such that $m^+(\vec{a}) \geq W$.

By assumption, there exist computable $d^* = d^*(p, k) \in \Z^+$, $V = [\vec{v}_1, \ldots, \vec{v}_k] \in \R^{d^* \times k}$, and $\vec{t}^* \in \R^{d^*}$ such that $\norm{\vec{t}^*}_p = (1 + \delta)^{1/p}$ for some $\delta > 0$ and $\norm{V\vec{z} - \vec{t}^*}_p = 1$ for all $\vec{z} \in \set{0, 1}^k \setminus \set{\vec{0}_k}$.

We define the output $\CVP_p$ instance as follows. Let $d := md^* + n$. The basis $B \in \R^{d \times n}$ and target vector $\vec{t} \in \R^d$ in the output instance have the form
\[
B = 
\left(
 \begin{array}{c}
B_1 \\
\vdots \\
B_m \\
2 \alpha^{1/p} \cdot I_n
 \end{array}
\right), \qquad
\vec{t} = 
\left(
 \begin{array}{c}
\vec{t}_1 \\
\vdots \\
\vec{t}_m \\
\alpha^{1/p} \cdot \vec{1}_n
 \end{array}
\right),
\]
with blocks $B_i \in \R^{d^* \times n}$ and $\vec{t}_{i} \in \R^{d^*}$ for $1 \leq i \leq m$ and $\alpha := m + (m - W)\delta$. Note that $\alpha$ is the maximum possible contribution of the clauses $C_1, \ldots, C_m$ to $\norm{B\vec{y} - \vec{t}}_p^p$ when $(\Phi, W)$ is a `YES' instance.
For every $1 \leq i \leq m$ and $1 \leq j \leq n$, set the $j$th column $(B_i)_j$ of block $B_i$ (corresponding to the clause $C_i = \lor_{s=1}^k \ell_{i, s}$) as
\[
  (B_i)_j :=\left \{
\begin{array}{rl} 
\vec{v}_s & \textrm{if $x_j$ is the $s$th literal of clause $i$}, \\
- \vec{v}_s & \textrm{if $\lnot x_j$ is the $s$th literal of clause $i$}, \\
\vec{0}_{d^*}  & \textrm{otherwise},
\end{array}
\right.
\]
and set $\vec{t}_i := \vec{t}^* - \sum_{s \in N_i} \vec{v}_s$. Set $r := (\alpha (n + 1))^{1/p}$.

Clearly, the reduction runs in polynomial time.
We next analyze for which $\vec{y} \in \Z^n$ it holds that $\norm{B\vec{y} - \vec{t}}_p \leq r$. Given $\vec{y} \notin \set{0, 1}^n$, 
\[
\norm{B\vec{y} - \vec{t}}_p^p \geq \norm{2\alpha^{1/p} I_n \vec{y} - \alpha^{1/p} \vec{1}_n}_p^p \geq \alpha (n + 2) > r^p
\; ,
\]
so we only need to analyze the case when $\vec{y} \in \set{0, 1}^n$. Consider an assignment $\vec{y} \in \set{0, 1}^n$ to the variables of $\Phi$. Then,
\begin{align*}
\norm{B_i \vec{y} - \vec{t}_i}_p &= \Big\|\sum_{s \in P_i} y_{\ind(\ell_{i, s})} \cdot \vec{v}_s - \sum_{s \in N_i} y_{\ind(\ell_{i, s})} \cdot \vec{v}_s - \Big(\vec{t}^* - \sum_{s \in N_i} \vec{v}_s \Big)\Big\|_p \\
          &= \Big\|\sum_{s \in P_i} y_{\ind(\ell_{i, s})} \cdot \vec{v}_s + \sum_{s \in N_i} \big(1 - y_{\ind(\ell_{i, s})} \big) \cdot \vec{v}_s - \vec{t}^* \Big\|_p \\
          &= \Big\|\sum_{s \in S_i(\vec{a})} \vec{v}_s - \vec{t}^* \Big\|_p.
\end{align*}
By assumption, the last quantity is equal to $1$ if $|S_i(\vec{y})| \geq 1$, and is equal to $(1 + \delta)^{1/p}$ otherwise.
Because $|S_i(\vec{y})| \geq 1$ if and only if $C_i$ is satisfied, it follows that 
\[
\norm{B\vec{y} - \vec{t}}_p^p = \Big(\sum_{i=1}^{m} \norm{B_i \vec{y} - \vec{t}_i}_p^p\Big) + \alpha n 
= m + (m - m^+(\vec{y}))\delta + \alpha n
\; .
\]
Therefore, $\norm{B\vec{y} - \vec{t}}_p \leq r$ if and only if $m^+(\vec{y}) \geq W$, and therefore there exists $\vec{y}$ such that $\norm{B \vec{y} - \vec{t}}_p \leq r$ if and only if $(\Phi, W)$ is a `YES' instance of Max-$k$-SAT, as needed.

To extend this to a reduction from \emph{weighted} Max-$k$-SAT to $\CVP_p$, simply multiply each block $B_i$ and the corresponding target vector $\vec{t}_i$ by $w(C_i)^{1/p}$, where $w(C_i)$ denotes the weight of the clause $C_i$. Then, by adjusting $\alpha$ to depend on the weights $w(C_i)$ we obtain the desire reduction.
\end{proof}

Because the rank $n$ of the output $\CVP_p$ instance matches the number of variables in the input SAT formula, we immediately get the following corollary.

\begin{corollary}
For any efficiently computable $p = p(n) \in [1, \infty)$ if there exists a computable $(p, k)$-isolating parallelepiped for infinitely many $k \in \Z^+$, then, for every constant $\eps > 0$ there is no $2^{(1 - \eps)n}$-time algorithm for $\CVP_p$ assuming W-Max-SAT-SETH. In particular there is no $2^{(1 - \eps)n}$-time algorithm for $\CVP_p$ assuming SETH.
\label{cor:seth-hardness}
\end{corollary}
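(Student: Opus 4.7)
The plan is to argue by contradiction, using Theorem~\ref{thm:isolating_implies_reduction} as a black box. Fix any efficiently computable $p = p(n) \in [1,\infty)$ and suppose that the set $K \subseteq \Z^+$ of integers $k$ for which a computable $(p,k)$-isolating parallelepiped exists is infinite. Assume toward contradiction that there is some constant $\eps > 0$ and an algorithm $A$ solving $\CVP_p$ in time $2^{(1-\eps)n}$.

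First I would choose $\eps' := \eps/2 > 0$. By W-Max-SAT-SETH (the W-Max-$k$-SAT analogue of Definition~\ref{def:seth}), there exists some $k^\star$ such that no algorithm solves Weighted Max-$k^\star$-SAT in time $2^{(1-\eps')n}$. The key observation is that this hardness is monotone in $k$: any algorithm for Weighted Max-$k$-SAT trivially solves Weighted Max-$k'$-SAT for $k' \leq k$ (by padding short clauses with a dummy literal), so hardness of Weighted Max-$k^\star$-SAT propagates to Weighted Max-$k$-SAT for every $k \geq k^\star$. Since $K$ is infinite, I can pick some $k \in K$ with $k \geq k^\star$, for which both (i) a computable $(p,k)$-isolating parallelepiped exists, and (ii) no $2^{(1-\eps')n}$-time algorithm solves Weighted Max-$k$-SAT.

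Next I would apply Theorem~\ref{thm:isolating_implies_reduction} with this $k$: it gives a polynomial-time reduction mapping any Weighted Max-$k$-SAT instance $\Phi$ on $n$ variables to a $\CVP_p$ instance $(B,\vec{t},r)$ of rank $n$ (and ambient dimension $d = md^* + n$ polynomial in $n$ and $m$). Running $A$ on the output solves the $\CVP_p$ instance in time $2^{(1-\eps)n}$, and the whole composed procedure therefore solves Weighted Max-$k$-SAT in time $\mathrm{poly}(n,m) \cdot 2^{(1-\eps)n}$. For $n$ sufficiently large the polynomial factor is absorbed into the exponent, giving a $2^{(1-\eps')n}$-time algorithm and contradicting the choice of $k$.

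This establishes the first half of the corollary. For the "in particular" statement under plain SETH, I would observe that $k$-SAT is the special case of Weighted Max-$k$-SAT where every clause has weight $1$ and the threshold is $W = m$, so any $2^{(1-\eps')n}$-time algorithm for Weighted Max-$k$-SAT immediately yields one for $k$-SAT. Hence the same contradiction refutes SETH (Definition~\ref{def:seth}) as well. The only subtle step in the plan is the bookkeeping needed to match the $k$ promised hard by (W-Max-SAT-)SETH with the (infinitely many) $k$ for which an isolating parallelepiped is known; the monotonicity argument above is what makes this alignment automatic.
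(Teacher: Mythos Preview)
Your argument is correct and is precisely the standard fleshing-out of what the paper leaves implicit: the paper gives no proof beyond the sentence ``Because the rank $n$ of the output $\CVP_p$ instance matches the number of variables in the input SAT formula, we immediately get the following corollary.'' Your contradiction argument via Theorem~\ref{thm:isolating_implies_reduction}, together with the monotonicity-in-$k$ observation to align the $k$ guaranteed by (W-Max-SAT-)SETH with some $k\in K$, and the $\eps' = \eps/2$ slack to absorb the polynomial reduction overhead, is exactly the intended reasoning.
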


It is easy to construct a (degenerate) family of isolating parallelepipeds for $p = 1$, and therefore we get hardness of $\CVP_1$ as a simple corollary. (See Figure~\ref{fig:l2-parallelpiped}.)

\begin{corollary}
For every constant $\eps > 0$ there is no $2^{(1 - \eps)n}$-time algorithm for $\CVP_1$ assuming W-Max-SAT-SETH, and in particular there is no $2^{(1 - \eps)n}$-time algorithm for $\CVP_1$ assuming SETH.
\label{cor:cvp1-seth-hardness}
\end{corollary}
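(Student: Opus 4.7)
The plan is to invoke Corollary~\ref{cor:seth-hardness} with $p=1$, so all I need to do is exhibit a computable $(1,k)$-isolating parallelepiped for every integer $k \geq 2$. For this I would use the degenerate construction sketched in the right half of Figure~\ref{fig:l2-parallelpiped}: take ambient dimension $d^* = 2$, let $V \in \R^{2 \times k}$ be the matrix all of whose columns equal $\vec{v} := \tfrac{1}{k-1}(1,1)^T$, and set $\vec{t}^* := \tfrac{1}{k-1}(1, k)^T$.

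The key observation that makes the analysis trivial is that $V\vec{x}$ depends only on the Hamming weight $s$ of $\vec{x} \in \{0,1\}^k$; in fact $V\vec{x} = \tfrac{s}{k-1}(1,1)^T$. So the two conditions of Definition~\ref{def:piped} reduce to checking a one-parameter family of identities indexed by $s \in \{0, 1, \dots, k\}$.

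The verification then splits into two cases. For $s \in \{1, \dots, k\}$, the scalar $\tfrac{s}{k-1}$ lies in the closed interval between the two coordinates of $\vec{t}^*$, so $\|V\vec{x} - \vec{t}^*\|_1$ collapses to the $\ell_1$ distance between those two coordinates on the line, which we chose to equal $1$. For $s = 0$ we have $V\vec{x} = \vec{0}$ and $\|\vec{t}^*\|_1 = (k+1)/(k-1)$, which is strictly greater than $1$ whenever $k \geq 2$. Hence $V$ and $\vec{t}^*$ form a valid $(1,k)$-isolating parallelepiped, and they are plainly computable in time $\poly(k)$.

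Applying Corollary~\ref{cor:seth-hardness} to this family of parallelepipeds then immediately yields the conclusion. There is really no substantive obstacle here: all of the actual reduction work has been absorbed into Theorem~\ref{thm:isolating_implies_reduction}, and this corollary merely records the simplest possible isolating parallelepiped for the $\ell_1$ norm. The only mild subtlety is that this parallelepiped is genuinely degenerate (rank one, since all columns of $V$ are parallel), which is fine for Definition~\ref{def:piped} but explains why the same idea does not generalize to $p > 1$, where the $\ell_p$ distance from a point to two points on a line does not remain constant as the point moves between them.
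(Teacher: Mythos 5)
Your proposal is correct and is essentially identical to the paper's proof: the same degenerate rank-one parallelepiped with all columns equal to $\tfrac{1}{k-1}(1,1)^T$ and $\vec{t}^*$ having coordinates $\tfrac{1}{k-1}$ and $\tfrac{k}{k-1}$ (the paper writes it as $\tfrac{1}{k-1}(k,1)^T$, a harmless coordinate swap), followed by an appeal to Corollary~\ref{cor:seth-hardness}. Your Hamming-weight verification matches the paper's one-line check, just spelled out in slightly more detail.
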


\begin{proof}
Let $k \in \Z^+$, let $V = [\vec{v}_1, \ldots, \vec{v}_k]$ with $\vec{v}_1 = \cdots = \vec{v}_k := \frac{1}{k-1} (1, 1)^T \in \R^2$, and let $\vec{t}^* := \frac{1}{k-1} (k, 1)^T \in \R^2$. Then, $\norm{V\vec{x} - \vec{t}^*}_1 = 1$ for every $\vec{x} \in \set{0, 1}^k \setminus \{\vec0_k\}$, and $\norm{\vec{t}^*}_1 = (k + 1)/(k - 1) > 1$. The result follows by Corollary~\ref{cor:seth-hardness}.
\end{proof}

\section{Finding isolating parallelepipeds}
\label{sec:SETH}
We now show how to find a $(p,k)$-isolating parallelepiped given by $V \in \R^{d^* \times k}$ and $\vec{t}^* \in \R^{d^*}$ as in Definition~\ref{def:piped}. We will first show a general strategy for trying to find such an object for any $p \geq 1$ and integer $k \geq 2$. In Section~\ref{sec:odd_p}, we will show how to successfully implement this strategy in the case when $p$ is an odd integer. In Section~\ref{sec:limitations}, we show that $(p,k)$-isolating parallelepipeds do not exist for even integers $p \leq k-1$. Finally, in Section~\ref{sec:all_p} we show how to mostly get around this issue in order to find $(p,k)$-isolating parallelepipeds for ``almost all'' $p \geq 1$.

It will actually be convenient to find a slightly different object that ``works with $\{\pm 1\}^k$ instead of $\{0,1\}^k$.'' We observe below that this suffices.

\begin{lemma}
	\label{lem:pmone}
There is an efficient algorithm that takes as input a matrix $V \in \R^{d^* \times k}$ and vector $\vec{t}^* \in \R^{d^*}$ such that 
	$
	\|V \vec{y} - \vec{t}^*\|_p = 1
	$
	for any $\vec{y} \in \{\pm 1\}^k \setminus \{-\vec{1}_k\}$ and 
	$
	\|-V \vec{1}_k-\vec{t}^*\|_p > 1
	\; ,
	$
and outputs a matrix $V' \in \R^{d^* \times k}$ and vector $(\vec{t}^*)' \in \R^{d^*}$ that form a $(p,k)$-isolating parallelepiped.
\end{lemma}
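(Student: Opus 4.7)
The plan is to apply the standard affine change of variables $\vec{y} = 2\vec{x} - \vec{1}_k$, which is a bijection between $\{0,1\}^k$ and $\{\pm 1\}^k$ carrying $\vec{0}_k$ to $-\vec{1}_k$ (so the ``distinguished'' vertex in each setting is the one we need it to be).

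Concretely, given $V$ and $\vec{t}^*$ as in the hypothesis, the algorithm outputs $V' := 2V$ and $(\vec{t}^*)' := V\vec{1}_k + \vec{t}^*$. For any $\vec{x} \in \{0,1\}^k$, setting $\vec{y} := 2\vec{x} - \vec{1}_k$ yields
\[
V'\vec{x} - (\vec{t}^*)' \;=\; 2V\vec{x} - V\vec{1}_k - \vec{t}^* \;=\; V\vec{y} - \vec{t}^*.
\]
So for $\vec{x} \in \{0,1\}^k \setminus \{\vec{0}_k\}$, the corresponding $\vec{y} \in \{\pm 1\}^k \setminus \{-\vec{1}_k\}$ satisfies $\|V'\vec{x} - (\vec{t}^*)'\|_p = \|V\vec{y} - \vec{t}^*\|_p = 1$ by the hypothesis. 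For $\vec{x} = \vec{0}_k$ we get $-(\vec{t}^*)' = -V\vec{1}_k - \vec{t}^*$, and since $\|\cdot\|_p$ is invariant under negation, $\|(\vec{t}^*)'\|_p = \|-V\vec{1}_k - \vec{t}^*\|_p > 1$. Hence $V'$ and $(\vec{t}^*)'$ form a $(p,k)$-isolating parallelepiped in the sense of Definition~\ref{def:piped}.

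The algorithm is clearly efficient (it performs a single matrix-vector sum and a scalar multiplication), so there is really no obstacle; the ``hard'' content of the lemma is conceptual, namely verifying that the affine map $\vec{x} \mapsto 2\vec{x} - \vec{1}_k$ sends the distinguished vertex $\vec{0}_k$ of the $\{0,1\}$-cube to the distinguished vertex $-\vec{1}_k$ of the $\{\pm 1\}$-cube and that the resulting linear reparametrization of the parallelepiped is captured exactly by $V' = 2V$ and $(\vec{t}^*)' = V\vec{1}_k + \vec{t}^*$.
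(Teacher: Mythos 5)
Your proposal is correct and uses exactly the same construction and argument as the paper: the paper also sets $V' := 2V$ and $(\vec{t}^*)' := V\vec{1}_k + \vec{t}^*$ and verifies $\|V'\vec{x} - (\vec{t}^*)'\|_p = \|V\vec{y} - \vec{t}^*\|_p$ via the affine map $\vec{x} \mapsto 2\vec{x} - \vec{1}_k$. Your explicit handling of the $\vec{x} = \vec{0}_k$ case via negation-invariance of the norm is a fine (slightly more spelled-out) way of covering what the paper leaves implicit.
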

\begin{proof}
Define $V':=2V$ and $(\vec{t}^*)'=V\vec{1}_k+\vec{t}^*$. Now consider the affine transformation $f\colon \R^k \to \R^k$ defined by $f(\vec{x}):=(2\vec{x}-\vec{1}_k)$, which maps $\{0,1\}^k$ to $\{\pm1\}^k$ and $\vec{0}_k$ to $-\vec{1}_k$.
Then, for $\vec{x}\in\{0,1\}^k$ and $\vec{y}=f(\vec{x})=2\vec{x}-\vec{1}_k\in\{\pm1\}^k$, we have
\begin{equation*}
\|V'\vec{x}-(\vec{t}^*)'\|_p
=\Big\|V'\frac{\vec{y}+\vec{1}_k}{2}-(\vec{t}^*)'\Big\|_p
=\Big\|V'\frac{\vec{y}}{2}+V'\frac{\vec{1}_k}{2}-(\vec{t}^*)' \Big\|_p
=\|V\vec{y}-\vec{t}^* \|_p
\; ,
\end{equation*}
as needed.
\end{proof}

Intuitively, a ``reasonable'' matrix $V$ should act symmetrically on bit strings. I.e., if $\vec{y}, \vec{y}' \in \{\pm 1 \}^k$ have the same number of positive entries, then $V \vec{y}$ should be a permutation of $V \vec{y}'$. This implies that any row of $V$ must be accompanied by all possible permutations of this row. If we further require that each row in $V$ is $\alpha \cdot \vec{v}$ for some $\vec{v} \in \{\pm 1\}^k$ and $\alpha \in \R$, then we arrive at a very general construction that is still possible to analyze.

For weights $\alpha_0, \ldots, \alpha_k \geq 0$, we define $V := V(\alpha_0, \ldots, \alpha_k) \in \R^{2^k \times k}$ as follows. The rows of $V$ are indexed by the strings $\{\pm 1\}^k$, and row $\vec{v}$ is $\alpha_{k-|\vec{v}|}^{1/p} \vec{v}^T$, where 
\[|\vec{v}|
:= \{ \# \text{ of positive entries in \vec{v}}\}
\] is the number of positive entries in $\vec{v}$. For a shift $t^* \in \R$, we set $\vec{t}^* := \vec{t}^*(\alpha_0, \ldots, \alpha_k, t^*)  \in \R^{2^k}$ such that the coordinate of $\vec{t}^*$ corresponding to $\vec{v}$ is $\alpha_{k-|\vec{v}|}^{1/p} t^*$. (Figure~\ref{fig:ell_3_k=3} is an example of this construction. In particular, it shows $V(\alpha_0, \alpha_1, \alpha_2, \alpha_3)$ with $\alpha_0 = 12$, $\alpha_1 = \alpha_2 = 1$ and $\alpha_3 = 0$ and $\vec{t}^*(\alpha_0, \alpha_1,\alpha_2,\alpha_3,t^*)$ with $t^*=2$, where we have omitted the last row, whose weight is zero.)

In what follows, we will use the binomial coefficient $\binom{i}{j}$ extensively, and we adopt the convention that $\binom{i}{j} = 0$ if $j > i$ or $j < 0$ or $j \notin \Z$.

\begin{lemma}
	\label{lem:alpha_norm_thing}
	For any $\vec{y} \in \{\pm 1\}^k$, weights $\alpha_0, \ldots, \alpha_k \geq 0$, and shift $t^* \in \R$,
	\begin{equation*}
	\| V\vec{y} - \vec{t}^*\|_p^p = \sum_{j=0}^k \alpha_{k-j}  \sum_{\ell=0}^{k} \binom{|\vec{y}|}{\ell} \binom{k-|\vec{y}|}{j - \ell}  \cdot \big|2|\vec{y}| + 2j - k-4\ell  - t^*\big|^p
	\; ,
	\end{equation*}
	where $V := V(\alpha_0, \ldots, \alpha_k) \in \R^{2^k \times k}$ and $\vec{t}^* := \vec{t}^*(\alpha_0, \ldots, \alpha_k, t^*) $ as above.
	
	In other words, $\| V\vec{y} - \vec{t}^*\|_p^p$ depends only on $|\vec{y}|$, and if $\vec{w} \in (\R^{\geq 0})^{k+1}$ is the vector such that $w_j = \| V\vec{y}' - \vec{t}^*\|_p^p$ for all $\vec{y}' \in \{\pm 1\}$ with $|\vec{y}'| = j$, then
	\[
	\vec{w} = M_k(p,t^*) \left(
		 \begin{array}{c}
		 \alpha_0 \\
		 \alpha_1\\
		 \vdots \\
		 \alpha_ k
		 \end{array}
	\right)
	\; ,
	\]
	where $M_k(p,t^*) \in \R^{(k+1) \times (k+1)}$ is given by 
	\begin{equation}
	\label{eq:M_def}
	M_k(p,t^*)_{i,j} :=  \sum_{\ell=0}^{k} \binom{i}{\ell} \binom{k-i}{j-\ell}  \cdot \big|2i + 2j -k-4\ell - t^*\big|^p
	\; .
	\end{equation}
\end{lemma}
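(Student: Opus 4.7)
The plan is a direct computation: expand $\|V\vec{y} - \vec{t}^*\|_p^p$ row by row, group the terms by $|\vec{v}|$, and count the rows within each group according to their sign-overlap pattern with $\vec{y}$.

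First, I would unpack the definitions of $V$ and $\vec{t}^*$. By construction, the coordinate of $V\vec{y} - \vec{t}^*$ indexed by $\vec{v} \in \{\pm 1\}^k$ equals $\alpha_{k-|\vec{v}|}^{1/p}(\vec{v}^T \vec{y} - t^*)$, so
\begin{equation*}
\|V\vec{y} - \vec{t}^*\|_p^p \;=\; \sum_{\vec{v} \in \{\pm 1\}^k} \alpha_{k-|\vec{v}|} \cdot |\vec{v}^T \vec{y} - t^*|^p.
\end{equation*}
Grouping by the parameter $j := k - |\vec{v}|$ (the number of $-1$ entries of $\vec{v}$) already displays the norm as a nonnegative linear combination of $\alpha_0, \dots, \alpha_k$, namely $\sum_{j=0}^k \alpha_j \sum_{\vec{v}\,:\,|\vec{v}| = k-j} |\vec{v}^T\vec{y} - t^*|^p$.

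Next, I would evaluate the inner sum by parametrizing the allowed $\vec{v}$'s. Fix $\vec{y}$ with $|\vec{y}| = i$, and for each $\vec{v}$ with $|\vec{v}| = k - j$ let $\ell := |\{n : v_n = -1 \text{ and } y_n = +1\}|$. A short count shows that the four sign-agreement classes of $(v_n, y_n)$ have cardinalities $(i-\ell,\ k-i-j+\ell,\ \ell,\ j-\ell)$ for the patterns $(+,+), (+,-), (-,+), (-,-)$ respectively, so the number of $\vec{v}$'s compatible with a given $\ell$ is exactly $\binom{i}{\ell}\binom{k-i}{j-\ell}$. Summing the signed contributions from the four classes yields
\begin{equation*}
\vec{v}^T \vec{y} \;=\; (i - \ell) - (k - i - j + \ell) - \ell + (j - \ell) \;=\; 2i + 2j - k - 4\ell,
\end{equation*}
so $|\vec{v}^T\vec{y} - t^*|^p = |2|\vec{y}| + 2j - k - 4\ell - t^*|^p$. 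Substituting back gives the scalar formula of the lemma.

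Finally, since the resulting expression depends on $\vec{y}$ only through $i = |\vec{y}|$, the vector $\vec{w}$ is well-defined, and the derivation exhibits $w_i$ as a nonnegative linear combination of $\alpha_0, \dots, \alpha_k$ whose coefficients are exactly $M_k(p,t^*)_{i,j}$ from Eq.~\eqref{eq:M_def}; the matrix identity $\vec{w} = M_k(p,t^*)(\alpha_0, \dots, \alpha_k)^T$ is then immediate. The only real obstacle is bookkeeping: one must fix a consistent convention that relates $\ell$, $j$, and $|\vec{v}|$ so that the binomial coefficients, the range of $\ell$, and the sign inside $|\cdot|^p$ simultaneously line up with Eq.~\eqref{eq:M_def}. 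There are no conceptual subtleties---the whole argument is elementary counting plus a signed sum.
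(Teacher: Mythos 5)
Your proof is correct and follows essentially the same route as the paper's: expand $\|V\vec{y}-\vec{t}^*\|_p^p$ over the $2^k$ rows, group rows by the number of negative entries of $\vec{v}$, parametrize each group by the overlap $\ell$ with the positive entries of $\vec{y}$, and compute $\langle \vec{v},\vec{y}\rangle = 2|\vec{y}|+2j-k-4\ell$ via the four sign classes, with $\binom{i}{\ell}\binom{k-i}{j-\ell}$ rows per class. (Note that your grouping---exactly as in the paper's own proof---yields the coefficient $\alpha_j$ rather than the $\alpha_{k-j}$ written in the lemma's first display; that indexing is the one consistent with the matrix identity $\vec{w}=M_k(p,t^*)(\alpha_0,\dots,\alpha_k)^T$, so your derivation establishes the statement in the form actually used later.)
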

\begin{proof}
	We have
	\[
	\| V\vec{y} - \vec{t}^*\|_p^p = \sum_{j=0}^k \alpha_j  \sum_{|\vec{v}| = k-j} \big|\inner{\vec{v}, \vec{y}} - t^*\big|^p
	\; .
	\]
	Notice that $\inner{\vec{v}, \vec{y}}$ depends only on how many of the $j$ negative entries of $\vec{v}$ align with the positive entries of $\vec{y}$.  In particular,
	\begin{align*}
	\sum_{|\vec{v}| = k-j} \big|\inner{\vec{v}, \vec{y}} - t^*\big|^p &= \sum_{\ell=0}^{k} \binom{|\vec{y}|}{\ell} \binom{k-|\vec{y}|}{j-\ell}  \cdot \big|-\ell + (|\vec{y}|-\ell) + (j-\ell) - (k-|\vec{y}|-j+\ell)  - t^*\big|^p \\
	&= \sum_{\ell=0}^{k} \binom{|\vec{y}|}{\ell} \binom{k-|\vec{y}|}{j-\ell}  \cdot \big|2|\vec{y}| +2j -k - 4\ell - t^*\big|^p
	\; ,
	\end{align*}
	as needed.
\end{proof}

\begin{lemma}
	\label{lem:stochastic}
	For any $t^* \in \R$, the matrix $M_k(p,t^*)$ defined in Eq.~\eqref{eq:M_def} is stochastic. I.e., 
	$
	M_k(p,t^*) \vec{1}_{k+1}
	= 
	\lambda(t^*) \vec{1}_{k+1}$
	for some $\lambda(t^*) \in \R$. Furthermore, $\lambda(t^*) > 0$.
\end{lemma}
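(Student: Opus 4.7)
The plan is to apply Lemma~\ref{lem:alpha_norm_thing} with the uniform weight vector $\alpha_0 = \alpha_1 = \cdots = \alpha_k = 1$, which turns the computation of $M_k(p,t^*)\vec{1}_{k+1}$ into a particularly symmetric quantity. With these weights, the matrix $V := V(1,\ldots,1) \in \R^{2^k \times k}$ simply has each element of $\{\pm 1\}^k$ appearing once as a row (since $\alpha_{k-|\vec{v}|}^{1/p} = 1$ for every $\vec{v}$), and the corresponding shift vector becomes $\vec{t}^{**} := t^* \cdot \vec{1}_{2^k}$. Lemma~\ref{lem:alpha_norm_thing} then identifies the $j$-th coordinate of $M_k(p,t^*)\vec{1}_{k+1}$ with the value $\|V\vec{y} - \vec{t}^{**}\|_p^p$ for any $\vec{y} \in \{\pm 1\}^k$ with $|\vec{y}| = j$.

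Expanding the norm coordinate-wise gives the row sum
\[
    \|V\vec{y} - \vec{t}^{**}\|_p^p \;=\; \sum_{\vec{v} \in \{\pm 1\}^k} \bigl|\inner{\vec{v}, \vec{y}} - t^*\bigr|^p.
\]
The key observation is that this sum is independent of $\vec{y} \in \{\pm 1\}^k$: for each fixed $\vec{y}$, the substitution $\vec{v}' := \vec{v} \odot \vec{y}$ (coordinate-wise product) is a bijection of $\{\pm 1\}^k$, and $\inner{\vec{v}, \vec{y}} = \sum_i v_i y_i = \sum_i v'_i$. Therefore the common value is
\[
    \lambda(t^*) \;:=\; \sum_{\vec{v} \in \{\pm 1\}^k} \Bigl|\sum_{i=1}^k v_i - t^*\Bigr|^p,
\]
and every coordinate of $M_k(p,t^*)\vec{1}_{k+1}$ equals $\lambda(t^*)$, establishing the stochastic property.

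For the positivity claim, $\lambda(t^*)$ is manifestly a sum of nonnegative terms, so vanishing would force $\sum_i v_i = t^*$ for \emph{every} $\vec{v} \in \{\pm 1\}^k$. But $\sum_i v_i$ takes the distinct values $+k$ at $\vec{v} = \vec{1}_k$ and $-k$ at $\vec{v} = -\vec{1}_k$ (which differ since $k \geq 2$), so at most one term in the sum can be zero, forcing $\lambda(t^*) > 0$. There is no real obstacle here; the entire content of the lemma is the recognition that, via Lemma~\ref{lem:alpha_norm_thing}, the row sums of $M_k(p,t^*)$ correspond to a sum that is invariant under the natural diagonal $\{\pm 1\}^k$ action on the vertices of the hypercube.
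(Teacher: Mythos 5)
Your proof is correct, but it takes a genuinely different route from the paper's. The paper works directly with the entries of $M_k(p,t^*)$: it re-indexes the double sum $\sum_{j}\sum_{\ell}\binom{i}{\ell}\binom{k-i}{j-\ell}\,|2i+2j-k-4\ell-t^*|^p$ via $r:=(i+j)/2-\ell$ and then invokes Vandermonde's identity to show the result is independent of the row index $i$ (positivity is simply asserted as clear). You instead go back through Lemma~\ref{lem:alpha_norm_thing} with uniform weights $\alpha_0=\cdots=\alpha_k=1$, so the $j$th row sum becomes the full hypercube sum $\sum_{\vec{v}\in\{\pm1\}^k}|\langle \vec{v},\vec{y}\rangle-t^*|^p$ for any $\vec{y}$ with $|\vec{y}|=j$, and invariance in $\vec{y}$ follows from the coordinate-wise sign-flip bijection $\vec{v}\mapsto\vec{v}\odot\vec{y}$. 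This is more conceptual, avoids Vandermonde entirely, and gives the explicit closed form $\lambda(t^*)=\sum_{s=0}^{k}\binom{k}{s}\,|2s-k-t^*|^p$ together with a genuinely argued (rather than asserted) positivity; the paper's computation, on the other hand, is self-contained at the level of Eq.~\eqref{eq:M_def} and does not route through the parallelepiped interpretation. One small imprecision in your positivity step: the claim that ``at most one term in the sum can be zero'' is false in general (e.g.\ for $t^*=0$ all $\vec{v}$ with zero coordinate sum give vanishing terms); what your argument actually establishes, and all that is needed, is that not \emph{every} term can vanish, since the terms at $\vec{v}=\vec{1}_k$ and $\vec{v}=-\vec{1}_k$ cannot both be zero. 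With that phrasing fixed, the proof is complete.
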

\begin{proof}
	We rearrange the sum corresponding to the $i$th entry of $M_k(p,t^*) \vec{1}_{k+1}$, setting $r := (i+j)/2 -\ell$ to obtain
		\begin{align*}
		 \sum_{j=0}^k \sum_{\ell=0}^{k} \binom{i}{\ell} \binom{k-i}{j-\ell}  \cdot \big|2i + 2j -k - 4\ell - t^*\big|^p 
		 &= \sum_{r} |4r-k-t^*|^p \sum_{j=0}^k \binom{i}{(i+j)/2 - r} \binom{k-i}{r + (j-i)/2}\\
		 &= \sum_{r} |4r-k-t^*|^p \sum_{j=0}^k \binom{i}{r + (i-j)/2} \binom{k-i}{r + (j-i)/2}
		 \; .
		 \end{align*}
		Finally, we recall Vandermonde's identity, which says that 
		\[
		 \sum_{j=0}^k \binom{i}{r + (i-j)/2} \binom{k-i}{r + (j-i)/2} = \binom{k}{r}
		\; .
		\]
		Therefore, the summation does not depend on $i$ (and is clearly positive), as needed.
\end{proof}

Lemma~\ref{lem:stochastic} tells us that for any $t^* \in \R$,
$
M_k(p,t^*) (
\vec{1}_{k+1})/\lambda = \vec{1}_{k+1}
$ for some $\lambda > 0$. We wish to show that, for some $t^* \in \R$, we can find $\alpha_0,\ldots, \alpha_k \geq 0$ such that 
$M_k(p,t^*) (
\alpha_0,
\alpha_1,
\ldots,
\alpha_ k
)^T = \vec{1}_{k+1} + \eps \vec{e}_0$ for some $\eps > 0$, where $\vec{e}_0 := (1,0,\ldots, 0)^T$. In order to do this, it suffices to show that $M_k(p,t^*)$ is invertible. Then, we can take 
\[
(\alpha_0, \alpha_1,\ldots, \alpha_k)^T := \vec{1}_{k+1}/\lambda + \eps M_k(p,t^*)^{-1} \vec{e}_0
\; .
\]
If $\eps := (\lambda \cdot \|M_k(p,t^*)^{-1} \vec{e}_0\|_{\infty})^{-1} > 0$, then the $\alpha_i$ must be non-negative. We make this formal in the next proposition.

\begin{proposition}
	\label{prop:invertible_suffices}
	There is an efficient algorithm that takes as input any $p \geq 1$, an integer $k \geq 2$, and $t^* \in \R$ such that $\det(M_k(p,t^*)) \neq 0$, where $M_k(p,t^*)$ is defined as in Eq.~\eqref{eq:M_def} and outputs $V \in \R^{2^k\times k}$ and $\vec{t}^* \in \R^{2^k}$ that define a $(p,k)$-isolating parallelepiped.
\end{proposition}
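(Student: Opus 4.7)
The plan is to carry out in detail the strategy sketched in the paragraph immediately preceding the proposition statement. By Lemma~\ref{lem:pmone}, it suffices to produce $V \in \R^{2^k \times k}$ and $\vec{t}^* \in \R^{2^k}$ with $\|V\vec{y} - \vec{t}^*\|_p = 1$ for every $\vec{y} \in \{\pm 1\}^k \setminus \{-\vec{1}_k\}$ and $\|{-V\vec{1}_k} - \vec{t}^*\|_p > 1$, since Lemma~\ref{lem:pmone} converts any such pair efficiently into a $(p,k)$-isolating parallelepiped. I will search for this pair inside the parameterized family $V(\alpha_0,\ldots,\alpha_k)$, $\vec{t}^*(\alpha_0,\ldots,\alpha_k,t^*)$ defined just before Lemma~\ref{lem:alpha_norm_thing}, with the supplied $t^*$ fixed and non-negative weights $\alpha_0,\ldots,\alpha_k \geq 0$ to be chosen.

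By Lemma~\ref{lem:alpha_norm_thing}, the quantity $\|V\vec{y}' - \vec{t}^*\|_p^p$ depends only on $j := |\vec{y}'|$, and the vector $\vec{w} \in \R^{k+1}$ of these values satisfies $\vec{w} = M_k(p,t^*) \cdot (\alpha_0,\ldots,\alpha_k)^T$. The condition to be met translates to $\vec{w} = \vec{1}_{k+1} + \eps \vec{e}_0$ for some $\eps > 0$. Because $M_k(p,t^*)$ is invertible by hypothesis, this linear system has a unique solution
\[
\vec{\alpha} \;=\; M_k(p,t^*)^{-1}\vec{1}_{k+1} \;+\; \eps\, M_k(p,t^*)^{-1}\vec{e}_0.
\]
By Lemma~\ref{lem:stochastic}, $M_k(p,t^*)\vec{1}_{k+1} = \lambda(t^*)\vec{1}_{k+1}$ with $\lambda(t^*) > 0$, so $M_k(p,t^*)^{-1}\vec{1}_{k+1} = \vec{1}_{k+1}/\lambda(t^*)$ has strictly positive entries.

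The only delicate point is choosing $\eps$ so that every entry of $\vec{\alpha}$ remains non-negative, since the perturbation $\eps\, M_k(p,t^*)^{-1}\vec{e}_0$ may have negative coordinates. This is straightforward: take $\eps := 1/(2\lambda(t^*) \cdot \max(1,\|M_k(p,t^*)^{-1}\vec{e}_0\|_\infty))$, so that the magnitude of the perturbation on each coordinate is at most $1/(2\lambda(t^*))$ and hence dominated by the positive term $1/\lambda(t^*)$. I expect this bookkeeping to be the only real obstacle: one must verify that the resulting $\alpha_j$'s are admissible (non-negative) and yield $\eps > 0$, rather than accidentally forcing $\vec{w} = \vec{1}_{k+1}$ exactly, which would fail the strict inequality $\|\vec{t}^*\|_p > 1$ in Definition~\ref{def:piped}.

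The final step assembles the output. With $\vec{\alpha}$ in hand, I form $V := V(\alpha_0,\ldots,\alpha_k)$ and $\vec{t}^* := \vec{t}^*(\alpha_0,\ldots,\alpha_k,t^*)$, which by Lemma~\ref{lem:alpha_norm_thing} satisfy $\|V\vec{y} - \vec{t}^*\|_p^p = 1$ for all $\vec{y} \in \{\pm 1\}^k$ with $|\vec{y}| \geq 1$ and $\|{-V\vec{1}_k} - \vec{t}^*\|_p^p = 1 + \eps > 1$ (corresponding to $j = 0$), and then apply Lemma~\ref{lem:pmone} to convert this to a genuine $(p,k)$-isolating parallelepiped. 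All operations (computing $M_k(p,t^*)$ and $\lambda(t^*)$, solving the linear system, and assembling the $2^k \times k$ matrix $V$ together with $\vec{t}^*$) can be performed in time polynomial in $2^k$ and the bit length of the input, giving the claimed efficient algorithm.
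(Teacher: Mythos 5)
Your proposal is correct and follows essentially the same route as the paper's proof: reduce to the $\{\pm 1\}^k$ formulation via Lemma~\ref{lem:pmone}, use Lemma~\ref{lem:alpha_norm_thing} and the stochasticity from Lemma~\ref{lem:stochastic} to solve $M_k(p,t^*)\vec{\alpha} = \vec{1}_{k+1} + \eps\vec{e}_0$, and choose $\eps > 0$ small enough that the weights stay non-negative. The only difference is your slightly more conservative choice of $\eps$ (the paper takes $\eps = (\lambda\,\|M_k(p,t^*)^{-1}\vec{e}_0\|_\infty)^{-1}$, allowing some $\alpha_i = 0$, while yours keeps them strictly positive), which is immaterial.
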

\begin{proof}
	By Lemma~\ref{lem:pmone}, it suffices to construct a matrix that works for $\vec{y} \in \{\pm 1\}^k$.
	The algorithm behaves as follows on input $k \geq 2$ and $p \geq 1$ and $t^* \in \R$. By Lemma~\ref{lem:stochastic}, $M_k(p,t^*) \vec{1}_{k+1} = \lambda \vec{1}_{k+1}$ for some $\lambda > 0$. Since we are promised that $\det(M_k(p,t^*)) \neq 0$, we see that $M_k(p,t^*)$ is invertible. The algorithm therefore sets
	\begin{equation}
	\label{eq:choose_alpha}
	(\alpha_0,\ldots, \alpha_k)^T := \vec{1}_{k+1}/\lambda + \eps M_k(p,t^*)^{-1}\vec{e}_0
	\; ,
	\end{equation}
	where $\eps := (\lambda\cdot \|M_k(p,t^*)^{-1}\vec{e}_0\|_{\infty})^{-1} > 0$ is chosen to be small enough such that the $\alpha_i$ are all non-negative. Finally, it outputs $V := V(\alpha_0,\ldots,\alpha_k)$ and $\vec{t}^* := \vec{t}^*(\alpha_0,\ldots, \alpha_k, t^*)$ as defined above.
	
	To prove correctness, we note that $V$ and $\vec{t}^*$ have the desired property. Indeed, it follows from the definition of $M_k(p,t^*)$ in Lemma~\ref{lem:alpha_norm_thing} that 
	$\|V \vec{y} - \vec{t}^*\|_p^p$ is the $j$th coordinate of $\vec{w} := M_k(p,t^*) (\alpha_0, \ldots, \alpha_k)^T$, where $j := |\vec{y}|$. But, by Eq.~\eqref{eq:choose_alpha}, we see that the $j$th coordinate of $\vec{w}$ is $1 + \eps$ if $j = 0$ and is $1$ otherwise, as needed.
\end{proof}

\subsection{Finishing the proof for odd integer \texorpdfstring{$p$}{p}}
\label{sec:odd_p}

We now handle the case when $p \geq 1$ is an odd integer. Notice that, if $p \geq 1$ is an integer, then $\det(M_k(p,t^*))$ is some piecewise combination of polynomials of degree at most $(k+1)p$ in $t^*$. In particular, it is a polynomial in $t^*$ if we restrict our attention to the interval $t^* \in [-k,-k+2]$. We wish to argue that this is not the zero polynomial when $p$ is odd. To prove this, it suffices to show that the coefficient of $(t^*)^{(k+1)p}$ is non-zero, which we do below by studying a matrix whose determinant is this coefficient (when $p$ is odd).

We first show an easy claim concerning matrices that can be written as sums of the identity plus a certain kind of rank-one matrix.

\begin{claim}
	\label{clm:rank_one_plus_ID}
	For any matrix $A \in \R^{d \times d}$ with constant columns given by
	$A_{i,j} = a_j$ for some $a_0,\ldots, a_{d-1} \in \R$ and any $\lambda \in \R$, 
	\[
	\det(A - \lambda I_d) = (-\lambda)^{d-1} \Big( \sum_j a_j - \lambda\Big)
	\; .
	\]
\end{claim}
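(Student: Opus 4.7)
The plan is to observe that $A$ has the rank-one factorization $A = \vec{1}_d \vec{a}^T$, where $\vec{a} = (a_0,\ldots,a_{d-1})^T$, since every row of $A$ equals $\vec{a}^T$. From this, I can read off the spectrum of $A$ directly: any vector orthogonal to $\vec{a}$ lies in the kernel of $A$, contributing an eigenvalue $0$ with geometric (hence algebraic) multiplicity $d-1$, while $A\vec{1}_d = (\vec{a}^T \vec{1}_d)\vec{1}_d = (\sum_j a_j)\vec{1}_d$, giving the remaining eigenvalue $\sum_j a_j$.

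Having the eigenvalues of $A$, the eigenvalues of $A - \lambda I_d$ are obtained by shifting each by $-\lambda$, yielding $\sum_j a_j - \lambda$ once and $-\lambda$ with multiplicity $d-1$. The determinant is the product of eigenvalues, so
\[
\det(A - \lambda I_d) = \Big(\sum_j a_j - \lambda\Big) \cdot (-\lambda)^{d-1},
\]
which is exactly the claimed identity.

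The argument is straightforward linear algebra, so I do not anticipate a genuine obstacle. The only minor subtlety is justifying multiplicity correctly when $\sum_j a_j = 0$ (in which case $A$ is not diagonalizable in general), but this is easy to sidestep: rather than invoking eigenvalue multiplicities, I can alternatively apply the matrix determinant lemma $\det(\mu I + \vec{u}\vec{v}^T) = \mu^{d-1}(\mu + \vec{v}^T \vec{u})$ with $\mu = -\lambda$, $\vec{u} = \vec{1}_d$, $\vec{v} = \vec{a}$, which gives the formula unconditionally. If a fully elementary proof is preferred, one can subtract the first row from every other row of $A - \lambda I_d$, producing a matrix with a $\lambda I_{d-1}$ block in the lower-right corner and zeros below the first row in every column but the first; expanding by cofactors down to the upper-left entry then yields the stated product directly.
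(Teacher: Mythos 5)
Your proof is correct and takes essentially the same route as the paper, whose one-line argument is simply that $A$ is a rank-one matrix whose only non-zero eigenvalue is $\sum_j a_j$, so its characteristic polynomial is $\det(A - \lambda I_d) = (-\lambda)^{d-1}\big(\sum_j a_j - \lambda\big)$. Your additional handling of the degenerate case $\sum_j a_j = 0$ (via the matrix determinant lemma $\det(\mu I_d + \vec{u}\vec{v}^T) = \mu^{d-1}(\mu + \vec{v}^T\vec{u})$ or the elementary row-reduction computation) tightens a multiplicity point that the paper's terse proof glosses over, but it is the same underlying spectral observation.
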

\begin{proof}
	Notice that $A$ is a rank-one stochastic matrix with one non-zero eigenvalue given by $\sum a_j$. Therefore, the characteristic polynomial of $A$ is $\det(A - \lambda I_d) = (-\lambda)^{d-1} (\sum a_j - \lambda )$, as needed.
\end{proof}

\begin{lemma}
	\label{lem:top_coeff}
	For an integer $k \geq 1$ and an odd integer $p \geq 1$, the function $t^* \mapsto \det(M_k(p,t^*))$, where $M_k(p,t^*)$ is defined as in Eq.~\eqref{eq:M_def}, is a polynomial of degree at most $(k+1)p$ when restricted to the interval $t^* \in [-k,-k+2]$. Furthermore, the coefficient of $(t^*)^{(k+1)p}$ of this polynomial is exactly $2^k(2-2^k)$.
	
	In particular, $t^* \mapsto \det(M_k(p,t^*))$ is a non-zero polynomial of degree $(k+1)p$ on the interval $t^* \in [-k,-k+2]$ for $k \geq 2$.
\end{lemma}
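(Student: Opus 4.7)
My plan is to directly compute the leading coefficient of $\det(M_k(p, t^*))$ as a polynomial in $t^*$ on the interval $[-k, -k+2]$, reducing the question to an explicit matrix determinant that can be evaluated via Claim~\ref{clm:rank_one_plus_ID}. First I would check that on this interval each entry $M_k(p,t^*)_{i,j}$ is a genuine polynomial of degree at most $p$ in $t^*$: the argument $2i + 2j - k - 4\ell - t^*$ ranges over $[2(i+j-2\ell)-2,\ 2(i+j-2\ell)]$ as $t^*$ varies over $[-k,-k+2]$, so its sign is $+1$ throughout when $i+j \geq 2\ell + 1$ and is $-1$ throughout when $i+j \leq 2\ell$. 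Since $p$ is odd, $|x|^p = \mathrm{sgn}(x) \cdot x^p$, so each summand is an honest polynomial of degree $p$ in $t^*$. It follows that $\det(M_k(p,t^*))$ is a polynomial of degree at most $(k+1)p$ on this interval, whose leading coefficient is $\det(C)$ for the matrix $C \in \R^{(k+1)\times(k+1)}$ whose $(i,j)$-entry is the leading coefficient of the corresponding entry of $M_k(p,t^*)$.

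Next I would identify $C$ explicitly. Because $(-1)^p = -1$ for odd $p$, the coefficient of $(t^*)^p$ in $|a-t^*|^p$ on an interval where $a-t^*$ has constant sign $\sigma \in \{\pm 1\}$ is simply $-\sigma$, so
\[
c_{i,j} = -\sum_\ell \binom{i}{\ell}\binom{k-i}{j-\ell}\,\sigma_\ell(i,j),
\]
with $\sigma_\ell(i,j) = +1$ when $i+j \geq 2\ell + 1$ and $\sigma_\ell(i,j) = -1$ when $i+j \leq 2\ell$.

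The main combinatorial step, which I expect to be the heart of the argument, is to simplify this sum to a clean closed form. I would observe that whenever the product $\binom{i}{\ell}\binom{k-i}{j-\ell}$ is nonzero one necessarily has $\ell \leq \min(i,j)$, and hence $2\ell \leq i + j$ with equality if and only if $\ell = i = j$. Consequently the only nonzero term with $\sigma_\ell(i,j) = -1$ is the lone term $\ell = i = j$ (which requires $i = j$ to begin with), at which the binomial product equals $1$. Combining this with Vandermonde's identity $\sum_\ell \binom{i}{\ell}\binom{k-i}{j-\ell} = \binom{k}{j}$ then yields
\[
c_{i,j} = -\binom{k}{j} + 2\delta_{i,j},
\]
so that $C = 2 I_{k+1} - R$, where $R \in \R^{(k+1)\times(k+1)}$ has constant columns $R_{i,j} = \binom{k}{j}$.

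Finally, I would compute $\det(C)$ by applying Claim~\ref{clm:rank_one_plus_ID} to $A = R$ with $a_j = \binom{k}{j}$ and $\lambda = 2$, which gives $\det(R - 2I_{k+1}) = (-2)^k(2^k - 2)$. Pulling out the sign from $C = -(R - 2I_{k+1})$, I would conclude
\[
\det(C) = (-1)^{k+1}\det(R - 2I_{k+1}) = (-1)^{k+1}(-2)^k(2^k-2) = 2^k(2-2^k),
\]
which is the claimed leading coefficient and is strictly negative (hence nonzero) precisely when $k \geq 2$.
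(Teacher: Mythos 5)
Your proposal is correct and follows essentially the same route as the paper: fixing the sign of each summand on $[-k,-k+2]$ via the observation that nonzero binomial products force $2\ell \le i+j$ (with equality only at $\ell=i=j$), reducing the leading coefficient to the determinant of $2I_{k+1}$ minus a constant-column rank-one matrix via Vandermonde's identity, and evaluating that determinant with Claim~\ref{clm:rank_one_plus_ID}. The only cosmetic difference is that you apply the claim to $R-2I_{k+1}$ with $a_j=\binom{k}{j}$, $\lambda=2$ and then pull out the sign from $C=-(R-2I_{k+1})$, whereas the paper applies it directly with $a_j=-\binom{k}{j}$ and $\lambda=-2$.
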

\begin{proof}
	 For any $t^* \in [-k,-k+2]$, the matrix $M_k(p,t^*)$ is given by
	\[
	M_k(p,t^*)_{i,j} =  \sum_{\ell=0}^{k} \binom{i}{\ell} \binom{k-i}{j-\ell}  \cdot \big|2i + 2j -k - 4\ell - t^*\big|^p = \sum_{\ell=0}^{k} \delta_{i+j-2\ell}\binom{i}{\ell} \binom{k-i}{j-\ell}  \cdot (2i + 2j -k -4\ell - t^*)^p
	\; ,
	\]
	where $\delta_{r} = -1$ if $r= 0$ and $1$ otherwise. (Here, we have used the fact that $\binom{i}{\ell} \binom{k-i}{j-\ell} $ is only non-zero when $\ell \leq \min(i,j)$. Therefore, $2i+2j \geq 4\ell$, so that $2i+2j-k-4\ell - t^* \geq 2i+2j-4\ell - 2 \geq 0$ unless $2i-2j-4\ell = 0$.)
	
	The coefficient of $(t^*)^{(k+1)p}$ in the polynomial $t^* \mapsto \det(M_k(p,t^*))$ is therefore given by $\det(M')$, where $M'$ is defined as
	\begin{align*}
	 (M')_{i,j} &:=  -\sum_{\ell=0}^{k} \delta_{2\ell - i - j}\binom{i}{\ell} \binom{k-i}{j-\ell}   \\
	 &= 2 \binom{i}{(i+j)/2} \binom{k-i}{(j-i)/2}- \sum_{\ell=0}^{k} \binom{i}{\ell} \binom{k-i}{j-\ell}   \\
	 &= 2 \binom{i}{(i-j)/2} \binom{k-i}{(j-i)/2}-\binom{k}{j}
	 \; ,
	\end{align*}
	where we have again applied Vandermonde's identity.
	Notice that the first term is non-zero if and only if $i = j$, in which case it is equal to $2$. In other words, $M' =  A+2I_{k+1}$, where $A_{i,j} := -\binom{k}{j}$. The result then follows from Claim~\ref{clm:rank_one_plus_ID}.
\end{proof}

\begin{corollary}
	\label{cor:non-zero_t}
	There is an efficient algorithm that takes as input an integer $k \geq 2$ and odd integer $p \geq 1$ and outputs $t^* \in \Q$ such that $\det(M_k(p,t^*)) \neq 0$, with $M_k(p,t^*)$ defined as in Eq.~\eqref{eq:M_def}.
\end{corollary}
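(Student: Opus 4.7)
The plan is to reduce immediately to Lemma~\ref{lem:top_coeff}. That lemma tells us that on the interval $t^* \in [-k,-k+2]$, the function $f(t^*) := \det(M_k(p,t^*))$ agrees with a single univariate polynomial of degree exactly $(k+1)p$, whose leading coefficient $2^k(2-2^k)$ is nonzero since $k \geq 2$. In particular, $f$ restricted to $[-k,-k+2]$ is a nonzero polynomial of degree $(k+1)p$, so by the standard root-count bound it vanishes at no more than $(k+1)p$ values of $t^*$ in this interval.

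Given this, the algorithm is brute-force search on a sufficiently fine rational grid. Setting $N := (k+1)p + 1$, I would fix any set $T \subset \Q \cap [-k, -k+2]$ of size $N$, for example $T := \{\,-k + 2i/(N+1) : 1 \leq i \leq N\,\}$. For each $t^* \in T$ in turn, I form the matrix $M_k(p,t^*) \in \Q^{(k+1)\times(k+1)}$ entry-by-entry from Eq.~\eqref{eq:M_def} (each entry is a sum of at most $k+1$ terms, each a product of two binomial coefficients and a rational raised to the $p$th power), compute $\det(M_k(p,t^*))$ exactly via Gaussian elimination over $\Q$, and output the first $t^* \in T$ for which the determinant is nonzero. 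Correctness is immediate from the preceding paragraph: $|T| = (k+1)p + 1$ exceeds the number of roots of $f$ in $[-k,-k+2]$, so at least one element of $T$ must yield a nonzero determinant.

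The main step that could have obstructed this argument---namely, showing that $t^* \mapsto \det(M_k(p,t^*))$ is not identically the zero polynomial---is already handled by Lemma~\ref{lem:top_coeff} via the explicit computation of the top-degree coefficient using Claim~\ref{clm:rank_one_plus_ID}, so there is essentially no remaining difficulty in the present corollary. Efficiency is also straightforward: the algorithm performs $O(kp)$ determinant computations on $(k+1) \times (k+1)$ rational matrices whose entries have bit-length $\mathrm{poly}(k,p)$, all of which can be done in time $\mathrm{poly}(k,p)$. Thus the corollary reduces to a short algorithmic wrapper around the preceding lemma.
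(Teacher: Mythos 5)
Your proposal is correct and matches the paper's own proof essentially verbatim: both evaluate $\det(M_k(p,t^*))$ at $(k+1)p+1$ rational points in $[-k,-k+2]$ and invoke Lemma~\ref{lem:top_coeff} to conclude that the degree-$(k+1)p$ polynomial with nonzero leading coefficient $2^k(2-2^k)$ cannot vanish at all of them. Nothing further is needed.
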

\begin{proof}
	The algorithm works as follows. It chooses $(k+1)p + 1$ distinct points $t_0, \ldots, t_{(k+1)p} \in [-k,-k+2]$ arbitrarily. (E.g., it chooses $t_i = -k + 2i/((k+1)p)$.) For each $t_i$, it computes $\det(M_k(p,t_i))$. It outputs the first $t_i$ such that the determinant is non-zero.
	
	We claim that $\det(M_k(p,t_i)) \neq 0$ for at least one index $i$. Indeed, by Lemma~\ref{lem:top_coeff}, $t^* \mapsto \det(M_k(p,t^*))$ is a non-zero polynomial of degree $(k+1)p$. The result then follows from the fact that such a polynomial can have at most $(k+1)p$ roots.
\end{proof}

Theorem~\ref{thm:main} for finite $p$ now follows immediately from Theorems~\ref{thm:isolating_implies_reduction} together with Proposition~\ref{prop:invertible_suffices}, and Corollary~\ref{cor:non-zero_t}.

\subsection{Limitations of the approach}
\label{sec:limitations}
In the previous section, we showed that for every odd $p\geq1$ and every integer $k \geq 2$, there exists a $(p,k)$-isolating parallelepiped. This allowed us to conclude that $\CVP_p$ is SETH-hard for odd values of $p$. Now, we show that this approach necessarily fails for even $p \geq 2$. Namely, we show that for every even $p$, there is no $(p,k)$-isolating parallelepiped for any $k>p$.%
\footnote{When $k\leq p$, it is possible to construct $(p,k)$-isolating parallelepiped for even $p$. See, e.g., Figure~\ref{fig:l2-parallelpiped}.}
For simplicity, we show this for $p=2$, but a straightforward generalization works for all even $p$.

\begin{lemma}
	\label{lem:incl-excl}
For any integer $k\geq 3$ and vectors $\vec{v}_1,\ldots,\vec{v}_k, \vec{t}^* \in \R^{d^*}$,
we have
\[
\label{eq:incl-excl}
\sum_{S\subseteq[k]}(-1)^{|S|} \, \Big\|\vec{t}^* - \sum_{i\in S}\vec{v}_i\Big\|_2^2=0 \; .
\]
\end{lemma}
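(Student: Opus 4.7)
The plan is to expand the squared Euclidean norm into a polynomial of total degree $2$ in the indicator variables of $S$, and then reduce the claim to three standard inclusion--exclusion vanishing identities. Concretely, writing $V := [\vec{v}_1, \ldots, \vec{v}_k]$ and $f(\vec{y}) := \|\vec{t}^* - V\vec{y}\|_2^2$, the sum in the lemma is exactly $\sum_{\vec{y} \in \{0,1\}^k} (-1)^{|\vec{y}|} f(\vec{y})$, where $f$ is a polynomial of total degree $2$ in $\vec{y}$.

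Expanding termwise gives
\[
\Big\|\vec{t}^* - \sum_{i \in S} \vec{v}_i\Big\|_2^2 = \|\vec{t}^*\|_2^2 \;-\; 2 \sum_{i \in S} \langle \vec{t}^*, \vec{v}_i \rangle \;+\; \sum_{i,j \in S} \langle \vec{v}_i, \vec{v}_j \rangle.
\]
Multiplying by $(-1)^{|S|}$, summing over $S \subseteq [k]$, and swapping the order of summation, the total reduces to
\[
\|\vec{t}^*\|_2^2 \sum_{S \subseteq [k]}(-1)^{|S|} \;-\; 2 \sum_{i=1}^{k} \langle \vec{t}^*, \vec{v}_i \rangle \sum_{S \ni i}(-1)^{|S|} \;+\; \sum_{i,j=1}^{k} \langle \vec{v}_i, \vec{v}_j \rangle \sum_{S \supseteq \{i,j\}}(-1)^{|S|}.
\]
Each inner sum is an instance of the general identity $\sum_{S \supseteq T}(-1)^{|S|} = 0$, which holds for any fixed $T \subseteq [k]$ with $|T| \leq k - 1$ (by pairing $S$ with $S \triangle \{j\}$ for some $j \notin T$). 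Applied with $T = \emptyset$, $T = \{i\}$, and $T = \{i,j\}$, the three sums vanish whenever $k \geq 1$, $k \geq 2$, and $k \geq 3$, respectively. Thus the hypothesis $k \geq 3$ suffices to kill all three terms simultaneously.

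I expect no obstacle here; the only mild subtlety is that the quadratic term includes the diagonal $i = j$, but the corresponding pinned set $\{i\}$ still has size at most $k - 1$ for $k \geq 3$, so the vanishing identity applies uniformly. Conceptually, one can equivalently regard the alternating sum as $(-1)^k (\Delta_1 \cdots \Delta_k f)(\vec{0})$, where $\Delta_i g(\vec{y}) := g(\vec{y} + \vec{e}_i) - g(\vec{y})$; each $\Delta_i$ strictly decreases the $y_i$-degree, so applying $k$ such differences to a polynomial of total degree $2 < k$ yields identically zero. This viewpoint also makes clear why the analogous statement generalizes to $\ell_p^p$ norms for every even integer $p$ with $k > p$, as needed for Section~\ref{sec:limitations}.
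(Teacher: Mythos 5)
Your proof is correct and follows essentially the same route as the paper's: expand the squared norm, interchange the order of summation, and observe that each resulting alternating sum $\sum_{S \supseteq T}(-1)^{|S|}$ vanishes because $|T| \leq 2 \leq k-1$ (the paper phrases this via $\sum_{S \subseteq [n]}(-1)^{|S|} = (1-1)^n = 0$, splitting the quadratic term into diagonal and off-diagonal parts rather than keeping the double sum). Your closing finite-difference remark is a pleasant bonus that also explains the paper's claim that the argument generalizes to all even $p < k$, but it is not needed for the lemma itself.
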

\begin{proof}
	We have
\begin{align*}
\sum_{S\subseteq[k]}(-1)^{|S|} \, \Big\| \vec{t}^* - \sum_{i\in S}\vec{v}_i\Big\|_2^2
&=\sum_{S\subseteq[k]}(-1)^{|S|} \, \left( \norm{\vec{t}^*}_2^2-2\Big\langle{\vec{t}^*, \sum_{i\in S}\vec{v}_i}\Big\rangle+\Big\| \sum_{i\in S}\vec{v}_i \Big\|_2^2\right)
\\
&=
\|\vec{t}^*\|_2^2 \cdot \sum_{S\subseteq[k]}(-1)^{|S|}-2\sum_{i\in[k]}\big\langle{\vec{t}^*,\vec{v}_i }\big\rangle\cdot \sum_{S\ni i}(-1)^{|S|}\\
&\qquad +
\sum_{i\in[k]} \|\vec{v_i}\|_2^2 \cdot \sum_{S \ni i}(-1)^{|S|}+2\sum_{i<j}\inner{\vec{v}_i,\vec{v}_j}\cdot \sum_{S \ni i,j}(-1)^{|S|}\\
&=
\|\vec{t}^*\|_2^2 \cdot 0-2\sum_{i\in[k]}\inner{\vec{t}^*,\vec{v}_i }\cdot 0\\
&\qquad +
\sum_{i\in[k]} \|\vec{v_i}\|_2^2 \cdot 0+2\sum_{i<j}\inner{\vec{v}_i,\vec{v}_j}\cdot 0\\
&=0 \; ,
\end{align*}
where the penultimate equality uses the fact that
\[
\sum_{S\subseteq [n]} (-1)^{|S|} = (1-1)^n=0
\]
for $n\geq 1$.
\end{proof}

\begin{corollary}
There is no $(2,k)$-isolating parallelepiped for any integer $k \geq 3$. 
\label{cor:no-2-iso-pipeds}
\end{corollary}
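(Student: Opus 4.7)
The plan is to derive a direct contradiction from Lemma~\ref{lem:incl-excl} applied to a hypothetical $(2,k)$-isolating parallelepiped. Suppose for contradiction that $V = [\vec{v}_1, \ldots, \vec{v}_k] \in \R^{d^* \times k}$ and $\vec{t}^* \in \R^{d^*}$ define such an object, so that $\|\vec{t}^*\|_2 > 1$ and $\|V\vec{x} - \vec{t}^*\|_2 = 1$ for every $\vec{x} \in \{0,1\}^k \setminus \{\vec{0}_k\}$.

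Identifying each $\vec{x} \in \{0,1\}^k$ with the subset $S = \{i : x_i = 1\} \subseteq [k]$, we have $V\vec{x} = \sum_{i \in S} \vec{v}_i$, and hence
\[
\Big\|\vec{t}^* - \sum_{i \in S} \vec{v}_i\Big\|_2 = \|V\vec{x} - \vec{t}^*\|_2.
\]
I will then apply Lemma~\ref{lem:incl-excl} to the vectors $\vec{v}_1, \ldots, \vec{v}_k, \vec{t}^*$, splitting the sum into the contribution of $S = \emptyset$ and of $S \neq \emptyset$. The $S = \emptyset$ term contributes $\|\vec{t}^*\|_2^2$, while every other term contributes $(-1)^{|S|}$ since by assumption $\|\vec{t}^* - \sum_{i\in S} \vec{v}_i\|_2^2 = 1$. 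Using the identity $\sum_{S \subseteq [k]} (-1)^{|S|} = 0$ (valid for $k \geq 1$), this gives $\sum_{S \neq \emptyset} (-1)^{|S|} = -1$, so
\[
\|\vec{t}^*\|_2^2 - 1 = 0,
\]
i.e., $\|\vec{t}^*\|_2 = 1$. This contradicts the requirement $\|\vec{t}^*\|_2 > 1$ from Definition~\ref{def:piped}, completing the proof.

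There is essentially no obstacle here: the entire content of the argument is packaged inside Lemma~\ref{lem:incl-excl}, and the corollary is just a matter of matching the isolating-parallelepiped condition against that identity. The only care needed is bookkeeping the sign of $(-1)^{|S|}$ and noting that the hypothesis $k \geq 3$ is not even strictly necessary for this particular contradiction (the argument already fails for $k \geq 1$ given the lemma); the $k \geq 3$ hypothesis is consistent with the observation that small-$k$ examples do exist for $p = 2$, as noted in the footnote.
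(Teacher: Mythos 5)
Your proof is correct and is essentially the paper's own argument: apply Lemma~\ref{lem:incl-excl}, isolate the $S=\emptyset$ term, and conclude $\|\vec{t}^*\|_2^2=1$, contradicting $\|\vec{t}^*\|_2>1$. One small caveat on your closing aside: the hypothesis $k\geq 3$ \emph{is} needed, since Lemma~\ref{lem:incl-excl} itself fails for $k\leq 2$ (the cross-term sums $\sum_{S\ni i,j}(-1)^{|S|}=(1-1)^{k-2}$ vanish only when $k\geq 3$), which is consistent with the existence of $(2,2)$-isolating parallelepipeds.
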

\begin{proof}
Assume $V = [\vec{v}_1, \ldots, \vec{v}_{k}] \in \R^{d^* \times k}$ and $\vec{t}^* \in \R^{d}$ form a $(2,k)$-isolating parallelepiped.
 For any $S\neq\emptyset$, $\norm{\vec{t}^*-\sum_{i\in S}\vec{v}_i}_2^2=1$ by the definition of an isolating parallelepiped. Thus, applying Lemma~\ref{lem:incl-excl}, we have
\[
\norm{\vec{t}^*}_2^2=\sum_{\emptyset\neq S\subseteq[k]}(-1)^{|S|+1} \, \Big\|\vec{t}^*-\sum_{i\in S}\vec{v}_i\Big\|_2^2=1 \; ,
\]
which contradicts the assumption that $V$ and $\vec{t}^*$ form an isolating parallelepiped.
\end{proof}

\subsection{Extending our result to almost all \texorpdfstring{$p$}{p}}
\label{sec:all_p}
We now wish to extend Theorem~\ref{thm:main} to arbitrary $p \geq 1$. Unfortunately, we know that we cannot do this for \emph{all} $p$, since we showed in Section~\ref{sec:limitations} that no such construction is possible when $p$ is an even integer. However, we show a construction that works for ``almost all values of $p$.'' In particular, for any fixed $k$, the construction works for all but finitely many choices of $p$. 
We also observe that this implies that, for every fixed $p_0, k$, there is an $\eps > 0$ such that the construction works for every $p \in (p_0 - \eps)$ or $p \in (p_0 + \eps)$. In particular, for any non-zero $\delta = \delta(n) = o(1)$, the construction works for $p = p_0 + \delta(n)$ for sufficiently large integers $n$.

In Section~\ref{sec:odd_p}, we observed that the function $t^* \mapsto \det(M_k(p,t^*))$ is a piecewise polynomial when $p$ is an integer. This is what allowed us to analyze this case relatively easily (in both Section~\ref{sec:odd_p} and in Section~\ref{sec:limitations}). For non-integer $p$, the function $t^* \mapsto \det(M_k(p,t^*))$ is much less nice. So, instead of holding $p$ fixed and varying $t^*$, we will be interested in studying the function 
$
f_{k,t^*}(p) := \det(M_k(p,t^*))
$
for fixed $t^*$ and $k$. 
We first observe that this function has a fairly nice structure.

\begin{lemma}
	\label{lem:sum_of_exp}
	For any $t^* \in \R$, integer $k \geq 1$, and $p \geq 1$, let 
	\begin{equation}
			\label{eq:f_def}
				f_{k,t^*}(p) := \det(M_k(p,t^*))
				\; ,
	\end{equation}
	where $M_k(p,t^*)$ is as defined in Eq.~\eqref{eq:M_def}. Then, for fixed $k,t^*$, $f_{k,t^*}(p)$ is a Dirichlet polynomial. I.e., there are some real numbers $b_0,\ldots, b_{r}, c_0,\ldots, c_{r} \in \R$ (depending on $t^*$ and $k$) such that
		\begin{equation}
		\label{eq:sum_of_exp}
				f_{k,t^*}(p) = \sum_{i=0}^{r} b_i \exp(c_i p)
		\end{equation}
		for some finite $r$.
\end{lemma}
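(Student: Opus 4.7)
The plan is to unpack the definition of $M_k(p,t^*)$ and observe that each entry is already a Dirichlet polynomial in $p$, then argue that the class of Dirichlet polynomials is closed under the operations (sums and products) that go into a determinant.

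First, I would fix $k$ and $t^*$ and stare at Equation~\eqref{eq:M_def}. Each entry $M_k(p,t^*)_{i,j}$ is a finite integer linear combination of terms of the form $|a|^p$ with $a = 2i+2j-k-4\ell-t^*$, where $i,j,\ell$ range over $\{0,\ldots,k\}$. For each such $a$, either $a = 0$ (in which case $|a|^p = 0$ for all $p \geq 1$, so the term drops out), or $a \neq 0$, in which case $|a|^p = \exp(p \ln|a|)$. Collecting like terms with the same coefficient $c = \ln|a|$, I get that each entry $M_k(p,t^*)_{i,j}$ has the form $\sum_s \beta_{i,j,s} \exp(\gamma_{i,j,s} p)$ for finitely many real numbers $\beta_{i,j,s}, \gamma_{i,j,s}$ that depend only on $k$ and $t^*$. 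This is exactly a Dirichlet polynomial.

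Next, I would invoke closure of Dirichlet polynomials under addition and multiplication. The sum of two Dirichlet polynomials is trivially a Dirichlet polynomial. For products, the key identity is $\exp(c_1 p)\cdot \exp(c_2 p) = \exp((c_1+c_2)p)$, so the product of $\sum_s \beta_s \exp(\gamma_s p)$ and $\sum_t \beta'_t \exp(\gamma'_t p)$ equals $\sum_{s,t} \beta_s \beta'_t \exp((\gamma_s + \gamma'_t)p)$, which is again a Dirichlet polynomial with finitely many terms. By induction, any finite product of Dirichlet polynomials is a Dirichlet polynomial.

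Finally, I would apply the Leibniz formula $\det(M_k(p,t^*)) = \sum_{\sigma \in S_{k+1}} \mathrm{sgn}(\sigma)\prod_{i=0}^{k} M_k(p,t^*)_{i,\sigma(i)}$. Each of the $(k+1)!$ summands is a product of $k+1$ Dirichlet polynomials and is therefore itself a Dirichlet polynomial, and the signed sum of these terms is a Dirichlet polynomial. Collecting like exponents produces the desired representation $f_{k,t^*}(p) = \sum_{i=0}^r b_i \exp(c_i p)$, and $r$ is finite because we start with only finitely many exponents $\ln|a|$ and only finitely many combinations $\gamma_{s_0} + \gamma_{s_1} + \cdots + \gamma_{s_k}$ of them can arise.

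There is no real obstacle here: the statement is essentially a bookkeeping observation once one writes $|a|^p = \exp(p\ln|a|)$. The only very minor care needed is handling the $a = 0$ terms, which simply vanish for $p \geq 1$.
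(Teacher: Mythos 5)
Your proposal is correct and follows essentially the same route as the paper's proof: each entry of $M_k(p,t^*)$ is a Dirichlet polynomial in $p$ (via $|a|^p = \exp(p\ln|a|)$), and the determinant, being a polynomial in the entries, is again a Dirichlet polynomial since the class is closed under sums and products. Your version merely makes explicit (via the Leibniz formula and the treatment of the $a=0$ terms) the bookkeeping that the paper states in one sentence.
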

\begin{proof}
	To see that $f_{k,t^*}(p)$ is a Dirichlet polynomial for fixed $t^*,k$, it suffices to note that (1) each entry of $M_k(p,t^*)$ is a Dirichlet polynomial; (2) the determinant of a matrix can be written as a polynomial in the coordinates; and (3) a polynomial of Dirichlet polynomials is itself a Dirichlet polynomial.
\end{proof}

\begin{corollary}
	\label{cor:non-zero_t_p}
	There is an efficient algorithm that takes as input $k \geq 2$ and any $p \geq 1$ and either fails or outputs $V \in \R^{2^k \times k}$ and $\vec{t}^* \in \R^{2^k}$ that define a $(p,k)$-isolating parallelepiped. Furthermore, for any fixed $k \geq 2$, the algorithm only fails for finitely many choices of $p \geq 1$. 
\end{corollary}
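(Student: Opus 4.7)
The plan is to combine the machinery already in place: Proposition~\ref{prop:invertible_suffices} reduces the task to finding $t^* \in \R$ with $\det(M_k(p,t^*)) \neq 0$, Corollary~\ref{cor:non-zero_t} supplies such a $t^*$ for the odd integer $p=1$, and Lemma~\ref{lem:sum_of_exp} says that $f_{k,t^*}(p) = \det(M_k(p,t^*))$ is a Dirichlet polynomial in $p$ when $k$ and $t^*$ are held fixed. The key additional observation is the classical fact that a non-zero Dirichlet polynomial $\sum_{i=0}^r b_i \exp(c_i p)$ with distinct exponents $c_i$ has only finitely many real zeros (at most $r$, by the standard Rolle-style argument: divide through by $\exp(c_0 p)$, differentiate, and induct on $r$). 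This gives the ``only finitely many bad $p$'' conclusion directly.

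First I would describe the algorithm. Given $k \geq 2$, it runs the algorithm of Corollary~\ref{cor:non-zero_t} on input $(k, 1)$ (taking $p = 1$, which is odd) to produce a rational $t^* \in [-k, -k+2]$ such that $\det(M_k(1, t^*)) \neq 0$. This $t^*$ depends only on $k$, not on $p$. Then, on input $(k, p)$, the algorithm computes $\det(M_k(p, t^*))$. If this determinant is zero, the algorithm declares failure; otherwise, it invokes Proposition~\ref{prop:invertible_suffices} with $(k, p, t^*)$ to produce a $(p,k)$-isolating parallelepiped $V \in \R^{2^k \times k}$ and $\vec{t}^* \in \R^{2^k}$. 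Correctness of the output (when the algorithm does not fail) is immediate from Proposition~\ref{prop:invertible_suffices}, and each step is efficient.

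It remains to bound the number of failures for fixed $k$. By Lemma~\ref{lem:sum_of_exp}, $f_{k,t^*}(p) = \det(M_k(p, t^*))$ is a Dirichlet polynomial in $p$ of the form $\sum_{i=0}^r b_i \exp(c_i p)$ for some finite $r$ and some reals $b_i, c_i$ depending only on $k$ and $t^*$. Since $f_{k,t^*}(1) = \det(M_k(1, t^*)) \neq 0$ by construction, this Dirichlet polynomial is not identically zero. Applying the classical zero-count bound for Dirichlet polynomials (proved by induction on $r$: divide by $\exp(c_0 p)$ so the first term becomes constant, differentiate to kill it, and apply the inductive hypothesis together with Rolle's theorem to the resulting Dirichlet polynomial with $r$ terms), we conclude that $f_{k, t^*}$ has at most $r$ real zeros. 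Hence the algorithm fails on at most $r = O_k(1)$ values of $p \geq 1$, completing the proof.

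The main obstacle, and indeed the only nontrivial ingredient beyond the preceding results, is the bound on the number of real zeros of a non-zero Dirichlet polynomial; this is standard but worth stating as a named lemma or citing explicitly so that the $O_k(1)$ bound on the set of failing $p$ (which one could also identify as the ``finite set $S_k$'' appearing in Theorem~\ref{thm:all_p_intro}) is rigorous. Everything else is a direct chaining of Proposition~\ref{prop:invertible_suffices}, Corollary~\ref{cor:non-zero_t}, and Lemma~\ref{lem:sum_of_exp}.
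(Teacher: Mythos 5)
Your proposal is correct and follows essentially the same route as the paper: fix $t^*$ via Corollary~\ref{cor:non-zero_t} at the odd value $p=1$, observe via Lemma~\ref{lem:sum_of_exp} that $f_{k,t^*}$ is a non-zero Dirichlet polynomial and hence has finitely many real roots, and invoke Proposition~\ref{prop:invertible_suffices} whenever $\det(M_k(p,t^*)) \neq 0$. The only cosmetic difference is that you sketch the Rolle-type zero-count argument for Dirichlet polynomials, whereas the paper simply cites this fact (Theorem 3.1 of~\cite{jameson_2006}).
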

\begin{proof}
	By Corollary~\ref{cor:non-zero_t}, for any $k \geq 2$, we can find a $t^* \in \Q$ such that, say, $f_{k,t^*}(1) \neq 0$, where $f_{k,t^*}(p)$ is defined as in Eq.~\eqref{eq:f_def}. Clearly, $f_{k,t^*}(p)$ is non-zero as a function of $p$ for these values of $t^*, k$. Furthermore, by Lemma~\ref{lem:sum_of_exp}, $f_{k,t^*}(p)$ is a Dirichlet polynomial. The result follows by the fact that any non-zero Dirichlet polynomial has finitely many roots (see, e.g., Theorem 3.1 in~\cite{jameson_2006}).
\end{proof}

\begin{theorem}
	\label{thm:most_p}
	There is an efficient algorithm that takes as input an integer $k \geq 2$ and any $p \geq 1$ and either fails or outputs $V \in \R^{2^k \times k}$ and $\vec{t}^* \in \R^{2^k}$ that define a $(p,k)$-isolating parallelepiped. Furthermore, for any fixed $k \geq 2$, the algorithm only fails on finitely many values of $p \geq 1$.\footnote{As we observed in Section~\ref{sec:limitations}, the set of failure points necessarily includes all even integers $p \leq k-1$.}
\end{theorem}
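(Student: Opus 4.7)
The plan is to combine three results already established in the excerpt: Proposition~\ref{prop:invertible_suffices}, Lemma~\ref{lem:sum_of_exp}, and Corollary~\ref{cor:non-zero_t}. In fact, this statement is essentially a restatement of Corollary~\ref{cor:non-zero_t_p}, so I would simply assemble these pieces in the proper order.

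First, given input $k \geq 2$, I would use Corollary~\ref{cor:non-zero_t} applied to the odd integer $p_0 = 1$ to efficiently compute a rational number $t^* = t^*(k) \in \Q$ with $\det(M_k(1, t^*)) \neq 0$. Note that $t^*$ depends only on $k$, not on the input parameter $p$.

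Next, I would consider the function $p \mapsto f_{k,t^*}(p) := \det(M_k(p, t^*))$ for this fixed $t^*$. By Lemma~\ref{lem:sum_of_exp}, $f_{k,t^*}$ is a Dirichlet polynomial in $p$, and the previous step guarantees $f_{k,t^*}(1) \neq 0$, so $f_{k,t^*}$ is not identically zero. Invoking the fact that a non-zero Dirichlet polynomial has only finitely many real zeros (Theorem 3.1 of~\cite{jameson_2006}), the set $Z_k := \{p \in \R : f_{k,t^*}(p) = 0\}$ is finite.

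Finally, on input $(k,p)$: the algorithm computes $\det(M_k(p,t^*))$. If this quantity is zero (i.e., $p \in Z_k$), the algorithm declares failure. Otherwise, $M_k(p,t^*)$ is invertible, so we apply Proposition~\ref{prop:invertible_suffices} with inputs $k$, $p$, and $t^*$ to efficiently produce $V \in \R^{2^k \times k}$ and $\vec{t}^* \in \R^{2^k}$ that define a $(p,k)$-isolating parallelepiped. For fixed $k$, the only values of $p$ on which the algorithm fails lie in the finite set $Z_k$, yielding the theorem. There is no real obstacle here; the substantive work has already been done in establishing Proposition~\ref{prop:invertible_suffices} (the reduction from invertibility to parallelepipeds), Corollary~\ref{cor:non-zero_t} (which supplies a good $t^*$ certifying non-vanishing at $p=1$), and Lemma~\ref{lem:sum_of_exp} (which gives the Dirichlet polynomial structure in $p$).
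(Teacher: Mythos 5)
Your proposal is correct and follows essentially the same route as the paper: the paper's proof of this theorem simply combines Proposition~\ref{prop:invertible_suffices} with Corollary~\ref{cor:non-zero_t_p}, whose own proof is exactly your argument (fix $t^*$ via Corollary~\ref{cor:non-zero_t} at $p=1$, note $f_{k,t^*}$ is a non-zero Dirichlet polynomial by Lemma~\ref{lem:sum_of_exp}, and invoke finiteness of its roots). No gaps.
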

\begin{proof}
	The result follows immediately from Proposition~\ref{prop:invertible_suffices} and Corollary~\ref{cor:non-zero_t_p}.
\end{proof}

Item~\ref{item:all_but_finite} of Theorem~\ref{thm:all_p_intro} now follows immediately from Theorem~\ref{thm:isolating_implies_reduction} and Theorem~\ref{thm:most_p}.

We now provide what amounts to a different interpretation of the above. %

\begin{lemma}
	\label{lem:non-zero_t_p_plus_eps}
	There is an efficient algorithm that takes as input any $p_0 \geq 1$ and an integer $k \geq 2$ and outputs a value $t^*$ such that $f_{k,t^*}(p_0 + \delta)$ and $f_{k,t^*}(p_0 - \delta)$ are non-zero for sufficiently small $\delta > 0$, where $f_{k,t^*}(p)$ is as defined in Eq.~\eqref{eq:f_def}.
\end{lemma}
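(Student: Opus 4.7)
The plan is to reuse the $t^\ast$ produced by Corollary~\ref{cor:non-zero_t} and then exploit the fact that a non-zero Dirichlet polynomial has only isolated zeros. The statement says nothing about how $\delta^\ast$ (the threshold below which $\delta$ must lie) depends on $p_0$, so the algorithm need not even look at $p_0$: it suffices to output a single $t^\ast$ for which $f_{k,t^\ast}(\cdot)$ is a non-zero function, and then argue existence of the required $\delta$ post-hoc.

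Concretely, on input $(p_0, k)$ the algorithm would run the procedure from Corollary~\ref{cor:non-zero_t} with the odd integer $p = 1$ and the given $k$, obtaining a rational $t^\ast \in [-k, -k+2]$ such that
\[
f_{k,t^\ast}(1) \;=\; \det\!\bigl(M_k(1,t^\ast)\bigr) \;\neq\; 0.
\]
In particular $f_{k,t^\ast}$ is not the zero function of $p$. By Lemma~\ref{lem:sum_of_exp}, $f_{k,t^\ast}(p)$ is a Dirichlet polynomial in $p$, i.e., it has the form $\sum_{i=0}^{r} b_i \exp(c_i p)$ with at least one $b_i \neq 0$.

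Every non-zero Dirichlet polynomial is a real-analytic function on $\mathbb{R}$, so its zero set is discrete; equivalently, by Theorem~3.1 of~\cite{jameson_2006} (which was already invoked in the proof of Corollary~\ref{cor:non-zero_t_p}), $f_{k,t^\ast}$ has only finitely many zeros on any bounded interval. Hence there exists $\delta^\ast = \delta^\ast(p_0,k,t^\ast) > 0$ such that $f_{k,t^\ast}$ has no zeros in the punctured interval $(p_0-\delta^\ast,\,p_0)\cup(p_0,\,p_0+\delta^\ast)$. For every $0 < \delta < \delta^\ast$ we then have $f_{k,t^\ast}(p_0+\delta)\neq 0$ and $f_{k,t^\ast}(p_0-\delta)\neq 0$, which is exactly what the lemma requires.

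There is essentially no obstacle here beyond observing that the existence of $\delta^\ast$ is non-constructive in $p_0$: the algorithm does not need to compute $\delta^\ast$, it only needs to output $t^\ast$, and the ``sufficiently small $\delta > 0$'' quantifier in the conclusion absorbs the dependence on $p_0$. The only mild subtlety is that we must pick $t^\ast$ so that $f_{k,t^\ast}$ is globally non-zero as a function of $p$; using $p = 1$ in Corollary~\ref{cor:non-zero_t} guarantees this uniformly and independently of the input $p_0$, which is convenient because it means we never need to reason about whether $p_0$ itself happens to be a zero of $f_{k,t^\ast}$.
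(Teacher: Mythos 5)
Your proposal is correct and matches the paper's own proof essentially verbatim: the paper likewise outputs the $t^*$ from Corollary~\ref{cor:non-zero_t} (with $p=1$), notes via Lemma~\ref{lem:sum_of_exp} and Corollary~\ref{cor:non-zero_t_p} that $f_{k,t^*}$ then has only finitely many zeros, and concludes that all sufficiently small $\delta>0$ avoid them (the paper just names an explicit $\delta$, namely half the distance from $p_0$ to the nearest other zero).
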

\begin{proof}
	The algorithm simply calls the procedure from Corollary~\ref{cor:non-zero_t} with, say, $p = 1$ and outputs the result. I.e., it suffices to choose any $t^*$ such that $f_{k,t^*}(1) \neq 0$. As in the proof of Corollary~\ref{cor:non-zero_t_p}, we observe that the function $f_{k,t^*}(p)$ is zero on a finite set of values $X$. The result then follows by taking $\delta := \min_{x \in X \setminus \set{p_0}} |x - p_0|/2$ if $X \setminus \set{p_0}$ is non-empty, and $\delta := c$ for any $c > 0$ otherwise.

\end{proof}

Finally, we derive the main theorem of this section.

\begin{theorem}
	\label{thm:piped_p_plus_eps}
	For any efficiently computable $\delta(n) \neq 0$ that converges to zero as $n \to \infty$ and $p_0 \geq 1$, there is an efficient algorithm that takes as input an integer $k \geq 2$ and sufficiently large positive integer $n$ and outputs a matrix $V \in \R^{2^k \times k}$ and vector $\vec{t}^* \in \R^{2^k}$ that define a $(p_0 + \delta(n),k)$-isolating parallelepiped.
\end{theorem}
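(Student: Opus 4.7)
The plan is to derive this theorem as a near-immediate corollary of Lemma~\ref{lem:non-zero_t_p_plus_eps} together with Proposition~\ref{prop:invertible_suffices}, using only the additional fact that $\delta(n) \to 0$ while remaining non-zero.

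First, on input $k$ and $n$, I would run the algorithm from Lemma~\ref{lem:non-zero_t_p_plus_eps} on inputs $p_0$ and $k$ to obtain a rational number $t^* \in \Q$ with the property that the Dirichlet polynomial $f_{k,t^*}(p) = \det(M_k(p,t^*))$ satisfies $f_{k,t^*}(p_0 + \delta) \neq 0$ and $f_{k,t^*}(p_0 - \delta) \neq 0$ for all sufficiently small $\delta > 0$. Crucially, $t^*$ depends only on $p_0$ and $k$, not on $n$. Let $X$ denote the (finite) zero set of $f_{k,t^*}$ on $[1,\infty)$, which by Lemma~\ref{lem:non-zero_t_p_plus_eps} does not accumulate at $p_0$; that is, there is some constant $\delta_0 = \delta_0(p_0,k) > 0$ such that the punctured interval $(p_0 - \delta_0, p_0 + \delta_0) \setminus \{p_0\}$ contains no zero of $f_{k,t^*}$.

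Next, since $\delta(n)$ is efficiently computable and converges to zero as $n \to \infty$, for all sufficiently large $n$ we have $0 < |\delta(n)| < \delta_0$. Combined with the hypothesis $\delta(n) \neq 0$, this forces $p_0 + \delta(n) \in (p_0 - \delta_0, p_0 + \delta_0) \setminus \{p_0\}$, and therefore $f_{k,t^*}(p_0 + \delta(n)) \neq 0$, i.e., $\det(M_k(p_0 + \delta(n), t^*)) \neq 0$.

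Finally, I would feed this $p = p_0 + \delta(n)$, $k$, and $t^*$ into the algorithm from Proposition~\ref{prop:invertible_suffices} to produce a matrix $V \in \R^{2^k \times k}$ and a vector $\vec{t}^* \in \R^{2^k}$ forming a $(p_0 + \delta(n), k)$-isolating parallelepiped, as desired. Efficiency follows because every ingredient (evaluating $\delta(n)$, computing $t^*$ from Lemma~\ref{lem:non-zero_t_p_plus_eps}, and running the construction in Proposition~\ref{prop:invertible_suffices}) is efficient. The only conceptual point that requires any care is the use of the constant $\delta_0$: it is not computed explicitly by the algorithm, but it exists for each fixed $(p_0,k)$, and this is exactly what lets us say that the procedure succeeds for all \emph{sufficiently large} $n$. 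There is no genuine obstacle here beyond stitching the earlier lemmas together.
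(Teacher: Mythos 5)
Your proposal is correct and matches the paper's own proof essentially verbatim: obtain $t^*$ from Lemma~\ref{lem:non-zero_t_p_plus_eps} (depending only on $p_0$ and $k$), note that since $\delta(n)\to 0$ with $\delta(n)\neq 0$ the value $f_{k,t^*}(p_0+\delta(n))$ is non-zero for all sufficiently large $n$, and then invoke Proposition~\ref{prop:invertible_suffices}. Your explicit remark about the non-accumulation radius $\delta_0$ is a slightly more careful spelling-out of the same "sufficiently large $n$" step the paper uses.
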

\begin{proof}
	The result follows immediately from Proposition~\ref{prop:invertible_suffices} and Lemma~\ref{lem:non-zero_t_p_plus_eps}. In particular, the algorithm runs the procedure from Lemma~\ref{lem:non-zero_t_p_plus_eps}, receiving as output some $t^* \in \R$ such that $f_{k,t^*}(p_0 \pm \eps)$ is non-zero for sufficiently small $\eps > 0$. In particular, if $n$ is sufficiently large, then $f_{k,t^*}(p_0 + \delta(n))$ will be non-zero. The result then follows from Proposition~\ref{prop:invertible_suffices}.
\end{proof}

Item~\ref{item:p_plus_eps} of Theorem~\ref{thm:all_p_intro} now follows from Theorem~\ref{thm:isolating_implies_reduction} and Theorem~\ref{thm:piped_p_plus_eps}.

\section{Gap-ETH-based Hardness of Approximation}
\label{sec:approx_hard}
In this section we prove Gap-ETH-based hardness of approximation for $\CVP_p$ for every $p \in [1, \infty)$. We also show a stronger hardness of approximation result for $\CVP_1$. (In Section~\ref{sec:infinity!}, we additionally show a stronger result for $p = \infty$.)
The main idea behind our proof is to use the reduction from Max-$2$-SAT to $\CVP$ described in Section~\ref{subsec:techniques} without the ``identity matrix gadget'' that we used to force any closest vector to be a 0-1 combination of basis vectors. We will show that this is permissible because the resulting CVP instance always has a 0-1 combination of basis vectors that is at least as close to the target as any other lattice vector.

\begin{theorem}
For every $0 \leq \delta < \eps \leq 1$ and every $p \in [1, \infty)$, there is a polynomial-time reduction from any $(\delta, \eps)$-Gap-$2$-SAT instance with $n$ variables to an instance of $\gamma$-$\CVP_p$ of rank $n$, where
\[
\gamma := \left(\frac{\delta + (1 - \delta) \cdot 3^p}{\eps + (1 - \eps) \cdot 3^p} \right)^{1/p} > 1
\; .
\]
In particular, when $\eps = 1$, the corresponding approximation factor is 
$
(\delta + (1-\delta) \cdot 3^p)^{1/p}
$.
\label{thm:gap2sat-to-gamma-cvpp}
\end{theorem}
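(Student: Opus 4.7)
The plan is to use the Max-$2$-SAT to $\CVP_p$ reduction from Section~\ref{subsec:techniques}, but with the identity matrix gadget dropped. Given a Gap-$2$-SAT instance $\Phi$ with $n$ variables and $m$ clauses, I will output the $\CVP_p$ instance with basis $B := \bar{\Phi} \in \R^{m \times n}$ (defined as in Eq.~\eqref{eq:Phi_def}), target $\vec{t} \in \R^m$ with $t_i := 3 - 2\eta_i$, and threshold $r^p := m\bigl(\eps + (1-\eps)\cdot 3^p\bigr)$. The only delicate construction issue is ensuring that $B$ has rank exactly $n$; this can be arranged under mild non-degeneracy assumptions on $\Phi$ (e.g., each variable actually occurring in some clause), and otherwise one can work with the actual rank of $B$.

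The heart of the argument is a rounding lemma: for every $\vec{z} \in \Z^n$, the assignment $\vec{z}' \in \{0,1\}^n$ defined by $z'_j := 1$ if $z_j \geq 1$ and $z'_j := 0$ otherwise satisfies $\|B\vec{z}' - \vec{t}\|_p \leq \|B\vec{z} - \vec{t}\|_p$. To prove it, I will analyze each clause's contribution to $\|B\vec{z} - \vec{t}\|_p^p$ separately. Fix a clause $C_i$ on variables $x_a, x_b$ and set $y_a := z_a$ if $x_a$ occurs positively in $C_i$ and $y_a := 1 - z_a$ otherwise (and similarly $y_b$, $y'_a$, $y'_b$). As computed in Section~\ref{subsec:techniques}, the $i$th contribution equals $|2y_a + 2y_b - 3|^p$; since $y_a, y_b \in \Z$, this is the absolute value of an odd integer raised to the $p$th power, so it is at least $1$, with equality exactly when $y_a + y_b \in \{1,2\}$. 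In the equality case at least one of $y_a, y_b$ is $\geq 1$, and the rounding rule is designed precisely so that $y_c \geq 1$ implies $y'_c = 1$, forcing $\vec{z}'$ to satisfy $C_i$. (The analogous statement for any $1$-clauses produced by prior reductions, where the contribution is $|2y_a - 3|^p$ and equality to $1$ requires $y_a \in \{1,2\}$ and hence $y'_a = 1$, is immediate.) Contrapositively, every clause unsatisfied by $\vec{z}'$ contributes at least $3^p$ under $\vec{z}$, while every clause satisfied by $\vec{z}'$ contributes at least $1$ under $\vec{z}$ and exactly $1$ under $\vec{z}'$; summing over clauses gives $\|B\vec{z} - \vec{t}\|_p^p \geq m^+(\vec{z}') + (m - m^+(\vec{z}'))\cdot 3^p = \|B\vec{z}' - \vec{t}\|_p^p$.

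Given the rounding lemma, minimizing $\|B\vec{z} - \vec{t}\|_p$ over $\vec{z} \in \Z^n$ reduces to minimizing over $\vec{z}' \in \{0,1\}^n$, yielding $\dist_p^p(\vec{t}, \lat(B)) = m\cdot 3^p - (3^p - 1)\cdot \val(\Phi)\cdot m$. A YES instance ($\val(\Phi) \geq \eps$) thus satisfies $\dist_p^p \leq m(\eps + (1-\eps)\cdot 3^p) = r^p$, while a NO instance ($\val(\Phi) < \delta$) satisfies $\dist_p^p > m(\delta + (1-\delta)\cdot 3^p) = r^p \gamma^p$, which is the claimed gap. I expect the rounding lemma to be the only real obstacle; its proof relies crucially on the parity fact specific to $k = 2$, since for $k \geq 3$ the sum $y_{a_1} + \dots + y_{a_k}$ can take intermediate integer values without any individual $y_c$ being $\geq 1$, breaking the link between a per-clause contribution of $1$ and satisfaction of the rounded assignment.
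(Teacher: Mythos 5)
Your proposal is correct and matches the paper's proof in all essentials: the same gadget-free basis $\bar{\Phi}$, target $t_i = 3 - 2|N_i|$, threshold $r^p = m(\eps + (1-\eps)3^p)$, and the same rounding map $\vec{z} \mapsto \chi(\vec{z})$ showing a $\{0,1\}$-combination is always closest. The only cosmetic difference is that you argue the rounding lemma contrapositively via the parity observation (each coordinate is an odd integer, so a contribution of $1$ forces some $y_c \geq 1$ and hence satisfaction of the rounded assignment), whereas the paper argues directly that an unsatisfied clause makes all relevant terms nonpositive and the coordinate at most $-3$; both yield the identical per-clause bounds.
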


\begin{proof}
Given a $(\delta, \eps)$-Gap-$2$-SAT instance with $n$ variables $x_1, \ldots, x_n$ and $m$ clauses $C_1, \ldots, C_m$, we construct a $\CVP_p$ instance $(B, \vec{t}, r)$ for some fixed $p \in [1, \infty)$ as follows.
Let $B \in \Z^{m \times n}$ be the basis defined by
	\[
	b_{i,j} := \left\{
	\begin{array}{rl}
	2 & \textrm{if $C_i$ contains $x_j$,}\\
	-2 & \textrm{if $C_i$ contains $\lnot x_j$,}\\
	0 & \textrm{otherwise}.
	\end{array}\right.
	\;
	\]
Let $\vec{t} \in \R^m$ be the target vector defined by $t_i := 3 - 2 |N_i|$,
and let $r := (\eps m + (1 - \eps) m \cdot 3^p)^{1/p}$. 

We claim that there is always a 0-1 combination of basis vectors which is a closest vector to $\vec{t}$. Assuming this claim, we analyze $\norm{B \vec{y} - \vec{t}}_p$ only for $\vec{y} \in \set{0, 1}^n$ without loss of generality, while deferring the proof of the claim until the end.

Let $\vec{y} \in \set{0, 1}^n$ be an assignment to the variables of $\Phi$. For every $1 \leq i \leq m$,
\begin{align}
\begin{split}
|(B\vec{y} - \vec{t})_i| &= \Big| 2 \Big(\sum_{s \in P_i} y_{\ind(\ell_{i,s})} - \sum_{s \in N_i} y_{\ind(\ell_{i,s})} \Big) - (3 - 2 |N_i|) \Big| \\
&= \Big| 2 \Big( \sum_{s \in P_i} y_{\ind(\ell_{i,s})} + \sum_{s \in N_i} (1 - y_{\ind(\ell_{i,s})}) \Big) - 3 \Big| \\
&= \big| 2 \cdot |S_i(\vec{y})| - 3\big| 
\; .
\end{split}
\label{eq:approx-reduction-coordinate}
\end{align}
The last expression is equal to $1$ if $\vec{y}$ satisfies $C_i$ (i.e. if $|S_i(\vec{y})| \geq 1$) and $3$ if not. %
We therefore have
\[
\norm{B\vec{y} - \vec{t}}_p^p = \sum_{i=1}^m |(B\vec{y} - \vec{t})_i|^p = m^+(\vec{y}) + (m - m^+(\vec{y})) \cdot 3^p .
\]
Therefore, if $\val(\Phi) \geq \eps$ then there exists $\vec{y} \in \set{0, 1}^n$ such that $\norm{B\vec{y} - \vec{t}}_p^p \leq \eps m + (1 - \eps) m \cdot 3^p = r^p$, and if $\val(\Phi) < \delta$, then for every $\vec{y} \in \set{0, 1}^n$, it holds that $\norm{B\vec{y} - \vec{t}}_p^p > \delta m + (1 - \delta) m \cdot 3^p = \gamma^p r^p$. It follows that the reduction achieves the claimed approximation factor of $\gamma$.

It remains to prove the claim that there is always a 0-1 combination of basis vectors which is a closest vector to $\vec{t}$. 
We show this by demonstrating that for every $\vec{y} \in \Z^n$ there exists $\chi(\vec{y}) \in \set{0, 1}^n$ such that $\norm{B \cdot \chi(\vec{y}) - \vec{t}}_p \leq \norm{B \vec{y} - \vec{t}}_p$.
Given $\vec{y} \in \Z^n$, let $\chi(\vec{y}) \in \set{0, 1}^n$ denote the vector whose $i$th coordinate is set to 1 if $y_i \geq 1$ and is set to 0 otherwise.

Fix $\vec{y} \in \Z^n$ and $1 \leq i \leq n$, and refer to Equation~\eqref{eq:approx-reduction-coordinate}. If $\chi(\vec{y})$ satisfies $C_i$ then
$|(B\vec{y} - \vec{t})_i| \geq |(B \cdot \chi(\vec{y}) - \vec{t})_i| = 1$.
On the other hand, if $\chi(\vec{y})$ does not satisify $C_i$, then $y_{\ind(\ell_{i,s})} \leq 0$ for all $s \in P_i$ and $1 - y_{\ind(\ell_{i,s})} \leq 0$ for all $s \in N_i$ so that $|(B\vec{y} - \vec{t})_i| \geq 3$ and therefore
$|(B\vec{y} - \vec{t})_i| \geq |(B \cdot \chi(\vec{y}) - \vec{t})_i| = 3$.
Combining these cases it follows that for all $\vec{y} \in \Z^n$ and $1 \leq i \leq n$,
$|(B \cdot \chi(\vec{y}) - \vec{t})_i| \leq |(B\vec{y} - \vec{t})_i|$,
and therefore $\norm{B \cdot \chi(\vec{y}) - \vec{t}}_p \leq \norm{B\vec{y} - \vec{t}}_p$, proving the claim.
\end{proof}

\begin{corollary}
For every $p \in [1, \infty)$ and $0 < \delta < \delta' < 1$,
there is no $2^{o(n)}$-time algorithm for $\gamma$-$\CVP_p$ with %
\[
\gamma := \left( \frac{6 + \delta' + (4 - \delta') \cdot 3^p}{7 + 3^{p+1}} \right)^{1/p} > 1
\]
unless randomized Gap-ETH (with respect to $(\delta, 1)$-Gap-$3$-SAT) fails.
\label{cor:gap-eth-hardness-of-approx}
\end{corollary}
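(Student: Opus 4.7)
The plan is to chain three reductions whose parameters compose correctly: Gap-ETH $\Rightarrow$ (sparsified) Gap-$3$-SAT on $O(n)$ clauses $\Rightarrow$ Gap-$2$-SAT $\Rightarrow$ $\gamma$-$\CVP_p$, and then invert the implication. Specifically, suppose for contradiction that there is a $2^{o(n)}$-time algorithm $A$ for $\gamma$-$\CVP_p$ with the stated $\gamma$, and start with an arbitrary $(\delta, 1)$-Gap-$3$-SAT instance $\Phi$ on $n$ variables.

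First I would apply Proposition~\ref{prop:gap-sat-sparsification} with a choice of $\delta'$ strictly between $\delta$ and the given $\delta'$ in the statement (any value in that open interval works; I will reuse the symbol $\delta'$ for this intermediate sparsification parameter, matching the one in the final formula) to obtain, in randomized polynomial time, a $(\delta', 1)$-Gap-$3$-SAT instance $\Phi_1$ on $n$ variables and $m = O(n)$ clauses. Next, I would apply Proposition~\ref{prop:gap3sat-to-gap2sat} to $\Phi_1$, producing a $\bigl((6+\delta')/10, 7/10\bigr)$-Gap-$2$-SAT instance $\Phi_2$ on $n + m = O(n)$ variables and $10m = O(n)$ clauses. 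Finally, I would feed $\Phi_2$ into the reduction of Theorem~\ref{thm:gap2sat-to-gamma-cvpp} with the given finite $p$, producing a $\gamma$-$\CVP_p$ instance of rank $O(n)$, where
\[
\gamma = \left(\frac{(6+\delta')/10 + \bigl(1 - (6+\delta')/10\bigr)\cdot 3^p}{7/10 + (3/10)\cdot 3^p}\right)^{1/p} = \left(\frac{6 + \delta' + (4 - \delta')\cdot 3^p}{7 + 3^{p+1}}\right)^{1/p} > 1,
\]
matching the quantity in the statement. Running $A$ on this instance and tracing the chain backwards decides $(\delta, 1)$-Gap-$3$-SAT on $n$ variables in randomized $2^{o(n)}$ time, contradicting randomized Gap-ETH.

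The only thing to verify beyond mechanically composing the three statements is that the rank of the output $\CVP_p$ instance is linear in $n$ (so that $2^{o(\text{rank})}$ is still $2^{o(n)}$); this is immediate because Proposition~\ref{prop:gap-sat-sparsification} preserves the variable count, Proposition~\ref{prop:gap3sat-to-gap2sat} adds at most $m = O(n)$ auxiliary variables, and Theorem~\ref{thm:gap2sat-to-gamma-cvpp} outputs a lattice whose rank equals the number of Gap-$2$-SAT variables. There is no real obstacle here; the subtlest point is purely notational, namely choosing the intermediate sparsification target strictly between $\delta$ and the final $\delta'$ so that both Proposition~\ref{prop:gap-sat-sparsification} applies (which requires a strict inequality) and the approximation factor in the conclusion remains as stated. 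All other ingredients are off-the-shelf from Section~\ref{sec:prelims} and Theorem~\ref{thm:gap2sat-to-gamma-cvpp}.
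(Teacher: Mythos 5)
Your proposal is correct and takes essentially the same route as the paper, whose proof is exactly this concatenation of Proposition~\ref{prop:gap-sat-sparsification}, Proposition~\ref{prop:gap3sat-to-gap2sat}, and Theorem~\ref{thm:gap2sat-to-gamma-cvpp}, with the same parameter arithmetic yielding the stated $\gamma$. Two minor remarks: there is no need for an intermediate sparsification target strictly between $\delta$ and $\delta'$ (just sparsify directly to the given $\delta'$, since $\delta < \delta' < 1$ is precisely the hypothesis of Proposition~\ref{prop:gap-sat-sparsification}), and the paper additionally notes that the output of Proposition~\ref{prop:gap3sat-to-gap2sat} contains $1$-clauses as well as $2$-clauses, a case the reduction of Theorem~\ref{thm:gap2sat-to-gamma-cvpp} handles without modification.
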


\begin{proof}
Concatenate the reductions described in Proposition~\ref{prop:gap-sat-sparsification}, Proposition~\ref{prop:gap3sat-to-gap2sat}, and Theorem~\ref{thm:gap2sat-to-gamma-cvpp}.\footnote{%
Note that the Gap-2-SAT instance output by Proposition~\ref{prop:gap3sat-to-gap2sat} contains $1$-clauses as well as $2$-clauses. We therefore stress that the reduction described in Theorem~\ref{thm:gap2sat-to-gamma-cvpp} works for this case as well.}
\end{proof}

\subsection{A stronger result for \texorpdfstring{$\ell_1$}{ell 1}}

In Theorem~\ref{thm:gap2sat-to-gamma-cvpp} we showed that there was no need to use the identity matrix gadget in our reduction from Gap-$2$-SAT to $\CVP$. An interesting question is whether the identity matrix gadget is also unnecessary in our reduction in Theorem~\ref{thm:max-k-sat-to-low-rank} from Gap-$k$-SAT to $\CVP_p$ using $(p, k)$-isolating parallelepipeds. If so, then we get stronger ``Gap-SETH-hardness'' for $\CVP_p$ for all $p$ for which there exist $(p, k)$-isolating parallelepipeds for infinitely many $k \in \Z^+$.

Although we do not know how to show this in general, 
we are able to get stronger hardness of approximation for $\CVP_1$ in this way by using the family of $(1, k)$-isolating parallelepipeds described in Corollary~\ref{cor:cvp1-seth-hardness}.
The analysis is similar to the analysis in Theorem~\ref{thm:gap2sat-to-gamma-cvpp}. In particular, a 0-1 combination of basis vectors will always be at least as close to the target vector as any other lattice vector will.

\begin{theorem}
For every $0 \leq \delta < \eps \leq 1$, there is a polynomial-time reduction from any $(\delta, \eps)$-Gap-$k$-SAT instance with $n$ variables to an instance of $\gamma$-$\CVP_1$ of rank $n$, where
\[
\gamma := \Big(\frac{\delta \cdot (k - 1) + (1 - \delta) \cdot (k + 1)}{\eps \cdot (k - 1) + (1 - \eps) \cdot (k + 1)}\Big) > 1.
\]
\label{thm:cvp1-hardness-of-approx}
\end{theorem}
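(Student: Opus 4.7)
The plan is to mimic the reduction of Theorem~\ref{thm:max-k-sat-to-low-rank} instantiated with the explicit $(1,k)$-isolating parallelepiped from Corollary~\ref{cor:cvp1-seth-hardness}, namely $\vec{v}_1 = \cdots = \vec{v}_k = \frac{1}{k-1}(1,1)^T$ and $\vec{t}^* = \frac{1}{k-1}(k,1)^T$, but \emph{without} the $2\alpha^{1/p}I_n$ block and its associated target entries. So given a $(\delta,\eps)$-Gap-$k$-SAT instance $\Phi$ with $n$ variables and $m$ clauses, I would build a basis $B \in \R^{2m \times n}$ consisting of $m$ blocks $B_i \in \R^{2\times n}$, where column $j$ of $B_i$ is $\vec{v}$ if $x_j$ is a (positive) literal of $C_i$, $-\vec{v}$ if $\lnot x_j$ is a literal of $C_i$, and zero otherwise; and target $\vec{t}\in\R^{2m}$ with $i$th block $\vec{t}_i = \vec{t}^* - |N_i|\vec{v}$. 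Set $r := \eps m + (1-\eps)m\cdot(k+1)/(k-1)$.

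The key computation, exactly as in the proof of Theorem~\ref{thm:max-k-sat-to-low-rank}, is that for any $\vec{y}\in\Z^n$,
\[
B_i\vec{y} - \vec{t}_i \;=\; \sigma_i(\vec{y})\,\vec{v} - \vec{t}^*,\qquad \sigma_i(\vec{y}) := \sum_{s\in P_i} y_{\ind(\ell_{i,s})} + \sum_{s\in N_i}\bigl(1 - y_{\ind(\ell_{i,s})}\bigr),
\]
so $\|B_i\vec{y} - \vec{t}_i\|_1 = \frac{1}{k-1}\bigl(|\sigma_i(\vec{y})-k| + |\sigma_i(\vec{y})-1|\bigr)$. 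Over integers this piecewise-linear function equals $1$ for $\sigma_i \in \{1,\ldots,k\}$, equals $(k+1)/(k-1)$ for $\sigma_i=0$, and is strictly larger for $\sigma_i \notin \{0,1,\ldots,k\}$. In particular, for $\vec{y} \in \{0,1\}^n$ the clause contributes $1$ if $C_i$ is satisfied by $\vec{y}$ and $(k+1)/(k-1)$ otherwise, which after summing gives the claimed YES/NO thresholds and the stated gap $\gamma$.

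The one step that needs genuine argument (and where the absence of the identity gadget is risky) is showing that 0-1 vectors suffice. Following the corresponding step in Theorem~\ref{thm:gap2sat-to-gamma-cvpp}, I would define $\chi(\vec{y})_j := \mathbf{1}[y_j\geq 1]$ for arbitrary $\vec{y}\in\Z^n$, and prove coordinate-wise that $\|B_i\chi(\vec{y})-\vec{t}_i\|_1\leq\|B_i\vec{y}-\vec{t}_i\|_1$. Split into two cases: if $C_i$ is satisfied by $\chi(\vec{y})$, then $\|B_i\chi(\vec{y})-\vec{t}_i\|_1 = 1$, which is the global minimum over integer $\sigma_i$, so the inequality is automatic. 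If $C_i$ is not satisfied by $\chi(\vec{y})$, then every positive literal variable $j \in P_i$ has $y_j\leq 0$ and every negative literal variable $j\in N_i$ has $y_j\geq 1$, which forces $\sigma_i(\vec{y})\leq 0 = \sigma_i(\chi(\vec{y}))$; since the piecewise-linear function is non-increasing for $\sigma_i\leq 1$, this gives $\|B_i\vec{y}-\vec{t}_i\|_1 \geq (k+1)/(k-1) = \|B_i\chi(\vec{y})-\vec{t}_i\|_1$.

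The main (and only) obstacle is this last monotonicity argument — everything else is essentially an instantiation of Theorem~\ref{thm:max-k-sat-to-low-rank}. The reason it works specifically in $\ell_1$ with this degenerate parallelepiped is that all $\vec{v}_s$ coincide, collapsing the clause contribution to a one-parameter function of $\sigma_i$, so we can simply check which directions of deviation from $\{0,1\}$ hurt. Once this is established, the YES/NO separation of $1$ vs.\ $(k+1)/(k-1)$ per clause combines with $\val(\Phi)\geq\eps$ vs.\ $\val(\Phi)<\delta$ to yield exactly the ratio $\gamma = \bigl(\delta(k-1)+(1-\delta)(k+1)\bigr)/\bigl(\eps(k-1)+(1-\eps)(k+1)\bigr)$, which is strictly larger than $1$ because the numerator minus denominator equals $2(\eps-\delta)>0$ (up to a positive factor).
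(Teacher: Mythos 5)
Your proposal is correct and follows essentially the same route as the paper's proof: the same basis and target (up to the harmless $1/(k-1)$ scaling of the $(1,k)$-isolating parallelepiped), the same per-clause contribution computation, and the same rounding map $\chi(\vec{y})_j = \mathbf{1}[y_j \geq 1]$ with the identical two-case monotonicity argument showing that $0$-$1$ combinations are always at least as close. No gaps.
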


\begin{proof}
Given a $(\delta, \eps)$-Gap-$k$-SAT instance with $n$ variables $x_1, \ldots, x_n$ and $m$ clauses $C_1, \ldots, C_m$, we construct a $\CVP_1$ instance $(B, \vec{t}, r)$ as follows.
The basis $B \in \Z^{(2m) \times n}$ and target vector $\vec{t} \in \Z^{2m}$ in the output instance have the form
\[
B = 
\left(
 \begin{array}{c}
B_1 \\
\vdots \\
B_m \\
 \end{array}
\right), \qquad
\vec{t} = 
\left(
 \begin{array}{c}
\vec{t}_1 \\
\vdots \\
\vec{t}_m
 \end{array}
\right),
\]
with blocks $B_i \in \Z^{2 \times n}$ and $\vec{t}_i \in \Z^2$ for $1 \leq i \leq m$. For every $1 \leq i \leq m$ and $1 \leq j \leq n$, set the $j$th column $(B_i)_j$ of block $B_i$ (corresponding to the clause $C_i = \lor_{s=1}^k \ell_{i, s}$) as

\[
  (B_i)_j :=\left \{
\begin{array}{rl} 
(1, 1)^T   & \textrm{if $C_i$ contains $x_j$}, \\
- (1, 1)^T & \textrm{if $C_i$ contains $\lnot x_j$}, \\
(0, 0)^T   & \textrm{otherwise},
\end{array}
\right.
\]
and set $\vec{t}_i := (k, 1)^T - |N_i| \cdot (1, 1)^T$. Set $r := \eps m (k - 1) + (1 - \eps) \cdot m (k + 1)$.

We claim that there is always a 0-1 combination of basis vectors that is closest to $\vec{t}$. Assuming the claim, we analyze $\norm{B\vec{y} - \vec{t}}$ only for $\vec{y} \in \set{0, 1}^n$ without loss of generality, while deferring the proof of the claim until the end.

Let $\vec{y} \in \set{0, 1}^n$ be an assignment to the variables of $\Phi$. For every $1 \leq i \leq m$,

\begin{align}
\begin{split}
\norm{B_i \vec{y} - \vec{t}_i}_1 &= \Big\|\sum_{s \in P_i} y_{\ind(\ell_{i, s})} \cdot (1, 1)^T - \sum_{s \in N_i} y_{\ind(\ell_{i, s})} \cdot (1, 1)^T - \Big((k, 1)^T - |N_i| \cdot (1, 1)^T \Big)\Big\|_1 \\
          &= \Big\|\sum_{s \in P_i} y_{\ind(\ell_{i, s})} \cdot (1, 1)^T + \sum_{s \in N_i} \big(1 - y_{\ind(\ell_{i, s})} \big) \cdot (1, 1)^T - (k, 1)^T \Big\|_1 \\
          &= \Big\||S_i(\vec{y})| \cdot (1, 1)^T - (k, 1)^T \Big\|_1.
\label{eq:approx-reduction-coordinate-l1}
\end{split}
\end{align}
The last expression is equal to $k - 1$ if $\vec{y}$ satisfies $C_i$ (i.e. if $|S_i(\vec{y})| \geq 1$) and $k + 1$ if not. We therefore have 

\[
\norm{B\vec{y} - \vec{t}}_1 = \sum_{i=1}^m \| B_i\vec{y} - \vec{t}_i \|_1 = m^+(\vec{y}) \cdot (k - 1) + (m - m^+(\vec{y})) \cdot (k + 1).
\]
Therefore, if $\val(\Phi) \geq \eps$ then there exists $\vec{y} \in \set{0, 1}^n$ such that $\norm{B\vec{y} - \vec{t}}_1 \leq \eps m (k - 1) + (1 - \eps) \cdot m (k + 1) = r$,
and if $\val(\Phi) < \delta$ then for every $\vec{y} \in \set{0, 1}^n$, it holds that $\norm{B\vec{y} - \vec{t}}_1 > \delta m (k - 1) + (1 - \delta) m (k + 1) = \gamma r$.
It follows that the reduction achieves the claimed approximation factor of $\gamma$.

It remains to prove the claim that there is always a 0-1 combination of basis vectors that is a closest vector to $\vec{t}$. We show this by demonstrating that for every $\vec{y} \in \Z^n$ there exists $\chi(\vec{y}) \in \set{0, 1}^n$ such that $\norm{B \chi(\vec{y}) - \vec{t}}_1 \leq \norm{B \vec{y} - \vec{t}}_1$. Given $\vec{y} \in \Z^n$, let $\chi(\vec{y}) \in \set{0, 1}^n$ denote the vector whose coordinate is set to $1$ if $y_i \geq 1$ and is set to $0$ otherwise.

Note that $\norm{c (1, 1)^T - (k, 1)^T}_1 = k - 1$ for all $c \in [k]$, and $\norm{c (1, 1)^T - (k, 1)^T}_1 \geq k + 1$ for all $c \in \Z \setminus [k]$. Fix $\vec{y} \in \Z^n$ and $1 \leq i \leq n$, and refer to Equation~\eqref{eq:approx-reduction-coordinate-l1}. If $\chi(\vec{y})$ satisfies $C_i$ then $\norm{B_i \cdot \vec{y} - \vec{t}_i}_1 \geq \norm{B_i \cdot \chi(\vec{y}) - \vec{t}_i}_1 = k - 1$. On the other hand, if $\chi(\vec{y})$ does not satisfy $C_i$, then $y_{\ind(\ell_{i, s})} \leq 0$ for all $s \in P_i$ and $1 - y_{\ind(\ell_{i, s})} \leq 0$ for all $s \in N_i$ so that $\norm{B_i \cdot \vec{y} - \vec{t}_i}_1 \geq \norm{B_i \cdot \chi(\vec{y}) - \vec{t}_i}_1 = k + 1$. Combining these cases it follows that for all $\vec{y} \in \Z^n$ and $1 \leq i \leq n$, $\norm{B_i \cdot \chi(\vec{y}) - \vec{t}_i}_1 \leq \norm{B_i \cdot \vec{y} - \vec{t}_i}_1$, and therefore $\norm{B \cdot \chi(\vec{y}) - \vec{t}}_1 \leq \norm{B\vec{y} - \vec{t}}_1$, proving the claim.

\end{proof}

\section{Additional hardness results}
\label{sec:other}
In this section we prove a number of additional results about the quantitative hardness of $\CVP$ and related problems. In Section~\ref{sec:cvpp}, we give a reduction from Max-$2$-SAT to $\CVPP_p$ for all $p \in [1, \infty)$, proving Theorem~\ref{thm:cvpp_intro}. In Section~\ref{sec:ETH-MAX-CUT}, we give a reduction from Max-$k$-SAT (and in particular Max-$2$-SAT) to $\CVP_p$ for all $p \in [1, \infty)$, proving Theorem~\ref{thm:maxcut}. 
Finally, in Section~\ref{sec:infinity!} we give a reduction from $k$-SAT to $\CVP_{\infty}$ and $\SVP_{\infty}$, proving the special case of Theorem~\ref{thm:main} for $p = \infty$.

Our reductions all use the same high-level idea as the reduction given in Theorem~\ref{thm:max-k-sat-to-low-rank}, but each uses new ideas as well. Throughout this section we adopt the notation from Section~\ref{sec:geom-char}.

\subsection{Hardness of \texorpdfstring{$\CVPP_p$}{CVPP-p}}
\label{sec:cvpp}

In this section, we prove ETH-hardness of $\CVPP$. To do this, for every $n$, we construct a single lattice $\lat_n \subset \R^d$ of rank $O(n^2)$, such that for every $n$-variable instance of Max-$2$-SAT, there exists an efficiently computable $\vec{t} \in \R^d$ that is close to the lattice if and only if $\Phi$ is satisfiable. 
Clearly, any efficient algorithm for $\CVPP$ on this lattice would imply a similarly efficient algorithm for Max-$2$-SAT (and also $3$-SAT, as described below).

Our basis $B_n$ for $\lat_n$ will encode all possible $O(n^2)$ clauses of a Max-$2$-SAT instance on $n$ variables, together with a gadget that will allow us to ``switch on or off'' each clause by only changing the coordinates of the \emph{target vector} $\vec{t}$. (This gadget costs us a quadratic blow-up in the lattice rank.) Then, given an instance $(\Phi,W)$ of Max-$2$-SAT, we define the target vector $\vec{t}$ such that it ``switches on'' all clauses from $\Phi$ and ``switches off'' all the remaining clauses. 

\begin{lemma}
\label{lem:cvpp}
For every $p \in [1, \infty)$, there is a pair of polynomial-time algorithms $(P, Q)$ (in analogy to the definition of $\CVPP$) that behave as follows.
\begin{enumerate}
	\item On input an integer $n \geq 1$, $P$ outputs a basis $B_n \in\R^{d\times N}$ of a rank $N$ lattice $\lat_n \subset \R^d$, where $d = d(n) = O(n^2)$ and $N = N(n) = O(n^2)$.
	\item On input a Max-$2$-SAT instance with $n$ variables, $Q$ outputs a target vector $\vec{t} \in \R^d$ and a distance bound $r \geq 0$ such that $\dist_p(\vec{t}, \lat_n) \leq r$ if and only if the input is a `YES' instance.
\end{enumerate}
\end{lemma}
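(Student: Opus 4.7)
The plan is to construct a single rank-$O(n^2)$ lattice $\lat_n$ whose basis simultaneously encodes all $|C|=O(n^2)$ possible $2$-SAT clauses on $n$ variables, and to use the target vector alone to ``activate'' the clauses belonging to $\Phi$ while ``deactivating'' the others. The preprocessing $P$ outputs a basis with three groups of rows: an identity block of $n$ rows with $2\alpha^{1/p}I_n$ below the $n$ ``variable'' columns, forcing $\vec{z}\in\set{0,1}^n$ for $\alpha$ large, exactly as in Theorem~\ref{thm:maxcut}; one ``clause row'' per possible clause $c\in C$, where variable columns contribute the usual $\pm 2,0$ entries from the Max-$2$-SAT reduction; and, for each clause $c$, four dedicated ``cost rows'' together with four ``switch columns'' whose only non-zero entries lie in $c$'s block. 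Each switch column contributes $\pm 2$ (two of each sign) to the clause row and a large integer $L$ to exactly one of $c$'s four cost rows. The resulting basis has rank $n+4|C|=O(n^2)$ and ambient dimension $O(n^2)$.

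The query algorithm $Q$, given $(\Phi,W)$, sets the target $\vec{t}$ as follows: the identity coordinates are $\alpha^{1/p}$; each clause row is $3-2|N_c|$ if $c\in\Phi$ and $0$ otherwise; and the four cost rows of $c$ are all set to $0$ if $c\in\Phi$ and all set to $L/2$ if $c\notin\Phi$. The analysis then splits into a per-clause case analysis. For an active clause $c\in\Phi$, the cost-row target $0$ together with the switch-column entry $L$ makes any non-zero switch coefficient incur cost at least $L^p$; choosing $L$ a sufficiently large integer (e.g., $L=\poly(n)$) forces every switch coefficient to $0$, whereupon the clause contributes exactly $\bigl|2|S_c(\vec{z})|-3\bigr|^p\in\{1,3^p\}$ exactly as in Theorem~\ref{thm:maxcut}. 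For an inactive clause $c\notin\Phi$, the half-integer target $L/2$ makes any switch coefficient in $\{0,1\}$ equally cheap on its cost row (each contributing $(L/2)^p$), while the four $\pm 2$ shifts with coefficients in $\set{0,1}^4$ realize every even integer in $\{-4,-2,0,2,4\}$, which covers every value of $-2K_c$ where $K_c:=\sum_{s\in P_c}z_{\ind(\ell_{c,s})}-\sum_{s\in N_c}z_{\ind(\ell_{c,s})}$ ranges over $\{-|N_c|,\dots,|P_c|\}$ as $\vec{z}$ varies. The algorithm can therefore cancel the clause row exactly and pay a constant $4(L/2)^p$ per inactive clause, independent of $\vec{z}$.

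Putting the pieces together, minimizing $\|B\vec{x}-\vec{t}\|_p^p$ over the switch coefficients (for any $\vec{z}\in\set{0,1}^n$) gives $\alpha n+4(|C|-|\Phi|)(L/2)^p+m^+(\vec{z})+3^p(|\Phi|-m^+(\vec{z}))$, a strictly decreasing affine function of $m^+(\vec{z})$. Setting $r$ so that $r^p$ equals this expression at $m^+(\vec{z})=W$ yields $\dist_p(\vec{t},\lat_n)\leq r$ iff some assignment satisfies at least $W$ clauses of $\Phi$, which is precisely the input Max-$2$-SAT instance. The main technical obstacle will be simultaneously verifying (i) that $L^p$ dominates the at-most-$3^p$ savings a switch column could yield for an active clause (ruling out also the harder case of multiple switch columns used together), and (ii) that the four $\pm 2$ shifts with $y\in\set{0,1}^4$ really cover every value of $2K_c$ arising in every clause type (including unit clauses and each of the four sign patterns on a pair); both hold once $L$ is a sufficiently large integer and $\alpha$ is, as usual, taken large enough to prevent any $\vec{z}\notin\set{0,1}^n$ from being optimal.
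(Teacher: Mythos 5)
Your proposal is correct and follows essentially the same route as the paper: preprocess a single lattice encoding all $O(n^2)$ possible $2$-clauses, and let the target vector alone ``switch'' each clause on or off via per-clause auxiliary columns, with the distance threshold then tracking the number of satisfied clauses exactly as in the Max-$2$-SAT reduction. The only difference is the switching gadget: the paper uses a single extra column per clause (a $+1$ in a $\pm 1$-encoded clause row with half-integer target $3/2-|N_i|$, enforced by the global identity gadget with target $0$ or $\alpha^{1/p}$), whereas you use four switch columns with dedicated cost rows and a large entry $L$ targeted at $0$ or $L/2$ to cancel the clause row outright; both yield rank $O(n^2)$ and the same conclusion.
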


\begin{proof}

Let $M=4\binom{n}{2}=O(n^2)$ be the total possible number of $2$-clauses on $n$ variables, and let $C_1,\ldots,C_M$ denote those clauses.

The algorithm $P$ constructs the basis $B_n\in\R^{d\times N}$, where $d:=n+2M, N:=n+M$,  as
\[
B := 
\begin{pmatrix}
(\vec{b}_1^T, \vec{c}_1^T) \\
\vdots \\
(\vec{b}_M^T, \vec{c}_M^T) \\
2 \alpha^{1/p} I_{N}
\end{pmatrix}, 
\]
with rows $(\vec{b}_i^T, \vec{c}_i^T)$ of $B$ satisfying $\vec{b}_i \in \R^n$ and $\vec{c}_i \in \R^{M}$ for $1 \leq i \leq M$, and where $\alpha :=2^p M$.
For every $1 \leq i \leq M$ and $1 \leq j \leq n$, set the $j$th coordinate of $\vec{b}_i$ (corresponding to the clause $C_i = \ell_{i,1} \lor \ell_{i,2}$) as

\[
(\vec{b}_i)_j :=\left \{
\begin{array}{rl} 
1 & \text{if $x_j$ appears in the $i$th clause}, \\
-1 & \text{if $\lnot x_j$ appears in the $i$th clause}, \\
0  & \text{otherwise}.
\end{array}
\right.
\]
For every $1 \leq i \leq M$, set $\vec{c}_i^T := (\vec{0}_{(i - 1)}^T, 1,\vec{0}_{(M - i)}^T)$, i.e., set $(\vec{c}_1,\ldots, \vec{c}_M)=I_M$.

Given an instance $(\Phi, W)$ of Max-$2$-SAT with $m$ clauses, the algorithm $Q$ outputs 
\[
r:=\big((M-m+W) \cdot 1/2^p + (m - W)\cdot (3/2)^p+\alpha(n+M-m)\big)^{1/p}
\]
and $\vec{t}\in\R^d$ defined as 
\[
\vec{t} := 
\left(
 \begin{array}{c}
u_1 \\
\vdots \\
u_M \\
\alpha^{1/p}\cdot\vec{1}_{n}\\
 v_1\\
\vdots\\
v_M\\
 \end{array}
\right),
\]
where for $1\leq i\leq M$, $u_i=3/2-|N_{i}|$, and $v_i=0$ if the clause $C_i$ appears in the formula $\Phi$ and $v_i=\alpha^{1/p}$ otherwise.

Clearly both algorithms are efficient. We now analyze for which $\vec{y}\in\Z^N$ we have $\|B\vec{y}-\vec{t}\|_p\leq r$. 
Note that the vector $\vec{v}=(v_1,\ldots,v_M)^T$ has exactly $m$ coordinates equal to zero, and $M-m$ coordinates equal to $\alpha^{1/p}$.
Given $\vec{y} \notin \set{0, 1}^{N}$, we have 
\[
\norm{B\vec{y} - \vec{t}}_p^p \geq \norm{2\alpha^{1/p} I_N \vec{y} - (\alpha^{1/p}\vec{1}_{n}^T,\vec{v}^T)^T}_p^p \geq \alpha(n+M-m+1)\geq 2^p M + \alpha(M+n-m) > r^p
\; .
\]
 Furthermore, if $\vec{y} \in \set{0, 1}^{N}$ has a non-zero coordinate $y_{n+M+i}$ (for $1\leq i\leq  M$) at a position corresponding to a $t_{n+M+i}=0$ in $\vec{t}$ (i.e., $C_{i}\in\Phi$), then again  $\norm{B\vec{y} - \vec{t}}_p^p \geq \alpha(n+M-m+1) > r^p$. So, we can restrict our attention to $\vec{y} \in\set{0, 1}^{N}$ with $y_{n+M+i}=0$ whenever $C_i\in\Phi$.

Consider an assignment $\vec{a} \in \set{0, 1}^n$ to the $n$ variables of $\Phi$. Take $(\vec{y}')^T = (y'_1,\ldots y'_{M})^T \in \set{0, 1}^{M}$, and set $\vec{y} := (\vec{a}^T, (\vec{y}')^T)^T$. Then, for $1 \leq i \leq M$,
\begin{align*}
\Big|\langle(\vec{b}_i^T, \vec{c}_i^T)^T, \vec{y}\rangle - t_i \Big| &= \Big|\sum_{s \in P_i} y_{\ind(\ell_{i, s})} - \sum_{s \in N_i} y_{\ind(\ell_{i, s})} + \langle\vec{c}_i, \vec{y}'\rangle - (3/2 - |N_i|) \Big| \\
                              &= \Big|\sum_{s \in P_i} y_{\ind(\ell_{i, s})} - \sum_{s \in N_i} (1 - y_{\ind(\ell_{i, s})}) + y'_i - 3/2 \Big| \\
                              &= \Big||S_i(\vec{a})| + y'_i - 3/2 \Big| .
\end{align*}
If $C_i\notin\Phi$, then there exists $y'_i \in \set{0,1}$, such that $|\langle(\vec{b}_i^T, \vec{c}_i^T)^T, \vec{y}\rangle - t_i |=1/2$. Moreover, the choice of $y'_i$ does not affect $|\langle(\vec{b}_i^T, \vec{c}_i^T)^T, \vec{y}\rangle - t_{i'}|$ for $i' \neq i$.
If $C_i\in\Phi$ and $|S_i(\vec{a})| >0$ for $y'_i=0$, then $|\langle(\vec{b}_i^T, \vec{c}_i^T)^T, \vec{y}\rangle - t_i |=1/2$.
On the other hand, if $C_i\in\Phi$ and $|S_i(\vec{a})| =0$, then $y'_i=0$ implies
$|\langle(\vec{b}_i^T, \vec{c}_i^T)^T, \vec{y}\rangle - t_i| \geq 3/2$. 

Because $|S_i(\vec{a})| \geq 1$ if and only if $C_i$ is satisfied, we see that the following holds if and only if the number of satisfied clauses $m^+(\vec{a})$ is at least $W$: 

There exists a $\vec{y}'$ such that, setting $\vec{y} := (\vec{a}, \vec{y}')$, we have
\begin{align*}
\norm{B\vec{y} - \vec{t}}_p^p &= \Big(\sum_{i=1}^M |\langle(\vec{b}_i^T, \vec{c}_i^T)^T, \vec{y}\rangle - t_i|^p\Big) + \alpha (n+M-m)\\ 
&= (M-m+m^+(\vec{a})) \cdot (1/2)^p + (m - m^+(\vec{a}) ) \cdot (3/2)^p + \alpha (n+M-m)\\
&\leq (M-m+W) \cdot (1/2)^p + (m - W) \cdot (3/2)^p + \alpha (n+M-m)\\
&= r^p
\; .
\end{align*}

Therefore, there exists $\vec{y}$ with $\norm{B\vec{y} - \vec{t}}_p \leq r$ if and only if $(\Phi, W)$ is a `YES' instance of Max-$2$-SAT.
\end{proof}

\begin{proof}[Proof of Theorem~\ref{thm:cvpp_intro}]
The main statement follows from Lemma~\ref{lem:cvpp}. The ``in particular'' part follows from Lemma~\ref{lem:cvpp} and the existence of a reduction from $3$-SAT with $n$ variables clauses to (many) Max-$2$-SAT instances with $O(n)$ variables. Indeed, such a reduction follows by applying the Sparsification Lemma (Proposition~\ref{prop:sparsification}), and then reducing each sparse $3$-SAT instance to a Max-$2$-SAT instance with only a linear blow-up in the number of variables (see, e.g., the reduction in~\cite[Theorem 1.1]{GJS76}).
\end{proof}

\subsection{ETH- and Max-$2$-SAT-hardness for all \texorpdfstring{$p \in [1, \infty)$}{1 <= p < infinity}}
\label{sec:ETH-MAX-CUT}

\begin{theorem}
For every $p = p(n) \in [1, \infty)$ and $k \geq 2$ there is a polynomial-time reduction from any (Weighted-)Max-$k$-SAT instance with $n$ variables and $m$ clauses to a $\CVP_p$ instance of rank $n + (k - 2)m$.
\label{thm:max-k-sat-to-high-rank}
\end{theorem}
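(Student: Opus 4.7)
The plan is to extend the Max-$2$-SAT reduction of Theorem~\ref{thm:maxcut} by introducing $k-2$ auxiliary binary variables per clause. These auxiliaries will provide the flexibility needed to force every satisfied clause to contribute the same amount to the $\ell_p$ distance regardless of how many of its literals are satisfied, while unsatisfied clauses contribute strictly more. This is precisely the ``simple way to get around [the imbalance] issue by adding to the rank of the lattice'' alluded to in Section~\ref{subsec:techniques}.

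Given a Max-$k$-SAT instance $(\Phi, W)$ with $n$ variables and $m$ clauses, the reduction will output a $\CVP_p$ instance $(B, \vec{t}, r)$ of rank $N := n + (k-2)m$, where
\[
B := \begin{pmatrix} B_1 \\ \vdots \\ B_m \\ 2\alpha^{1/p} I_{N} \end{pmatrix}, \qquad \vec{t} := \begin{pmatrix} t_1 \\ \vdots \\ t_m \\ \alpha^{1/p} \vec{1}_{N} \end{pmatrix} .
\]
Each clause row $B_i \in \R^{1 \times N}$ will have $\pm 2$ at positions corresponding to the literals of $C_i$ (with sign matching positive or negative occurrence), $+2$ at the $k-2$ positions corresponding to the auxiliary variables assigned to $C_i$, and $0$ elsewhere. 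The clause target entries will be $t_i := (2k-1) - 2|N_i|$, and $\alpha$ will be taken sufficiently large (e.g., $\alpha > 3^p m$).

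The main calculation goes as follows. For any $\vec{y} = (\vec{a}, \vec{z}) \in \{0,1\}^N$, where $\vec{a}$ is an assignment to $\Phi$'s variables and $\vec{z}$ groups the auxiliaries by clause, the $i$-th clause coordinate of $B\vec{y} - \vec{t}$ will equal
\[
2\bigl(S_i(\vec{a}) + s_i\bigr) - (2k-1), \qquad s_i := \sum_j z_{i,j} \in \{0, \ldots, k-2\} .
\]
This quantity is always an odd integer, so its magnitude is at least $1$, with equality iff $S_i + s_i \in \{k-1, k\}$. When $S_i \geq 1$, some such $s_i$ exists in the allowed range (take $s_i = k-1-S_i$ for $S_i \leq k-1$ and $s_i = 0$ for $S_i = k$), whereas when $S_i = 0$ the smallest achievable magnitude is $3$, attained at $s_i = k-2$. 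Minimizing over the auxiliary choices, each satisfied clause therefore contributes $1$ and each unsatisfied clause contributes $3^p$ to $\|B\vec{y} - \vec{t}\|_p^p$, giving $\alpha N + m^+(\vec{a}) + 3^p(m - m^+(\vec{a}))$. Setting $r^p := \alpha N + W + 3^p(m - W)$ then makes $\dist_p(\vec{t}, \lat) \leq r$ equivalent to $m^+(\vec{a}) \geq W$ for some $\vec{a}$.

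Handling $\vec{y} \notin \{0,1\}^N$ is standard: any non-Boolean coordinate inflates the identity-gadget contribution by at least $\alpha(3^p - 1)$, which will exceed the slack $r^p - \alpha N \leq 3^p m$ once $\alpha$ is taken large enough. The weighted extension follows by multiplying each $B_i$ and $t_i$ by $w(C_i)^{1/p}$ and rescaling $\alpha$, exactly as in Theorem~\ref{thm:isolating_implies_reduction}. The only delicate step will be the arithmetic case analysis verifying that the range $\{0, \ldots, k-2\}$ of $s_i$ is \emph{exactly} wide enough to shift every $S_i \in \{1, \ldots, k\}$ into $\{k-1, k\}$ while failing to do so when $S_i = 0$; this is a direct calculation, and the construction specializes cleanly at $k=2$ (no auxiliaries, $s_i = 0$, $t_i = 3 - 2|N_i|$) to recover the Max-$2$-SAT reduction of Theorem~\ref{thm:maxcut}.
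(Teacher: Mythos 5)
Your proposal is correct and is essentially the paper's own proof: the paper also appends $k-2$ auxiliary $0$--$1$ slack variables per clause (giving rank $n+(k-2)m$), chosen so that any satisfied clause can be tuned to one of two values equidistant from the clause target while an unsatisfied clause lands strictly farther, plus the same scaled identity gadget and the same weighted extension. The only differences are cosmetic normalizations: the paper uses $\pm 1$ literal entries, auxiliary coefficient $-1$, and target $t_i = 3/2 - |N_i|$ (contributions $(1/2)^p$ vs.\ $(3/2)^p$), whereas you use $\pm 2$ entries, auxiliary coefficient $+2$, and $t_i = (2k-1) - 2|N_i|$ (contributions $1$ vs.\ $3^p$), which is the same construction up to scaling and complementing the auxiliaries.
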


\begin{proof}
For simplicity we give a reduction from unweighted Max-$k$-SAT. The modification sketched for reducing from Weighted Max-$k$-SAT in Theorem~\ref{thm:max-k-sat-to-low-rank} works here as well.
Namely, we give a reduction from a Max-$k$-SAT instance $(\Phi, W)$ to an instance $(B, \vec{t}, r)$ of $\CVP_p$. Here the formula $\Phi$ is on $n$ variables $x_1, \ldots, x_n$ and $m$ clauses $C_1, \ldots, C_m$. $(\Phi, W)$ is a `YES' instance if there exists an assignment $\vec{a}$ such that $m^+ \geq W$.
We assume without loss of generality that each variable appears at most once per clause. We define the output $\CVP_p$ instance as follows. Let $d := n + (k - 1)m$ and let $N := n + (k - 2)m$. The basis $B \in \R^{d \times N}$ in the output instance has the form

\[
B = 
\begin{pmatrix}
(\vec{b}_1^T, \vec{c}_1^T) \\
\vdots \\
(\vec{b}_m^T, \vec{c}_m^T) \\
2 \alpha^{1/p} \cdot I_{N}
\end{pmatrix}, \qquad
\vec{t} = 
\begin{pmatrix}
t_1 \\
\vdots \\
t_m \\
\alpha^{1/p} \cdot \vec{1}_{N}
\end{pmatrix}
\]
with rows $(\vec{b}_i^T, \vec{c}_i^T)$ of $B$ satisfying $\vec{b}_i \in \R^n$ and $\vec{c}_i \in \R^{m(k-2)}$ for $1 \leq i \leq m$, and where $\alpha := W \cdot (\half)^p + (m - W) \cdot (\frac{3}{2})^p$. For every $1 \leq i \leq m$ and $1 \leq j \leq n$, set the $j$th coordinate of $\vec{b}_i$ (corresponding to the clause $C_i = \lor_{s=1}^k \ell_{i,s}$) as

\[
(\vec{b}_i)_j :=\left \{
\begin{array}{rl} 
1 & \textrm{if $x_j$ is in the $i$th clause}, \\
-1 & \textrm{if $\lnot x_j$ is in the $i$th clause}, \\
0  & \textrm{otherwise}.
\end{array}
\right.
\]
For every $1 \leq i \leq m$, set $\vec{c}_i^T := (\vec{0}_{(i - 1)(k - 2)}^T, - \vec{1}_{k-2}^T, \vec{0}_{(m - i)(k - 2)}^T)$. I.e., each $\vec{c}_i$ has a block of $-1$s of length $k - 2$, and $\vec{c}_i$, $\vec{c}_{i'}$ are coordinate-wise disjoint for $i \neq i'$.
For every $1 \leq i \leq m$ set $t_i := 3/2 - |N_i|$. Finally, set $r := (\alpha (N + 1))^{1/p}$.

We next analyze for which $\vec{y} \in \Z^{N}$ it holds that $\norm{B\vec{y} - \vec{t}}_p \leq r$. Given $\vec{y} \notin \set{0, 1}^{N}$, $\norm{B\vec{y} - \vec{t}}_p^p \geq \norm{2 \alpha^{1/p} I_N \vec{y} - \alpha^{1/p}\vec{1}_{N}}_p^p \geq \alpha (N + 2) > r^p$, so we only need to analyze the case when $\vec{y} \in \set{0, 1}^n$.

Consider an assignment $\vec{a} \in \set{0, 1}^n$ to the variables of $\Phi$, take $(\vec{y}')^T = ((\vec{y}_1')^T, \ldots, (\vec{y}_m')^T) \in \set{0, 1}^{(k-2)m}$ with $\vec{y}_i' \in \set{0, 1}^{k-2}$ for $1 \leq i \leq m$, and set $\vec{y}^T := (\vec{a}^T, (\vec{y}')^T)$. Then, for $1 \leq i \leq m$,
\begin{align*}
\Big|\langle(\vec{b}_i^T, \vec{c}_i^T)^T, \vec{y}\rangle - t_i \Big| &= \Big|\sum_{s \in P_i} y_{\ind(\ell_{i, s})} - \sum_{s \in N_i} y_{\ind(\ell_{i, s})} + \langle\vec{c}_i, \vec{y}_i'\rangle - (3/2 - |N_i|) \Big| \\
                              &= \Big|\sum_{s \in P_i} y_{\ind(\ell_{i, s})} + \sum_{s \in N_i} (1 - y_{\ind(\ell_{i, s})}) - \norm{\vec{y}_i'}_1 - 3/2 \Big| \\
                              &= \Big||S_i(\vec{a})| - \norm{\vec{y}_i'}_1 - 3/2 \Big| .
\end{align*}

It follows that if $|S_i(\vec{a})| = 0$ then $|\langle(\vec{b}_i^T, \vec{c}_i^T)^T, \vec{y}\rangle - t_i| \geq \frac{3}{2}$. On the other hand, if $|S_i(\vec{a})| \geq 1$, then there exists $\vec{y}_i' \in \set{0, 1}^{k-2}$ such that $|\langle(\vec{b}_i^T, \vec{c}_i^T)^T, \vec{y}\rangle - t_i| = \half$. Indeed, picking any $\vec{y}_i'$ with Hamming weight $|S_i(\vec{a})| - 2$ or $|S_i(\vec{a})| - 1$ achieves this. Moreover, the choice of $\vec{y}_i'$ does not affect $|\langle(\vec{b}_i^T, \vec{c}_i^T)^T, \vec{y}\rangle - t_{i'}|$ for $i' \neq i$.

Because $|S_i(\vec{a})| \geq 1$ if and only if $C_i$ is satisfied, we see that the following holds if and only if the number of satisfied clauses $m^+(\vec{a})$ is at least $W$: 

There exists a $\vec{y}'$ such that, setting $\vec{y} := (\vec{a}, \vec{y}')$, we have
\begin{align*}
\norm{B\vec{y} - \vec{t}}_p^p &= \Big(\sum_{i=1}^m |\langle(\vec{b}_i^T, \vec{c}_i^T)^T, \vec{y}\rangle - t_i|^p\Big) + \alpha N\\ 
&= m^+(\vec{a}) \cdot (1/2)^p + (m - m^+(\vec{a})) \cdot (3/2)^p + \alpha N\\
&\leq W \cdot (1/2)^p + (m - W) \cdot (3/2)^p + \alpha N \\
&= r^p
\; .
\end{align*}

Therefore, there exists $\vec{y}$ such that 
$\norm{B\vec{y} - \vec{t}}_p \leq r$ if and only if $(\Phi, W)$ is a `YES' instance of Max-$k$-SAT.
\end{proof}

We remark that a straightforward modification of the preceding reduction gives a reduction from an instance of Max-$k$-SAT with $k \geq 3$ on $n$ variables and $m$ clauses to a $\CVP_p$ instance of rank $n + (\lfloor \log_2(k - 2) \rfloor + 1) m$ (as opposed to rank $n + (k - 2) m$). The idea is to encode the value $k - 2$ (corresponding to the row-specific blocks $-\vec{1}_{k-2}$ used in the reduction) in binary rather than unary.

\begin{corollary}
For every $p \in [1, \infty)$ there is no $2^{o(n)}$-time algorithm for $\CVP_p$ assuming ETH.
\label{cor:eth-hardness}
\end{corollary}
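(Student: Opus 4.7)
The plan is to combine the reduction of Theorem~\ref{thm:max-k-sat-to-high-rank} (with $k = 3$) with the Sparsification Lemma (Proposition~\ref{prop:sparsification}) to transfer ETH-hardness of $3$-SAT to $\CVP_p$. The issue to worry about is that Theorem~\ref{thm:max-k-sat-to-high-rank} produces a lattice of rank $n + (k-2)m$, which is linear in the rank only if $m = O(n)$; the Sparsification Lemma is precisely the tool that lets us reduce to this sparse regime at essentially no cost in the exponent.

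Concretely, I would argue the contrapositive. Suppose some algorithm solves $\CVP_p$ in time $2^{o(n)}$, say in time $2^{f(n)}$ with $f(n) = o(n)$. Fix an arbitrary constant $\eps > 0$. Given a $3$-SAT instance $\Phi$ on $n$ variables, apply Proposition~\ref{prop:sparsification} with parameter $\eps$ (and $k = 3$) to obtain, in $2^{\eps n}$ time, formulas $\Psi_1, \ldots, \Psi_r$ with $r \leq 2^{\eps n}$ such that each $\Psi_i$ has at most $c n$ clauses (for $c = c(3,\eps)$ a fixed constant), and $\Phi$ is satisfiable iff some $\Psi_i$ is satisfiable. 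Then feed each $\Psi_i$, viewed as the Max-$3$-SAT instance $(\Psi_i, W_i)$ with $W_i$ equal to its number of clauses, into the reduction of Theorem~\ref{thm:max-k-sat-to-high-rank} with $k = 3$. This produces a $\CVP_p$ instance of rank at most $n + (k-2)\cdot c n = (1 + c) n$, and $(\Psi_i, W_i)$ is a YES instance iff the corresponding $\CVP_p$ instance is a YES instance.

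Running the hypothetical $\CVP_p$ solver on each of the $r \leq 2^{\eps n}$ resulting instances takes time $2^{f((1+c)n)} = 2^{o(n)}$, since $c$ is a fixed constant independent of $n$. Thus the total running time for deciding $\Phi$ is $2^{\eps n} \cdot 2^{o(n)} = 2^{(\eps + o(1))n}$. Since this holds for every constant $\eps > 0$, we obtain a $2^{o(n)}$-time algorithm for $3$-SAT, contradicting ETH (Definition~\ref{def:eth}). The main (and only real) technical point is precisely the coordination between Sparsification and the reduction: we must pick $\eps$ before invoking sparsification so that $c = c(3,\eps)$ is a fixed constant, and only then does the $(1+c)n$ rank of the $\CVP_p$ instance combine cleanly with the hypothesized $2^{f(n)}$ upper bound to yield the desired contradiction.
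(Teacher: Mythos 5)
Your proposal is correct and is essentially the paper's own proof, which simply combines the Sparsification Lemma (Proposition~\ref{prop:sparsification}) with the reduction of Theorem~\ref{thm:max-k-sat-to-high-rank}; you have just spelled out the bookkeeping. You even handle the paper's one technical caveat correctly, namely that one must start from $3$-SAT (sparsify, then view each sparse $\Psi_i$ as a Max-$3$-SAT instance with $W$ equal to its clause count) rather than from Max-$3$-SAT directly, since sparsification applies to $k$-SAT.
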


\begin{proof}
The claim follows by combining the Sparsification Lemma (Proposition~\ref{prop:sparsification}) with the reduction in Theorem~\ref{thm:max-k-sat-to-high-rank}.\footnote{As a technical point, we must reduce from $k$-SAT rather than Max-$k$-SAT to show hardness under ETH because the Sparsification Lemma works for $k$-SAT.}
\end{proof}

When $k = 2$, the rank of the $\CVP_p$ instance output by the reduction in Theorem~\ref{thm:max-k-sat-to-high-rank} is $n$. 
Therefore, we get the following corollary.
\begin{corollary}
If there exists a $2^{(\omega/3 - \eps) n}$-time (resp. $2^{(1 - \eps) n}$-time and polynomial space) algorithm for $\CVP_p$ for any $p \in [1, \infty)$ and for any constant $\eps > 0$, then there exists a $2^{(\omega/3 - \eps) n}$-time (resp. $2^{(1 - \eps) n}$-time and polynomial space) algorithm for (Weighted-)Max-2-SAT and (Weighted-)Max-Cut.
\label{cor:max-cut-hardness}
\end{corollary}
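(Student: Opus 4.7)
The plan is to simply instantiate Theorem~\ref{thm:max-k-sat-to-high-rank} at $k=2$ and then chain in the known reduction from Max-Cut to Max-$2$-SAT. First, I would observe that when $k=2$, the rank $n + (k-2)m$ of the $\CVP_p$ instance produced by Theorem~\ref{thm:max-k-sat-to-high-rank} degenerates to exactly $n$, the number of variables of the input Max-$2$-SAT formula. This is the crucial rank-preservation: an exponent of the form $c \cdot n$ in the input maps to an exponent $c \cdot n$ in the output, so the reduction is fine-grained in the required sense.

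Next, I would verify that the reduction itself runs in polynomial time (and in polynomial space, which is immediate since it just writes down an $O(n^2)$-sized basis and target vector). Given any hypothetical $\CVP_p$ solver running in time $2^{(\omega/3 - \eps)n}$ or $2^{(1-\eps)n}$ (with polynomial space), one feeds the output of the reduction into it; the total running time is the reduction time plus $2^{(\omega/3 - \eps)n}$ (resp.\ $2^{(1-\eps)n}$), which is still of the same form. For the polynomial-space version, observe that the $\CVP_p$ instance has size $\poly(n,m)$ and the reduction is space-efficient, so the overall algorithm inherits the polynomial-space bound from the hypothetical $\CVP_p$ algorithm.

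To obtain the same conclusion for Weighted Max-Cut, I would appeal to the folklore reduction referenced in Section~\ref{sec:prelims}: an instance of Weighted Max-Cut on a graph with $n$ vertices reduces in polynomial time to a Weighted Max-$2$-SAT instance on $n$ variables, again preserving the number of variables. Composing this with the $k=2$ case of Theorem~\ref{thm:max-k-sat-to-high-rank} yields a rank-$n$ $\CVP_p$ instance, so the same exponent-preservation argument applies.

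There is no real obstacle here; the whole point of isolating the $k=2$ case is that the ``extra'' $(k-2)m$ coordinates of freedom needed to balance the isolating-parallelepiped gadget for larger $k$ vanish, so the reduction is automatically rank-preserving. The only small subtlety worth noting is that Theorem~\ref{thm:max-k-sat-to-high-rank} as written handles Weighted Max-$k$-SAT (via the reweighting sketch in its proof), which is exactly what is needed to cover the ``Weighted'' versions of both Max-$2$-SAT and Max-Cut in the statement.
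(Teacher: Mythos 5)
Your proposal is correct and matches the paper's argument exactly: the paper likewise just observes that Theorem~\ref{thm:max-k-sat-to-high-rank} with $k=2$ yields a rank-$n$ $\CVP_p$ instance from a (Weighted-)Max-$2$-SAT instance on $n$ variables, and composes with the folklore variable-preserving reduction from Weighted Max-Cut. (Your ``$O(n^2)$-sized'' instance remark is slightly off --- the basis has $\Theta((n+m)n)$ entries --- but this is immaterial since it is polynomial either way.)
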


\subsection{The hardness of \texorpdfstring{$\SVP_\infty$}{SVP-infinity} and \texorpdfstring{$\CVP_\infty$}{CVP-infinity}}
\label{sec:infinity!}

\begin{theorem}
There is a polynomial-time reduction from a $k$-SAT instance with $n$ variables to a $\CVP_{\infty}$ instance of rank $n$.
\label{thm:k-sat-to-cvp-infinity}
\end{theorem}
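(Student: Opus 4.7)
The plan is to give a direct and quite simple reduction, exploiting the fact that the $\ell_\infty$ norm only depends on the maximum coordinate. Because of this, we do not need a $(p,k)$-isolating parallelepiped at all: for each clause a single ``clause coordinate'' will suffice to certify satisfiability, so the ambient dimension will be only $n + m$. The intuition is the observation made in the Max-$2$-SAT warmup (around Eq.~\eqref{eq:gadget_matrix}) generalized to arbitrary $k$: if the $i$th clause has $|S_i(\vec{z})| \in \{0,1,\ldots,k\}$ satisfied literals under a 0/1 assignment $\vec{z}$, then $2|S_i(\vec{z})|-(k+1)$ takes absolute value $k+1$ precisely when $|S_i(\vec{z})|=0$, and absolute value at most $k-1$ otherwise.

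Concretely, I would output a basis
\[
B := \begin{pmatrix} \bar{\Phi} \\ 2\alpha I_n \end{pmatrix}, \qquad \vec{t} := (t_1,\ldots,t_m,\alpha,\ldots,\alpha)^T,
\]
where $\bar{\Phi} \in \R^{m\times n}$ encodes the literals as in Eq.~\eqref{eq:Phi_def} (entry $+2$, $-2$, or $0$), where $t_i := (k+1) - 2|N_i|$, and where $\alpha := k-1$. The decision threshold is $r := k-1$. A direct computation, identical in form to the Max-$2$-SAT case but with the constant $3$ replaced by $k+1$, gives $(B\vec{z} - \vec{t})_i = 2|S_i(\vec{z})|-(k+1)$ for every $\vec{z} \in \{0,1\}^n$.

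Given this, correctness splits into three easy cases. If $\vec{z} \in \{0,1\}^n$ satisfies $\Phi$, then every clause coordinate has absolute value at most $k-1$ and every identity-gadget coordinate equals $\alpha = k-1$, so $\|B\vec{z}-\vec{t}\|_\infty = k-1 = r$. If $\vec{z} \in \{0,1\}^n$ fails some clause $C_i$, then that coordinate has absolute value $k+1 > r$. Finally, if $\vec{z} \notin \{0,1\}^n$, some coordinate of $\vec{z}$ is either $\leq -1$ or $\geq 2$, making the corresponding identity-gadget coordinate of $B\vec{z}-\vec{t}$ at least $3\alpha = 3(k-1) > r$ (using $k \geq 2$). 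So an $r$-close lattice vector to $\vec{t}$ exists iff $\Phi$ is satisfiable, and the rank of the constructed lattice is exactly $n$.

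There is no real obstacle here; the $\ell_\infty$ case is genuinely easier than finite $p$ because we get to take a maximum rather than a sum, and the identity gadget tolerates a large gap in this norm. The same construction in fact yields a $\gamma$-approximation gap with $\gamma = (k+1)/(k-1) = 1 + 2/(k-1)$, which is the approximation factor advertised in the introduction. For the $\SVP_\infty$ variant (needed for the full $p=\infty$ statement of Theorem~\ref{thm:main}), I would apply the standard homogenization trick, appending one extra coordinate to $B$ and to $\vec{t}$ so that the target is itself a lattice vector and short nonzero lattice vectors correspond to the would-be short $B\vec{z}-\vec{t}$.
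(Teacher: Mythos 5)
Your construction is correct and is essentially identical to the paper's proof of Theorem~\ref{thm:k-sat-to-cvp-infinity}: the paper uses $\pm 1$ clause entries, target entries $(k+1)/2 - |N_i|$, an identity gadget, and threshold $r = (k-1)/2$, which is exactly your reduction scaled by a factor of $2$, with the same three-case analysis and the same $1 + 2/(k-1)$ approximation-factor remark. Your homogenization remark for $\SVP_\infty$ likewise matches the paper's Lemma~\ref{lem:k-sat-to-svp-infinity}.
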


\begin{proof}
We give a reduction from a $k$-SAT formula $\Phi$ with $n$ variables $x_1, \ldots, x_n$ and $m$ clauses $C_1, \ldots, C_m$ to an instance $(B, \vec{t}, r)$ of $\CVP_{\infty}$. We assume without loss of generality that each variable appears at most once per clause. We define the output $\CVP_{\infty}$ instance as follows. Let $d := m + n$. The basis $B \in \R^{d \times n}$ in the output instance has the form

\[
B = 
\left(
 \begin{array}{c}
\vec{b}_1^T \\
\vdots \\
\vec{b}_m^T \\
(k + 1) \cdot I_n
 \end{array}
\right), \qquad
\vec{t} = 
\left(
 \begin{array}{c}
t_1 \\
\vdots \\
t_m \\
(k + 1)/2 \cdot \vec{1}_{n}
 \end{array}
\right),
\]
with rows $\vec{b}_i$ of $B$ satisfying $\vec{b}_i \in \R^n$ for $1 \leq i \leq m$. For every $1 \leq i \leq m$, set $\vec{b}_i$ as in the proof of Theorem~\ref{thm:max-k-sat-to-high-rank} and set $t_i := (k + 1)/2 - |N_i|$. Set $r := (k - 1)/2$.

We next analyze for which $\vec{y} \in \Z^n$ it holds that $\norm{B\vec{y} - \vec{t}}_{\infty} \leq r$. Given $\vec{y} \notin \set{0, 1}^n$, $\norm{B\vec{y} - \vec{t}}_{\infty} \geq \norm{(k-1) \cdot I_n \vec{y} - (k-1)/2 \cdot \vec{1}_n}_{\infty} \geq 3(k-1)/2 > r$, so we only need to analyze the case when $\vec{y} \in \set{0, 1}^n$. Consider an assignment $\vec{y} \in \set{0, 1}^n$ to the variables of $\Phi$. Then

\begin{align*}
\Big|\langle\vec{b}_i, \vec{y}\rangle - t_i\Big| &= \Big|\sum_{s \in P_i} y_{\ind(\ell_{i, s})} - \sum_{s \in N_i} y_{\ind(\ell_{i, s})} - ((k+1)/2 - |N_i|) \Big| \\
                                                        &= \Big|\sum_{s \in P_i} y_{\ind(\ell_{i, s})} - \sum_{s \in N_i} (1 - y_{\ind(\ell_{i, s})}) - (k+1)/2 \Big| \\
                                                        &= \Big||S_i(\vec{a})| - (k+1)/2 \Big|.
\end{align*}
It follows that if $|S_i(\vec{a})| = 0$ then $|\langle\vec{b}_i, \vec{y}\rangle - t_i| = (k+1)/2$, and otherwise $|\langle\vec{b}_i, \vec{y}\rangle - t_i| \leq (k - 1)/2$. Because $|S_i(\vec{a})| \geq 1$ if and only if $C_i$ is satisfied, we then have that $\norm{B\vec{y} - \vec{t}}_{\infty} = \max \set{|\langle\vec{b}_1, \vec{y}\rangle - t_1|, \ldots, |\langle\vec{b}_m, \vec{y}\rangle - t_m|, (k - 1)/2} = (k - 1)/2 = r$ if $\vec{y}$ satisfies $\Phi$, and $\norm{B\vec{y} - \vec{t}}_{\infty} = (k+1)/2 > r$ otherwise. Therefore there exists $\vec{y}$ such that $\norm{B\vec{y} - \vec{t}}_{\infty} \leq r$ if and only if $\Phi$ is satisfiable.
\end{proof}

\begin{lemma}
There is a polynomial-time reduction from a $k$-SAT instance with $n$ variables to an $\SVP_{\infty}$ instance of rank $n + 1$.
\label{lem:k-sat-to-svp-infinity}
\end{lemma}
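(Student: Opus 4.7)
My plan is to apply the standard Kannan embedding to the $\CVP_\infty$ instance produced by Theorem~\ref{thm:k-sat-to-cvp-infinity}. Given a $k$-SAT formula $\Phi$ on $n$ variables, I first run the $\CVP_\infty$ reduction to obtain a basis $B \in \R^{(m+n) \times n}$, a target $\vec{t} \in \R^{m+n}$, and threshold $r = (k-1)/2$ such that $\dist_\infty(\lat(B), \vec{t}) \leq r$ iff $\Phi$ is satisfiable and otherwise $\dist_\infty(\lat(B), \vec{t}) \geq (k+1)/2$. I then output the $\SVP_\infty$ instance with basis
\[
B' := \begin{pmatrix} B & \vec{t} \\ \vec{0}_n^T & \tau \end{pmatrix} \in \R^{(m+n+1) \times (n+1)}
\]
and threshold $r' := r = (k-1)/2$, where $\tau := (k-1)/2$. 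By construction $B'$ has rank $n+1$, as required.

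A generic non-zero lattice vector has the form $B'(\vec{y}^T, z)^T = (B\vec{y} + z\vec{t}, z\tau)^T$ with $\vec{y} \in \Z^n$ and $z \in \Z$, and I will analyze its $\ell_\infty$-norm by splitting on $z$. If $z = 0$ and $\vec{y} \neq \vec{0}_n$, the identity-like block $(k+1) I_n$ in $B$ guarantees $\|B\vec{y}\|_\infty \geq k+1 > r'$. If $z = \pm 1$, a sign flip reduces to $z = 1$, where the norm is $\max(\|B\vec{y} - \vec{t}\|_\infty, \tau)$ after a change of variables $\vec{y} \mapsto -\vec{y}$; by Theorem~\ref{thm:k-sat-to-cvp-infinity}, the minimum of this quantity over $\vec{y} \in \Z^n$ is exactly $(k-1)/2$ if $\Phi$ is satisfiable and at least $(k+1)/2$ otherwise. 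Finally, if $|z| \geq 2$, the last coordinate alone has magnitude $|z| \tau \geq 2\tau = k-1 > r'$ for all $k \geq 2$, so no such vector is short.

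Combining these three cases, the shortest non-zero vector of $\lat(B')$ has $\ell_\infty$-norm at most $r'$ iff $\Phi$ is satisfiable, which establishes the reduction. I expect no real obstacle here; the only moderately subtle point is ensuring that the $|z| \geq 2$ branch does not accidentally produce a short vector, which is why $\tau$ must be chosen large enough (taking $\tau = r$ is already sufficient). As a bonus, the same analysis shows that in the unsatisfiable case the shortest vector has norm at least $\min((k+1)/2, k-1)$, so for $k \geq 3$ the reduction actually produces a gap of $\gamma = (k+1)/(k-1) = 1 + 2/(k-1)$ between the two cases, yielding SETH-hardness of $\gamma$-$\SVP_\infty$ as advertised in the introduction.
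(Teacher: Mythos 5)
Your proposal is correct and is essentially the paper's own argument: both use the Kannan embedding of the $\CVP_\infty$ instance from Theorem~\ref{thm:k-sat-to-cvp-infinity}, appending $\vec{t}$ (up to an immaterial sign) as an extra basis column with a bottom entry of $(k-1)/2$, and ruling out vectors whose last coefficient is $0$ or of magnitude at least $2$ before invoking the $\CVP_\infty$ analysis for the $\pm 1$ case. Your closing observation about the $1 + 2/(k-1)$ gap likewise matches the remark the paper makes after Corollary~\ref{cor:infinity!}.
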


\begin{proof}
We give a reduction from a $k$-SAT formula $\Phi$ with $n$ variables $x_1, \ldots, x_n$ and $m$ clauses $C_1, \ldots, C_m$ to an instance $(B', r)$ of $\SVP_{\infty}$. Define the basis $B' \in \R^{(m + n + 1) \times (n + 1)}$ as
\[
B' := 
\begin{pmatrix}
 B & -\vec{t} \\
 \vec{0}_{n}^T & -(k - 1)/2
\end{pmatrix},
\]
where $B$ and $\vec{t}$ are as defined in the proof of Theorem~\ref{thm:k-sat-to-cvp-infinity}, and set $r := (k - 1)/2$.
We consider for which $\vec{y} \in \Z^{n+1} \setminus \set{\vec{0}_{n+1}}$ it holds that $\norm{B\vec{y}}_{\infty} \leq r$. It is not hard to check that if $|y_i| \geq 2$ for some $1 \leq i \leq n + 1$, or if the signs of $y_i$ and $y_{n+1}$ differ for some $1 \leq i \leq n$, then $\norm{B\vec{y}}_{\infty} > r$. Therefore we need only consider $\vec{y}$ of the form $\vec{y} = \pm(\vec{a}^T, 1)^T$ where $\vec{a} \in \set{0, 1}^n$.
But for such a $\vec{y}$ we have that $\norm{B'\vec{y}}_{\infty} = \norm{B\vec{a} - \vec{t}}_{\infty}$, and $\norm{B\vec{a} - \vec{t}}_{\infty} \leq (k - 1)/2$ if and only if $\vec{a}$ is a satisfying assignment to $\Phi$ by the analysis in the proof of Theorem~\ref{thm:k-sat-to-cvp-infinity}.
\end{proof}

\begin{corollary}
	\label{cor:infinity!}
For any constant $\eps > 0$ there is no $2^{(1-\eps)n}$-time algorithm for $\SVP_{\infty}$ or $\CVP_{\infty}$ assuming SETH, and there is no $2^{o(n)}$-time algorithm for $\SVP_{\infty}$ or $\CVP_{\infty}$ assuming ETH.
\label{cor:svp-cvp-infinity-seth}
\end{corollary}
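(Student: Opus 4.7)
The plan is to derive the corollary as a direct consequence of the two reductions just proved (Theorem~\ref{thm:k-sat-to-cvp-infinity} and Lemma~\ref{lem:k-sat-to-svp-infinity}) together with the definitions of SETH and ETH. Since both reductions run in polynomial time and produce lattices whose rank is essentially the number of SAT variables, fine-grained upper bounds on the lattice problems transfer essentially losslessly to upper bounds for $k$-SAT.

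For the SETH claim, I would argue by contraposition. Fix any constant $\eps > 0$ and suppose we have a $2^{(1-\eps)n}$-time algorithm for $\CVP_\infty$ on rank-$n$ lattices. Given any $k$ and any $k$-SAT instance $\Phi$ on $n$ variables, apply Theorem~\ref{thm:k-sat-to-cvp-infinity} in polynomial time to produce a $\CVP_\infty$ instance of rank exactly $n$, then decide it in time $2^{(1-\eps)n}$. This yields a $2^{(1-\eps)n}$-time algorithm for $k$-SAT that works for \emph{every} $k$, contradicting SETH. The $\SVP_\infty$ case is nearly identical, via Lemma~\ref{lem:k-sat-to-svp-infinity}; the only wrinkle is that the output $\SVP_\infty$ instance has rank $n+1$ rather than $n$, so the resulting $k$-SAT running time is $2^{(1-\eps)(n+1)} = 2^{1-\eps} \cdot 2^{(1-\eps)n}$, which is still $2^{(1-\eps)n}$ up to a constant factor and hence still refutes SETH for the same $\eps$.

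For the ETH claim, I would similarly suppose a $2^{o(n)}$-time algorithm for $\CVP_\infty$ (resp.\ $\SVP_\infty$). Taking $k=3$ and applying Theorem~\ref{thm:k-sat-to-cvp-infinity} (resp.\ Lemma~\ref{lem:k-sat-to-svp-infinity}) to a $3$-SAT instance on $n$ variables produces, in polynomial time, a lattice of rank $n$ (resp.\ $n+1$), which we then solve in time $2^{o(n)}$ (resp.\ $2^{o(n+1)} = 2^{o(n)}$). This yields a $2^{o(n)}$-time algorithm for $3$-SAT, contradicting ETH by Definition~\ref{def:eth}.

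There is really no obstacle here, since the reductions in the previous theorem and lemma already do all the work: they are polynomial time, they preserve rank up to an additive constant, and they are ``exact'' (the $\CVP_\infty$/$\SVP_\infty$ instance has a vector of norm at most $r$ if and only if the original formula is satisfiable). The only thing worth double-checking when writing this out is that the additive $+1$ in the rank for the $\SVP_\infty$ reduction does not hurt the SETH conclusion, which it does not because $2^{(1-\eps)(n+1)}$ differs from $2^{(1-\eps)n}$ only by a constant multiplicative factor that is absorbed into the hidden polynomial overhead of the reduction.
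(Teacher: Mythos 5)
Your proposal is correct and follows exactly the paper's route: the paper's entire proof is ``Combine Theorem~\ref{thm:k-sat-to-cvp-infinity} and Lemma~\ref{lem:k-sat-to-svp-infinity},'' and your write-up simply spells out the contrapositive argument, including the harmless additive $+1$ in rank for the $\SVP_\infty$ reduction.
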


\begin{proof}
Combine Theorem~\ref{thm:k-sat-to-cvp-infinity} and Lemma~\ref{lem:k-sat-to-svp-infinity}.
\end{proof}

Note that the preceding reduction in fact achieves an approximation factor of $\gamma = \gamma(k) := 1 + 2/(k-1)$. This implies that for every constant $\eps > 0$, there is a $\gamma_\eps > 1$ such that no $2^{(1-\eps)n}$-time algorithm that approximates $\SVP_\infty$ or $\CVP_\infty$ to within a factor of $\gamma_\eps$ unless SETH fails.

Finally, we remark that the reduction given in Theorem~\ref{thm:max-k-sat-to-high-rank} is \emph{parsimonious} when used as a reduction from $2$-SAT. I.e., there is a one-to-one correspondence between satisfying assignments in the input instance and close vectors in the output instance.
The reductions given in Theorem~\ref{thm:k-sat-to-cvp-infinity} and Lemma~\ref{lem:k-sat-to-svp-infinity} are also parsimonious.\footnote{Actually, in the case of $\SVP_{\infty}$, each satisfying assignment to the SAT formula corresponds to a pair $\pm \vec{v}$ of shortest non-zero vectors, so that there are exactly twice as many such vectors as there are satisfying assignments.}

Because $\#2$-SAT is $\SP$-hard, our reductions therefore show that the counting version of $\CVP_p$ (called the Vector Counting Problem) is $\SP$-hard for all $1 \leq p \leq \infty$, and that the counting version of $\SVP_{\infty}$ is $\SP$-hard. This improves (and arguably simplifies the proof of) a result of Charles~\cite{journals/jcss/Charles07}, which showed that the counting version of $\CVP_2$ is $\SP$-hard.

\subsection*{Acknowledgments}
We would like to thank Oded Regev for many fruitful discussions and for helpful comments on an earlier draft of this work. We also thank Vinod Vaikuntanathan for recommending that we consider Gap-ETH.

\bibliographystyle{alpha}
\newcommand{\etalchar}[1]{$^{#1}$}

\end{document}